\title{Bayesian nonparametric modeling of latent partitions via Stirling-gamma priors}
\date{}
\author[1]{Alessandro Zito}
\author[2]{Tommaso Rigon} 
\author[3]{David B. Dunson}
\affil[1]{Department of Biostatistics, Harvard T.H. Chan School of Public Health, Boston, MA, 02115, U.S.A.}
\affil[3]{Department of Statistical Science, Duke University, Durham, NC, 27708, U.S.A.}
\affil[2]{Department of Economics, Management and Statistics, University of Milano-Bicocca, Milan, 20126, Italy}
\newtheorem{theorem}{Theorem}
\newtheorem{corollary}{Corollary}
\newtheorem{lemma}{Lemma}
\newtheorem{proposition}{Proposition}
\theoremstyle{definition}
\newtheorem{definition}{Definition}
\newtheorem{remark}{Remark}
\newcommand{\Vabm}{\mathcal{S}_{a, b , m}}
\definecolor{revision_color}{HTML}{00008B}
\newcites{Supp}{References}
\begin{document}

\maketitle

\begin{abstract}
Dirichlet process mixtures are particularly sensitive to the value of the precision parameter controlling the behavior of the latent partition. Randomization of the precision through a prior distribution is a common solution, which leads to more robust inferential procedures. However, existing prior choices do not allow for transparent elicitation, due to the lack of analytical results. We introduce and investigate a novel prior for the Dirichlet process precision, the Stirling-gamma distribution.  We study the distributional properties of the induced random partition, with an emphasis on the number of clusters. Our theoretical investigation clarifies the reasons of the improved robustness properties of the proposed prior. Moreover, we show that, under specific choices of its hyperparameters, the Stirling-gamma distribution is conjugate to the random partition of a Dirichlet process. We illustrate with an ecological application the usefulness of our approach for the detection of communities of ant workers. 
\end{abstract}

\section{Introduction}

Discrete Bayesian nonparametric priors have been thoroughly investigated in recent decades motivated by their wide applicability in model-based clustering and density estimation problems. Suppose $X_1, \ldots, X_n$ are $n$ observations taking values on $\mathds{X}$ and  $f(x\mid \theta)$ is a density function on the same space, indexed by $\theta$. Then, a Bayesian nonparametric mixture model is defined through the following hierarchical representation:
\begin{equation}\label{eq:MixtModel}
X_i\mid \theta_i \stackrel{\text{ind}}{\sim} f(x\mid \theta_i), \qquad 
\theta_i \mid \tilde{p}  \stackrel{\text{iid}}{\sim} \tilde{p}, \qquad \tilde{p}  \sim \mathscr{Q}, \qquad (i=1,\dots,n),
\end{equation}
where $\theta_1,\dots,\theta_n$ are latent parameters, $\tilde{p}$ is a discrete random probability measure and $\mathscr{Q}$ represents its prior. Some notable instances of prior laws $\mathscr{Q}$ include the Pitman--Yor process \citep{Perman_1992, Pitman1997}, Gibbs-type priors \citep{Gnedin2005, DeBlasi2015}, mixtures of finite mixtures and their generalizations \citep{Richardson_green_1997, Miller_Harrison_2018, FruFru_2021}, and normalized random measures with independent increments \citep{Regazzini_2003}. Arguably, the most popular and widely employed discrete nonparametric prior is the Dirichlet process introduced by \citet{Ferguson1973},  due to its simplicity and analytical tractability. 

The discreteness of $\tilde{p}$ induces a clustering of the observations by generating ties among the latent parameters. More precisely, there will be $K_n=k$ distinct values among $\theta_1, \ldots, \theta_n$, which partitions the statistical units $\{1, \ldots, n\}$ into $k$ clusters, say $C_1,\ldots, C_k$.  Hence, two statistical units $i$ and $i'$ belong to the same cluster, say the $j$th, if $i,i' \in C_j$ or, equivalently, if $\theta_i = \theta_{i'}$. Moreover, we will say that $\Pi_n = \{C_1,\ldots, C_k\}$ is the random partition induced by $\tilde{p}$. In a Dirichlet process mixture model, the law of such a random partition $\Pi_n$ is
\begin{equation}\label{eq:dp_eppf}
\mathds{P}(\Pi_n = \{C_1, \ldots, C_k\}\mid \alpha) = \frac{\alpha^k}{(\alpha)_n}\prod_{j=1}^k (n_j - 1)!,
\end{equation}
where $\alpha > 0$, with $(\alpha)_n = \alpha(\alpha+1)\cdots(\alpha + n - 1)$ being the ascending factorial, with $(\alpha)_0 = 1$ and with $n _j = |C_j|$ being the number of elements in cluster $C_j$, so that $\sum_{j=1}^k n_j = n$. The parameter $\alpha$ is called the \emph{precision} and, together with the sample size $n$, governs the law of the partition and the number of clusters $K_n$. In our motivating application, we rely on such a random partition mechanism to infer the latent communities in a colony of ant workers. Specifically, we model individual ant-to-ant interaction networks via stochastic block models \citep{Nowicki_Snijders_2001}, which are a variant of the mixture model in~\eqref{eq:MixtModel}. See \citet{Kemp_2006,Geng_et_al_2019, Legramanti2022} for other applications of discrete nonparametric priors in community detection tasks. 

\begin{figure}[tb]
\includegraphics[width = \linewidth]{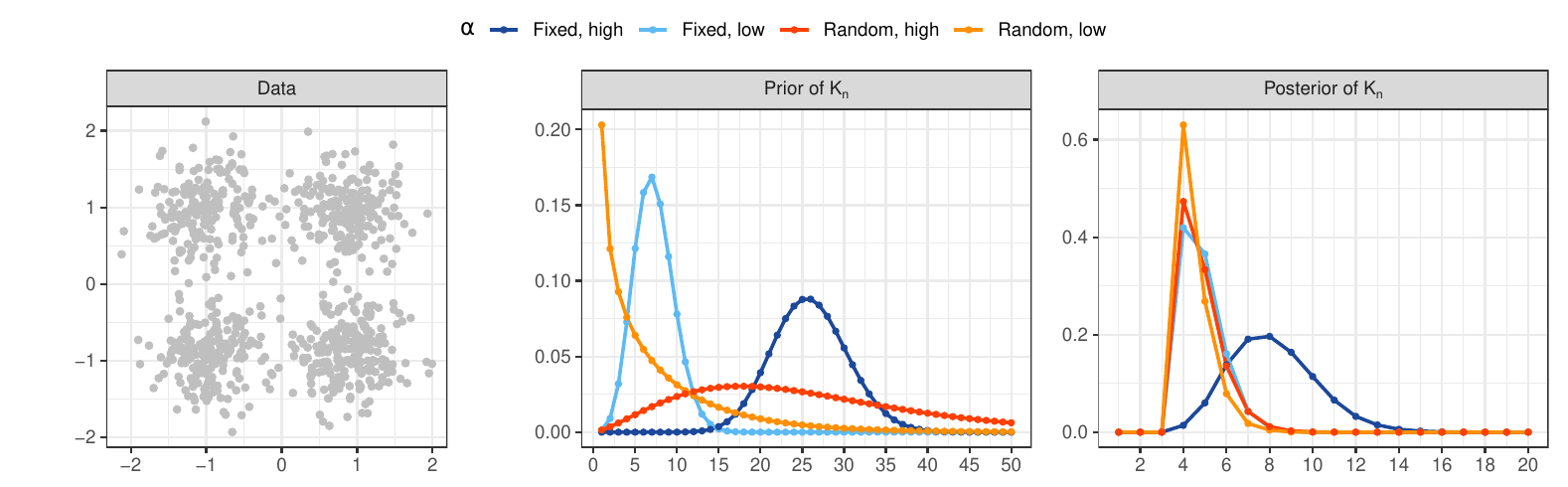}
    \caption{Left panel: $800$ data points from a four-component mixture of normals. Center panel: probability mass function of the prior distribution of $K_n$ under different choices of $\alpha$.  Parameters were set to have $\mathds{E}(K_n) = 7.26$ and $\mathds{E}(K_n) = 25.9$ in low and high cases, respectively, with $\alpha=1$ and $\alpha = 5$ in the fixed cases. Right panel: posterior distribution of $K_n$ estimated from the data via a Dirichlet process mixture. See the Supplementary material for details.}
    \label{fig:figure1}
\end{figure}

It has been pointed out by several scholars that Dirichlet process mixtures are particularly sensitive to the precision parameter \citep{Escobar_1994, lijoi2007_jrrsb, Booth_el_al_2008}. For instance, different values of $\alpha$ can lead to dramatically different posterior distributions of $K_n$, even when sufficient cluster separation is present in the data. 
Such a lack of robustness is problematic when the posterior partition is of inferential interest, such as in clustering and community detection.  However, as we illustrate in Figure~\ref{fig:figure1} with a simulated example, randomizing the precision through the use of a prior $\pi(\alpha)$ can attenuate this unpleasant behavior. Here, fixing $\alpha = 1$ as opposed to $\alpha = 5$ causes the posterior mode of $K_n$ to shift from four to eight clusters, even if the data are generated from a mixture with four well-separated components. On the contrary, allowing $\alpha$ to be random induces more flexibility in the prior for $K_n$, and in turn, yields two posterior distributions that are similar to each other even when the means in the priors for $\alpha$ are far apart. A similar behavior persists asymptotically: in a notorious example, \citet{Miller_2013, JMLR:v15:miller14a} showed that the posterior of $K_n$ from Dirichlet process mixtures with \emph{fixed} $\alpha$ does not concentrate at the true number of clusters $k^*$ when data are generated from a finite mixture of exactly $k^*$ components. Nevertheless, this \emph{inconsistency} can be prevented in certain scenarios through appropriate choices of $\pi(\alpha)$, as recently demonstrated by \citet{Ascolani_2022}.

The most commonly adopted prior $\pi(\alpha)$ is the gamma distribution, whose use was first popularized by \citet{Escobar_West_1995}. Another valid alternative is the Jeffrey's prior detailed in  \citet{RODRIGUEZ20131539}. See also \citet{DORAZIO20093384, MURUGIAH20121947} for additional examples.
However, such distributions lead to an analytically intractable prior over the partition.
This prevents transparent elicitation of the prior hyperparameters and complicates the inclusion of available  information on the clustering structure of the data. Moreover, while it has been shown that the distribution of $K_n$ arising from a Dirichlet process can be approximated with a Poisson distribution when $\alpha$ is fixed, no such approximation is 
available for the random $\alpha$ case. We aim at filling this gap by introducing a novel prior over the Dirichlet process precision that (i) is simple and easily sampled from, (ii) makes the induced prior on the random partition analytically tractable and (iii) leads to an approximate negative binomial prior on the number of clusters. Our proposed prior for $\alpha$ has a novel distribution, which we refer to as \emph{Stirling-gamma}, due to its connection with Stirling numbers and the gamma distribution. Under an appropriate logarithmic rescaling, the Stirling-gamma is equivalent to the gamma in a limiting case.

When $\alpha$ follows a Stirling-gamma prior, we will say that the random partition is from a \emph{Stirling-gamma process}. This belongs to the larger class of Gibbs-type partition models, which are discrete nonparametric priors that enjoy several appealing theoretical properties. See for instance \citet{lijoi2007_jrrsb, Lijoi_2007_a, Lijoi_prunster_walker_2008, Lijoi_prunster_walker_sinica, Favaro_lijoi_prunster_2013} and refer to  \citet{DeBlasi2015} for an in-depth overview. We provide several distributional results for the Stirling-gamma process. In particular, we show that the hyperparameters have an interpretable link with the induced law for the partition and the associated number of clusters. The resulting negative binomial-type behavior of the Stirling-gamma process, as opposed to the Poisson-type one of the Dirichlet process, helps explain the greater robustness of mixture models with random $\alpha$. 

The Stirling-gamma has the further fundamental advantage of being the conjugate prior to the law of the random partition of the Dirichlet process if one of its hyperparameters equals $n$. This happens because the distribution in equation~\eqref{eq:dp_eppf} belongs to the class of  natural exponential families, which always admit a conjugate prior \citep{Diaconis_Yilvisaker_1979}.  We illustrate how this conjugacy result further facilitates both posterior inferences on $\alpha$ and prior elicitation. To broaden the use of our proposed prior in applications, we present an efficient sampler that can be utilized when drawing from the full conditional for $\alpha$ in Dirichlet process mixtures. In this respect, an in-depth comparison with the gamma distribution is provided in simulated settings. The consequences of the prior dependency on $n$ are thoroughly discussed. In particular, we show  how the Stirling-gamma can be a useful prior in a simulated case where Dirichlet process mixtures are inconsistent for the true number of clusters \citep{Miller_2013}, and in applied settings when modeling independently repeated partitions of the same $n$ statistical units, such as the ant worker interaction networks of our illustrative application. 

The paper is structured as follows. Section~\ref{sec:dist_theory} presents the Stirling-gamma distribution, the Stirling-gamma process and its properties in relation to the general class of Gibbs-type partitions. Section~\ref{sec:inference} discusses the conjugate Stirling-gamma prior and subsequent inferential implications. Section~\ref{sec:computations} presents a random sample generator for the proposed distribution, which is also made publicly available in the \texttt{R} package \texttt{ConjugateDP}. Sections~\ref{sec:simulations} and~\ref{sec:ants} illustrate the usefulness of the Stirling-gamma prior in simulated and applied settings. Concluding remarks and  new directions are discussed in Section~\ref{sec:discussion}. 

\section{Distribution theory for Stirling-gamma processes}\label{sec:dist_theory}
\subsection{Background}\label{subsec:Gibbs}
Before introducing the Stirling-gamma distribution and the related process, we provide a probabilistic background on partition models that will be useful throughout the paper. Suppose that the latent parameters $\theta_i$ in model~\eqref{eq:MixtModel} belong to an infinite exchangeable sequence $(\theta_n)_{n \ge 1}$ and that they live in a complete and separable metric space $\Theta$ endowed with a Borel $\sigma$-algebra $\mathscr{B}(\Theta)$. The \emph{species sampling models} introduced by \citet{Pitman1996} provide a broad class of discrete nonparametric priors. More precisely, a \emph{proper} species sampling model is defined as 
$\tilde{p} = \sum_{j=1}^\infty \tilde{p}_j\delta_{\xi_j}$ with $\sum_{j=1}^\infty \tilde{p}_j = 1 \textrm{ a.s.}$, where $\delta_x$ is the Dirac measure at $x$, while the $\xi_j$s are drawn independently from a non-atomic \emph{baseline distribution} $P_0$ on $\mathscr{B}(\Theta)$ and are also independent from the random weights $\tilde{p}_j \geq 0$. Since the realizations of a proper species sampling model are almost surely discrete, we have  $\mathds{P}(\theta_i = \theta_{i'}) >0$ for any  $i\neq i'$. As such, the latent variables  $\theta_1,\ldots, \theta_n$ will take on $K_n = k$ distinct values, called $\theta_1^*, \ldots, \theta_k^*$, with frequencies $n_1,\ldots, n_k$ and $\sum_{j=1}^k n_j = n$. This induces a random partition of the statistical units $\{1,\ldots, n\}$ into groups $C_1, \ldots, C_k$, where $C_j = \{i: \theta_i = \theta_j^*\}$ for $j =1, \ldots, k$.

There exists a rich variety of exchangeable priors to model the random partition mechanism generating the clusters $C_1, \ldots, C_k$. See \citet{ghosal_van_der_vaart_2017} for an extensive account. Among them, \emph{Gibbs-type} processes \citep{Gnedin2005, DeBlasi2015} form a 
particularly rich class. We say that the law of $\tilde{p}$ is of Gibbs-type if 
\vspace{-.05em}
\begin{equation}\label{eq:eppf}
\mathds{P}(\Pi_n = \{C_1,\ldots, C_k\}) = V_{n, k} \prod_{j=1}^k (1 -\sigma)_{n_j - 1}, 
\end{equation}
\vspace{-.05em}
where $\sigma < 1$ and the coefficients $V_{n, k}$ satisfy the forward recursion
$V_{n, k} = (n - \sigma k) V_{n + 1, k} + V_{n+1, k +1}$ for all $k = 1, \ldots, n$ and $n\geq 1$, with $V_{1,1} = 1$. Equation~\eqref{eq:eppf} is the so-called \emph{exchangeable partition probability function} of the process \citep{Pitman1996}. This depends on the cluster frequencies through a product structure, which implies that Gibbs-type priors are a special instance of product partition models \citep{Hartigan1990, Barry_Hartigan_1992, Quintana_iglesias_2003}. The coefficients $V_{n, k}$ determine the system of predictive equations of $\Pi_n$, that is\vspace{-.05em}
\begin{equation}\label{eq:Gibbs_scheme}
\mathds{P}(\theta_{n+1} \in A\mid \theta_{1}, \ldots, \theta_{n}) = \frac{V_{n+1, k+1}}{V_{n, k}}P_0(A) +  \frac{V_{n+1, k}}{V_{n, k}}\sum_{j = 1}^k (n_j - \sigma) \delta_{\theta^*_j}(A),
\end{equation}
for $n\geq 1$ and every $A \in \mathscr{B}(\Theta)$. The $(n+1)$st latent parameter $\theta_{n+1}$ is drawn from the baseline $P_0$ with probability $V_{n+1,k+1}/V_{n, k}$, and is equal to one of the previous $\theta_{j}^*$ with probability $V_{n+1,k}(n_j - \sigma)/V_{n, k}$. Specifically, sampling $\theta_{n+1}$ from the baseline automatically generates a new cluster due to the diffuse nature of $P_0$. Refer to \citet{DeBlasi2015} for an overview.

When $\sigma =0 $ and $V_{n, k} = \alpha^k/(\alpha)_n$ in equation~\eqref{eq:eppf}, one recovers the exchangeable partition probability function of a Dirichlet process in equation~\eqref{eq:dp_eppf}.  A more robust specification can be obtained by introducing a prior for $\alpha$. In this case, the resulting distribution is
\begin{equation}\label{eq:eppf_mixtDp}
\mathds{P}(\Pi_n = \{C_1, \ldots, C_k\}) = V_{n,k}\prod_{j = 1}^k (n_j - 1)!, \quad V_{n, k} = \int_{\mathds{R}_+}\frac{\alpha^k}{(\alpha)_n}\pi(\alpha) \text{d}\alpha, 
\end{equation}
which has more flexibility through varying the hyperparameters of $\pi(\alpha)$. \citet{Gnedin2005} show that every Gibbs-type prior with $\sigma = 0$ is uniquely characterized by equation~\eqref{eq:eppf_mixtDp}. Commonly adopted priors $\pi(\alpha)$, such as the gamma distribution proposed by \citet{Escobar_West_1995}, do not lead to an analytically tractable form for $V_{n, k}$. This is a crucial point because $V_{n, k}$ are the key quantities that determine the distribution of the number of clusters, namely
\begin{equation}\label{eq:Kn_dist}
    \mathds{P}(K_n = k) = V_{n,k}|s(n, k)|, \qquad (k =1, \ldots, n),
\end{equation}
where $|s(n, k)|$ are the signless Stirling number of the first kind \citep{Charalambides_2005}. Refer to \citet{antoniak1974} and \citet{Gnedin2005} for derivations. Thus, our goal is to develop a prior whose hyperparameters have a clear and interpretable link with the distribution of $K_n$ in equation~\eqref{eq:Kn_dist}. In what follows, we show how this can be achieved using a Stirling-gamma prior.

\vspace{-.1em}
\subsection{The Stirling-gamma distribution}\label{subsec:SG}
In this Section, we introduce the Stirling-gamma distribution and describe its properties.
\begin{definition}\label{def:StirlingGammaPdf}
A positive random variable follows a Stirling-gamma distribution with parameters $a, b > 0$ and $m \in \mathds{N}$ satisfying $1 < a/b < m$, if its density function is
$$
\pi(\alpha) = \frac{1}{\mathcal{S}_{a,b,m}} \frac{\alpha^{a-1}}{\{(\alpha)_m\}^b}, \qquad \mathcal{S}_{a,b,m} = \int_{\mathds{R}_+} \frac{\alpha^{a-1}}{\{(\alpha)_m\}^{b}}\,\mathrm{d}{\alpha}.
$$
We will write  $\alpha \sim \mathrm{Sg}(a,b,m)$.
\end{definition}
The name of the Stirling-gamma distribution stems from the presence of the ascending factorial in the density function, whose polynomial expansion defines Stirling numbers of the first kind \citep{Charalambides_2005}, and the close connection with the gamma distribution. Indeed, from $(\alpha)_n = \alpha(\alpha + 1) \cdots (\alpha + n-1)$, the above density is easily seen as a special case in the \emph{generalized gamma convolution} class of distributions \citep[][equation 5.1.2]{Bondesson1992}, which are defined as continuous scale mixtures of gammas.  Moreover, the following result holds. 
\begin{proposition}\label{pro:limit_gamma}
Let $\alpha \sim \mathrm{Sg}(a, b, m)$. Then, the following convergence in distribution holds: $$\alpha \log{m} \to \gamma, \quad \gamma \sim \mathrm{Ga}(a-b, b), \quad m\to \infty.$$
\end{proposition}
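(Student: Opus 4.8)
The plan is to study the rescaled variable $\gamma_m=\alpha\log m$ directly, proving that its density converges pointwise to the $\mathrm{Ga}(a-b,b)$ density and then invoking Scheff\'e's lemma. First I would note that the constraint $1<a/b<m$ is exactly the integrability condition for $\Vabm$: near the origin $\alpha^{a-1}/\{(\alpha)_m\}^{b}\sim \alpha^{a-1-b}$, integrable iff $a>b$, while in the tail it behaves like $\alpha^{a-1-bm}$, integrable iff $a<bm$; in particular $a-b>0$, so the limiting shape parameter is admissible. The change of variables $\alpha=\gamma/\log m$ then gives the density of $\gamma_m$ as $f_m(\gamma)=\{\Vabm\log m\}^{-1}(\gamma/\log m)^{a-1}\{(\gamma/\log m)_m\}^{-b}$.

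The analytic core is a local expansion of the ascending factorial at $\alpha=\gamma/\log m\to0$. Writing $(\alpha)_m=\Gamma(\alpha+m)/\Gamma(\alpha)$ and combining the expansion $1/\Gamma(\alpha)=\alpha\{1+\gamma_{\mathrm E}\alpha+O(\alpha^{2})\}$ as $\alpha\to0$ with $\log\Gamma(\alpha+m)-\log\Gamma(m)=\alpha\psi(m)+O\{\alpha^{2}\psi'(m)\}$ and $\psi(m)=\log m+O(1/m)$, one finds, for each fixed $\gamma>0$,
\[
(\gamma/\log m)_m=\frac{\Gamma(m)\,\gamma\,e^{\gamma}}{\log m}\{1+o(1)\},\qquad m\to\infty,
\]
because $\alpha\psi(m)=\gamma\{1+O(1/(m\log m))\}\to\gamma$ while the remaining terms vanish. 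Substituting into $f_m$ cancels all but a logarithmic and a $\Gamma(m)$ factor and isolates the gamma kernel,
\[
f_m(\gamma)=\frac{(\log m)^{b-a}}{\Gamma(m)^{b}\,\Vabm}\,\gamma^{a-b-1}e^{-b\gamma}\{1+o(1)\}.
\]

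It remains to prove that the prefactor converges to $b^{a-b}/\Gamma(a-b)$, equivalently that $\Vabm\,\Gamma(m)^{b}(\log m)^{a-b}\to\Gamma(a-b)/b^{a-b}=\int_{0}^{\infty}\gamma^{a-b-1}e^{-b\gamma}\dd\gamma$. Applying the same substitution to the defining integral, this is the assertion that $\int_{0}^{\infty}g_m(\gamma)\dd\gamma\to\int_{0}^{\infty}\gamma^{a-b-1}e^{-b\gamma}\dd\gamma$, where $g_m(\gamma)=\gamma^{a-1}\{\Gamma(m)/[(\log m)(\gamma/\log m)_m]\}^{b}$ converges pointwise to $\gamma^{a-b-1}e^{-b\gamma}$ by the previous step. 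I would close this by dominated convergence, and this is where the main obstacle lies: no single exponentially decaying envelope dominates $g_m$ on all of $(0,\infty)$, since for fixed $m$ the factor $\prod_{j=1}^{m-1}(1+\gamma/(j\log m))$ grows only polynomially in $\gamma$, so $g_m$ decays merely like $\gamma^{a-1-bm}$ in the far tail whereas the limit decays exponentially.

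The remedy is to split at a fixed threshold $\alpha_0$. On $\{\gamma\le\alpha_0\log m\}$ a uniform bound such as $\sum_{j=1}^{m-1}\log(1+\alpha/j)\ge\tfrac12\gamma$ (valid for $\alpha\le\alpha_0$ and $m$ large) yields the integrable dominator $g_m(\gamma)\le\gamma^{a-b-1}e^{-b\gamma/2}$, using $a>b$; on the complement $\{\gamma>\alpha_0\log m\}$ the renormalized contribution is asymptotically negligible, because there $(\alpha)_m/\Gamma(m)\ge\alpha\prod_{j=1}^{m-1}(1+\alpha_0/j)$ is of polynomial order $m^{\alpha_0}$ in $m$, which overwhelms the merely logarithmic factor $(\log m)^{a-b}$, while the residual integral in $\alpha$ stays bounded uniformly in $m$ by retaining only finitely many factors of the product. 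Granting the convergence of the prefactor, $f_m$ tends pointwise to the $\mathrm{Ga}(a-b,b)$ density; since both are probability densities, Scheff\'e's lemma upgrades this to convergence in total variation and hence in distribution, which is the claim. Equivalently, the same asymptotics give $E\{(\alpha\log m)^{s}\}=(\log m)^{s}\mathcal{S}_{a+s,b,m}/\Vabm\to\Gamma(a-b+s)/\{\Gamma(a-b)b^{s}\}$ for every $s>0$, matching all moments of the limit and furnishing a second route through the determinate moment problem for the gamma law.
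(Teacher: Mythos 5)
Your proof is correct, and it follows a genuinely different route from the paper's. The paper works with Laplace transforms: after the same change of variables $x=\alpha\log m$, it shows via Stirling's approximation that $(\log m)^{-a}x^{a-1}\{(x/\log m)_m\}^{-b}$ behaves like $g(m,a,b)\,x^{a-b-1}e^{-bx}$ with $g(m,a,b)=(2\pi)^{-b/2}(\log m)^{b-a}e^{bm}m^{-bm+b/2}$, proves that $g(m,a,b)/\mathcal{S}_{a,b,m}\to b^{a-b}/\Gamma(a-b)$, and concludes that $E(e^{-t\alpha\log m})\to\{b/(b+t)\}^{a-b}$, the transform of $\mathrm{Ga}(a-b,b)$. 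Your analytic core is the same in substance: your expansion of $(\gamma/\log m)_m$ through $1/\Gamma(\alpha)=\alpha\{1+O(\alpha)\}$ and $\psi(m)=\log m+O(1/m)$ is equivalent to the paper's Stirling-based kernel asymptotics, and your prefactor claim $\mathcal{S}_{a,b,m}\,\Gamma(m)^{b}(\log m)^{a-b}\to\Gamma(a-b)/b^{a-b}$ is precisely the paper's normalizing-constant lemma in different notation. What differs is the wrapper --- pointwise density convergence plus Scheff\'e instead of transform convergence --- and, more importantly, the justification of the limit--integral interchange. The paper disposes of this by asserting that its integrands are ``monotonically decreasing in $m$'' and invoking monotone/bounded convergence; in fact the normalized integrand $g_m(\gamma)=\gamma^{a-b-1}\prod_{j=1}^{m-1}\{1+\gamma/(j\log m)\}^{-b}$, which is the quantity whose integral actually has to converge, is increasing in $m$ for small $\gamma$ and decreasing for large $\gamma$ (compare $m=2,3$ at $\gamma=0.1$ versus $\gamma=10$), so no single monotonicity holds and the paper is loose exactly where your argument is careful. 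Your split at $\gamma\le\alpha_0\log m$, with the uniform bound $\sum_{j=1}^{m-1}\log\{1+\gamma/(j\log m)\}\ge\gamma/2$ supplying the integrable dominator $\gamma^{a-b-1}e^{-b\gamma/2}$ (using $a>b$), together with the tail estimate obtained by reserving $J>a/b-1$ factors of the product for integrable decay in $\gamma$ while the remaining factors grow polynomially in $m$ and overwhelm $(\log m)^{a-b}$, is a complete and correct domination argument. What each approach buys: the Laplace route is shorter once its two lemmas are granted and needs no density-level analysis; your route yields the strictly stronger conclusion of convergence in total variation, and your closing moment computation (legitimate because the gamma law is moment-determinate) provides yet a third self-contained path to the same limit.
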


In the above statement, $\mathrm{Ga}(a_0, b_0)$ denotes the gamma distribution with mean $a_0/b_0$ and variance $a_0/b_0^2$. Proposition~\ref{pro:limit_gamma} has two fundamental implications. The first is that the density of the Stirling-gamma $\mathrm{Sg}(a,b, m)$ progressively resembles that of $\mathrm{Ga}(a - b, b\log{m})$ as $m$ becomes larger. The second is that $\alpha\to0$ in probability as $m\to\infty$ with a logarithmic rate of convergence via a direct application of Slutzky's theorem. Both properties are illustrated in Figure~\ref{fig:Stirling_and_gamma}, which displays the probability density function of the two distributions for varying values of $m$ and $b$ when $a = 5$. In particular, high values for $a/b$ require a larger $m$ to make the two densities indistinguishable. Both distributions progressively shift towards zero as $m$ increases. However, as we show formally in the Supplementary material, the Stirling-gamma is a heavy-tailed distribution; hence, it has a heavier right tail than the gamma distribution.
\begin{figure}[t]
\includegraphics[width = \linewidth]{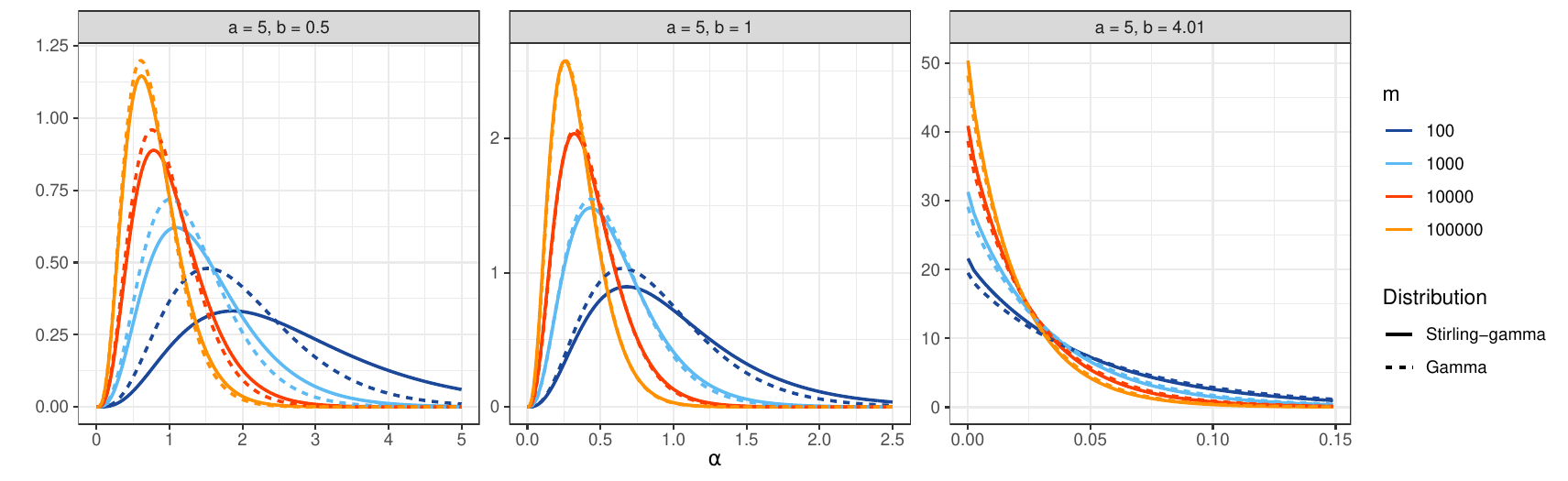}
    \caption{Probability density function of a Stirling-gamma $\mathrm{Sg}(a,b,m)$, depicted by the solid lines, and a $\mathrm{Ga}(a-b, b \log{m})$, indicated by the dashed lines, for varying values of $m$ and $b$, and $a = 5$.}
    \label{fig:Stirling_and_gamma}
\end{figure}

The density function of a Stirling-gamma is proper, namely $\mathcal{S}_{a, b, m} < \infty$, if only if $1 < a/b < m$, as shown in the Supplementary material. Interestingly, the normalizing constant $\mathcal{S}_{a, b, m}$ is the key to calculating the moments of the distribution, which are obtained as follows. 
\begin{proposition}\label{pro:moments}Let $\alpha\sim \mathrm{Sg}(a,b,m)$ and suppose that $0 < s < m b - a$. Then
$$\mathds{E}(\alpha^s) = \frac{\mathcal{S}_{a + s, b, m}}{\mathcal{S}_{a, b, m}}.$$
\end{proposition}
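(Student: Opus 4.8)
The plan is to compute the moment directly from the definition of the density and to recognize the resulting integral as a normalizing constant with a shifted first parameter. First I would write, by Definition~\ref{def:StirlingGammaPdf},
\[
E(\alpha^s) = \int_{\mathds{R}_+} \alpha^s \, \pi(\alpha) \dd \alpha = \frac{1}{\mathcal{S}_{a,b,m}} \int_{\mathds{R}_+} \frac{\alpha^{(a+s)-1}}{\{(\alpha)_m\}^b} \dd \alpha,
\]
where the factor $\alpha^s$ has been absorbed into the power in the numerator. The inner integral is, by inspection, precisely the normalizing constant $\mathcal{S}_{a+s, b, m}$ attached to the parameter triple $(a+s, b, m)$, so that $E(\alpha^s) = \mathcal{S}_{a+s,b,m}/\mathcal{S}_{a,b,m}$, as claimed. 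The algebra here is trivial; all of the content lies in justifying that both quantities are finite.

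The substantive step, and the only one requiring genuine care, is to ensure that the $s$-th moment is finite, equivalently that $\mathcal{S}_{a+s,b,m} < \infty$. Here I would invoke the properness criterion established in the Supplementary material, namely that $\int_{\mathds{R}_+} \alpha^{a'-1}\{(\alpha)_m\}^{-b}\dd\alpha$ converges if and only if $1 < a'/b < m$. Applying this with $a' = a+s$ reduces the question to the chain $b < a+s < mb$ under the hypothesis $0 < s < mb - a$. The upper inequality is immediate, since $s < mb-a$ rearranges to $a+s < mb$. The lower inequality $a+s > b$ holds automatically: the admissibility constraint $1 < a/b$ of Definition~\ref{def:StirlingGammaPdf} gives $a > b$, whence $s > 0$ yields $a+s > a > b$. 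Thus $1 < (a+s)/b < m$, the integral converges, and the moment is well defined.

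I do not expect any obstacle beyond this integrability bookkeeping, but it is worth noting where the binding condition comes from. If one prefers a self-contained argument rather than quoting the Supplementary material, the convergence of $\mathcal{S}_{a+s,b,m}$ can be read off from the integrand at the two endpoints: near $\alpha = 0$ one has $(\alpha)_m \sim (m-1)!\,\alpha$, so the integrand behaves like $\alpha^{(a+s)-b-1}$ and is integrable there iff $a+s > b$; as $\alpha \to \infty$ one has $(\alpha)_m \sim \alpha^m$, so the integrand behaves like $\alpha^{(a+s)-mb-1}$ and is integrable iff $a+s < mb$. These reproduce exactly the two inequalities above, and they show that it is the right-tail constraint $s < mb-a$ that genuinely binds, reflecting the heavy-tailed behavior of the Stirling-gamma remarked after Proposition~\ref{pro:limit_gamma}, whereas the left-endpoint condition is automatic. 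Finally, I would observe that $m \in \mathds{N}$ plays no special role in the calculation: the shifted parameter $a+s$ need not be an integer, and $\mathcal{S}_{a+s,b,m}$ is understood throughout simply as the value of the defining integral.
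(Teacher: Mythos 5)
Your proof is correct and follows essentially the same route as the paper's: write $E(\alpha^s)$ as an integral, absorb $\alpha^s$ into the numerator to recognize $\mathcal{S}_{a+s,b,m}$, and settle finiteness by the properness criterion ($\mathcal{S}_{a',b,m}<\infty$ iff $1<a'/b<m$), which is exactly what the paper does by citing its Supplementary result. Your additional endpoint analysis at $\alpha\to 0$ and $\alpha\to\infty$ simply reproduces, in self-contained form, the paper's own proof of that criterion, so there is nothing to add.
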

When  $s > mb - a$, instead, then one has $\mathds{E}(\alpha^s) = \infty$. In general, explicit analytic expressions for the moments are not available. One possibility is to approximate $\mathcal{S}_{a, b,m}$ and, consequently, $\mathds{E}(\alpha^s)$ via Monte Carlo integration since samples from the Stirling-gamma can be drawn efficiently; see the Supplementary material. Alternatively, when $m$ is large, we have that $\mathds{E}(\alpha) = \mathcal{S}_{a+1, b,m}/\mathcal{S}_{a, b,m}$ is roughly equal to $(a/b-1)/\log{m}$ and that $\mathcal{S}_{a, b,m} \approx (b\log{m})^{a-b}/\Gamma(a-b)$ by means of Proposition~\ref{pro:limit_gamma}. Also, in the special instance where $a, b\in \mathds{N}$, we can express $\mathcal{S}_{a, b,m}$ analytically as an alternating sum of logarithms, as shown in Theorem S2 in the Supplementary material. 

\subsection{Random partitions via Stirling-gamma priors}\label{sec:StirProcess}
When the precision parameter of a Dirichlet process follows a Stirling-gamma distribution $\alpha \sim \mathrm{Sg}(a, b, m)$, we have a \emph{Stirling-gamma process}. As described in Section~\ref{subsec:Gibbs}, this is a member of the Gibbs-type family with $\sigma = 0$. Thus, the associated exchangeable partition probability function is readily available from the results of \citet{Gnedin2005}. 
\begin{theorem}\label{theo:EPPF_SGP}
The exchangeable partition probability function of a Stirling-gamma process with $\alpha \sim \mathrm{Sg}(a, b, m)$ is 
\begin{equation*}
\mathds{P}(\Pi_n = \{C_1, \ldots, C_k\}) = \frac{\mathscr{V}_{a,b,m}(n, k)}{\mathscr{V}_{a,b,m}(1, 1)}\prod_{j=1}^k (n_j - 1)!,
\end{equation*}
where the coefficients are equal to
$$
\mathscr{V}_{a,b,m}(n, k) = \int_{\mathds{R}_+} \frac{\alpha^{a+k-1}}{\{(\alpha)_m\}^b(\alpha)_n}\mathrm{d}\alpha.
$$
\end{theorem}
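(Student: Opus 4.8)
The plan is to start directly from the mixed-Dirichlet representation in~\eqref{eq:eppf_mixtDp}. By the characterization of \citet{Gnedin2005}, any Gibbs-type prior with $\sigma = 0$ has an exchangeable partition probability function of the form $V_{n,k}\prod_{j=1}^k(n_j-1)!$, with coefficients $V_{n,k} = \int_{\mathds{R}_+}\{\alpha^k/(\alpha)_n\}\,\pi(\alpha)\,\mathrm{d}\alpha$. Since a Stirling-gamma process is precisely the instance $\pi(\alpha) = \mathrm{Sg}(a,b,m)$, the entire argument reduces to evaluating this integral under the density of Definition~\ref{def:StirlingGammaPdf}, so no new probabilistic machinery is needed.

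First I would substitute $\pi(\alpha) = \mathcal{S}_{a,b,m}^{-1}\,\alpha^{a-1}/\{(\alpha)_m\}^b$ into the integral for $V_{n,k}$ and collect powers of $\alpha$. The constant $\mathcal{S}_{a,b,m}^{-1}$ factors out of the integral, and the remaining integrand becomes $\alpha^{a+k-1}/[\{(\alpha)_m\}^b(\alpha)_n]$, so that $V_{n,k} = \mathscr{V}_{a,b,m}(n,k)/\mathcal{S}_{a,b,m}$ with $\mathscr{V}_{a,b,m}(n,k)$ exactly as defined in the statement.

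The observation that yields the clean form in the theorem is that the normalizing constant $\mathcal{S}_{a,b,m}$ is itself a special case of these coefficients. Indeed, since $(\alpha)_1 = \alpha$, taking $n = k = 1$ gives $\mathscr{V}_{a,b,m}(1,1) = \int_{\mathds{R}_+}\alpha^{a}/\{\,\{(\alpha)_m\}^b\,\alpha\,\}\,\mathrm{d}\alpha = \int_{\mathds{R}_+}\alpha^{a-1}/\{(\alpha)_m\}^b\,\mathrm{d}\alpha = \mathcal{S}_{a,b,m}$. Substituting this identity into $V_{n,k} = \mathscr{V}_{a,b,m}(n,k)/\mathcal{S}_{a,b,m}$ produces $V_{n,k} = \mathscr{V}_{a,b,m}(n,k)/\mathscr{V}_{a,b,m}(1,1)$, and the claimed expression follows at once.

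Strictly speaking, the only point requiring care is the finiteness of $\mathscr{V}_{a,b,m}(n,k)$ for each admissible $(n,k)$, so that the partition probability function is well defined. I would verify this by inspecting the integrand at the two endpoints: near $\alpha = 0$ it behaves like $\alpha^{a+k-1}$, which is integrable because $a+k > 0$; as $\alpha \to \infty$ it behaves like $\alpha^{a+k-1-mb-n}$, which is integrable provided $a + k < mb + n$. The latter holds for every $1 \le k \le n$ because the properness condition $1 < a/b < m$ of Definition~\ref{def:StirlingGammaPdf} forces $a < mb$, whence $a + k \le a + n < mb + n$. I do not anticipate a genuine obstacle here: the argument is essentially the recognition of $\mathcal{S}_{a,b,m}$ as $\mathscr{V}_{a,b,m}(1,1)$, and the only subtlety is confirming integrability uniformly across the full range of $k$.
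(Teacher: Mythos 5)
Your proposal is correct and follows essentially the same route as the paper's own proof: mix the Dirichlet process partition law over the Stirling-gamma density, factor out $\mathcal{S}_{a,b,m}$, and identify $\mathscr{V}_{a,b,m}(1,1) = \mathcal{S}_{a,b,m}$ via $(\alpha)_1 = \alpha$. One small correction to your (optional) integrability check: near $\alpha = 0$ the integrand does not behave like $\alpha^{a+k-1}$, since both $(\alpha)_m$ and $(\alpha)_n$ vanish linearly there; it behaves like $\alpha^{a+k-b-2}$, so integrability at the origin requires $a+k > b+1$, which indeed holds because $a > b$ and $k \ge 1$ under the properness condition of Definition~\ref{def:StirlingGammaPdf}.
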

It is easy to see that the Gibbs-type coefficients are $V_{n, k} = \mathscr{V}_{a,b,m}(n, k)/\mathscr{V}_{a,b,m}(1, 1)$, for $k = 1, \ldots, n$ and $n \geq 1$, with $V_{1,1}= 1$, and that the forward recursion is satisfied since $\mathscr{V}_{a,b,m}(n, k) = n \mathscr{V}_{a,b,m}(n+1, k) + \mathscr{V}_{a,b,m}(n+1, k+1)$.

Unlike other mixtures of Gibbs-type priors with $\sigma=0$, the Stirling-gamma process has the interesting property of admitting explicit analytical forms for the recursive coefficients. Relying on similar reasoning as the one used for prior coefficients in Definition~\ref{def:StirlingGammaPdf}, we can write $\mathscr{V}_{a, b, m}(n, k)$ when $a, b \in \mathds{N}$ in terms of the polynomial functions. In particular, given the variables  $x_1, \ldots, x_s$ for $s\geq 1$, we introduce the quantities 
\begin{equation}\label{eq:Pfunction2}
\mathscr{S}_{b, j}(x_1, \ldots, x_b) =  \sum_{s=1}^{b} \frac{B_{b-s}(x_{1}, \ldots, x_{b-s})}{(b-s)!}\phi_s(j), \quad
\phi_s(j) =
\begin{cases}
-\log{j}, &s = 1,\\
j^{1-s}/(s-1) &s>1, \\
\end{cases}
\end{equation}
where $B_0(x_0) = 1$ and $B_s(x_{1}, \ldots, x_s)$ is of complete exponential Bell polynomials. See \citet{Charalambides_2005} and the Supplementary material for details. 
Then, the following result holds.
\begin{theorem}\label{theo:NormConst_posterior_rev}
Let $a, b \in \mathds{N}$ and $m \geq 2$, and call $M = \min\{n,m\}$ and $\ell = |n-m|$. Then it holds that
\begin{equation*}\label{eq:NormConst_posterior}
\begin{aligned}
\mathscr{V}_{a,b,m}(n, k)  =  \sum_{j=1}^{M-1}   &(-1)^{\bar{k} - j(b+1)}\frac{j^{\bar{k}}}{\{\Gamma(j)\Gamma(M - j)\}^{b+1}(M - j)_\ell} \mathscr{S}_{b+1, j}(g_{j, 1}, \ldots, g_{j, b +1})\\  
&+ \sum_{i=0}^{\ell -1}(-1)^{\bar{k}+i} \frac{(M+i)^{\bar{k}}}{\Gamma(i + 1)\Gamma(\ell  - i)\{(i+1)_{M-1}\}^{b+1}}\log(M+i),
\end{aligned}
\end{equation*}
where $\bar{k} = a+k-b-2$, are defined in equation~\eqref{eq:Pfunction2} and $g_{j, s}$ are functions of generalized harmonic numbers $H_{j, s} = \sum_{i=1}^j 1/i^s$, equal to
$$g_{j, s} = -(a+k-1)\frac{(s-1)!}{j^{s}} - (s-1)!\{bH_{M-j-1,s} - (b+1)H_{j,s} + bH_{M-j-\ell -1, s}\}.$$ 
\end{theorem}
The implication of Theorem~\ref{theo:NormConst_posterior_rev} is that, together with Theorem S2 in the Supplementary material, it is possible to express $V_{n, k}$ as ratios of alternating sums of logarithms after noticing that $\mathscr{V}_{a,b,m}(1, 1) = \mathcal{S}_{a, b, m}$.

\subsection{Number of clusters in Stirling-gamma processes}

By being a genuine Gibbs-type prior, the Stirling-gamma process admits an urn scheme representation of the form in equation~\eqref{eq:Gibbs_scheme}. In particular, the latent parameters $(\theta_n)_{n \ge 1}$ abide the following generative mechanism: 
\begin{equation}\label{eq:SGDP_scheme}
    \mathds{P}(\theta_{n+1} \in A \mid \!\theta_1, \ldots, \theta_n) = \frac{\mathscr{V}_{a,b,m}(n\!+\!1, k\!+\!1)}{\mathscr{V}_{a,b,m}(n, k)} P_0(A) \! +\! \frac{\mathscr{V}_{a,b,m}(n\!+\!1, k)}{\mathscr{V}_{a,b,m}(n, k)} \sum_{j = 1}^k n_j\delta_{\theta_j^*}(A),
\end{equation}
for $n\geq 1$ and for every $A \in \mathscr{B}(\Theta)$. The fundamental difference between the predictive scheme in equation~\eqref{eq:SGDP_scheme} and the one arising from the generic distribution in~\eqref{eq:eppf_mixtDp} lies in the fact that the hyperparameters of the Stirling-gamma prior are interpretable in terms of the induced number of clusters in the latent partition. We elucidate this with the following key result. 
\begin{theorem}\label{theo:MarginalKm}
Let $\alpha\sim \mathrm{Sg}(a,b,m)$ and $\mathcal{D}_{a, b, m} = \mathds{E}\{\sum_{i=0}^{m-1} \alpha^2/(\alpha + i)^2\}$. The number of clusters $K_m$ obtained from the first $m$ random variables $\theta_1, \ldots, \theta_m$ generated from the predictive scheme in equation~\eqref{eq:SGDP_scheme} is distributed as
\begin{equation}\label{eq:K_m}
\mathds{P}(K_m = k) = \frac{\mathscr{V}_{a, b, m}(m, k)}{\mathscr{V}_{a,b, m}(1, 1)}|s(m, k)|, 
\end{equation}
for $k = 1,\ldots, m$, with mean and variance equal to
$$
\mathds{E}(K_m) = \frac{a}{b}, \qquad \mathrm{var}(K_m) = \frac{b+1}{b}\left(\frac{a}{b} - \mathcal{D}_{a, b, m}\right).
$$
\end{theorem}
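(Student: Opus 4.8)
The plan is to read off the law of $K_m$ directly from the earlier results and then obtain the two moments by pairing the classical conditional Bernoulli representation of the Dirichlet process with two integration-by-parts identities adapted to the Stirling-gamma density. The distributional claim in~\eqref{eq:K_m} is immediate: specializing $\pr(K_n = k) = V_{n,k}\,|s(n,k)|$ from~\eqref{eq:Kn_dist} to $n = m$ and substituting $V_{m,k} = \mathscr{V}_{a,b,m}(m,k)/\mathscr{V}_{a,b,m}(1,1)$ from Theorem~\ref{theo:EPPF_SGP} gives the stated formula. For the moments I would condition on $\alpha$: under a Dirichlet process with fixed precision, $K_m = \sum_{i=1}^m B_i$, where the $B_i$ are independent indicators of the event that the $i$th observation opens a new cluster, with $\pr(B_i = 1 \mid \alpha) = \alpha/(\alpha + i - 1)$. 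Hence $E(K_m \mid \alpha) = \mu(\alpha) := \sum_{i=0}^{m-1}\alpha/(\alpha+i)$.

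The mean follows from a single integration by parts. Writing $\ell(\alpha) = \frac{\dd}{\dd\alpha}\log\pi(\alpha) = (a-1)/\alpha - b\sum_{i=0}^{m-1}(\alpha+i)^{-1}$ for the score of the Stirling-gamma density, I note the algebraic identity $\alpha\,\ell(\alpha) = (a-1) - b\,\mu(\alpha)$. Since $\int_{\mathds{R}_+}\frac{\dd}{\dd\alpha}\{g(\alpha)\pi(\alpha)\}\dd\alpha = 0$ whenever the boundary terms vanish, one has $E\{g'(\alpha)\} = -E\{g(\alpha)\ell(\alpha)\}$. Taking $g(\alpha)=\alpha$ gives $1 = -E\{\alpha\ell(\alpha)\} = b\,E\{\mu(\alpha)\} - (a-1)$, so $E(K_m) = a/b$. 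The boundary quantity $g(\alpha)\pi(\alpha)$ scales like $\alpha^{a-b}$ as $\alpha\to 0$ and like $\alpha^{a-mb}$ as $\alpha\to\infty$, so it vanishes at both endpoints exactly under $1 < a/b < m$; this is where the support restriction of Definition~\ref{def:StirlingGammaPdf} is used.

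For the variance I would apply the law of total variance, $\mathrm{var}(K_m) = E\{\mathrm{var}(K_m\mid\alpha)\} + \mathrm{var}\{\mu(\alpha)\}$. The first term is elementary: from $\mathrm{var}(K_m\mid\alpha) = \sum_{i=0}^{m-1}\alpha i/(\alpha+i)^2$ and the splitting $\alpha i/(\alpha+i)^2 = \alpha/(\alpha+i) - \alpha^2/(\alpha+i)^2$, taking expectations and using $E(K_m)=a/b$ yields $E\{\mathrm{var}(K_m\mid\alpha)\} = a/b - \mathcal{D}_{a,b,m}$. The second term requires $E\{\mu(\alpha)^2\}$, equivalently $E\{\alpha^2\ell(\alpha)^2\}$ through $\mu(\alpha)=\{(a-1)-\alpha\ell(\alpha)\}/b$. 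I would obtain it by a second integration by parts with $g(\alpha)=\alpha^2\ell(\alpha)$: since $g(\alpha)\pi(\alpha)=\alpha^2\pi'(\alpha)$, the identity reads $E\{2\alpha\ell(\alpha) + \alpha^2\ell'(\alpha)\} = -E\{\alpha^2\ell(\alpha)^2\}$. Inserting $\alpha^2\ell'(\alpha) = -(a-1) + b\sum_{i=0}^{m-1}\alpha^2/(\alpha+i)^2$ together with $E\{\alpha\ell(\alpha)\}=-1$ gives $E\{\alpha^2\ell(\alpha)^2\} = a+1 - b\,\mathcal{D}_{a,b,m}$, so that $\mathrm{var}\{\mu(\alpha)\} = (a - b\,\mathcal{D}_{a,b,m})/b^2$. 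Adding the two contributions and factoring reproduces $\mathrm{var}(K_m) = \frac{b+1}{b}(a/b - \mathcal{D}_{a,b,m})$.

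The main obstacle I anticipate is the second integration by parts: one must justify interchanging differentiation and integration and, above all, verify that the new boundary term $\alpha^2\ell(\alpha)\pi(\alpha)=\alpha^2\pi'(\alpha)$ vanishes at both endpoints. A short asymptotic analysis shows $\alpha\ell(\alpha)$ is bounded, tending to $a-1-b$ at the origin and to $a-1-mb$ at infinity, so this boundary term again behaves like $\alpha^{a-b}$ and $\alpha^{a-mb}$ and vanishes precisely when $1 < a/b < m$; the same boundedness guarantees that $E\{\alpha^2\ell(\alpha)^2\}$ and $E\{\alpha^2\ell'(\alpha)\}$ are finite, with $\mathcal{D}_{a,b,m}$ finite by assumption. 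Everything else is bookkeeping.
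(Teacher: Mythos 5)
Your proof is correct, and it shares the overall skeleton of the paper's argument: the law of $K_m$ is read off exactly as the paper does (Antoniak's formula in~\eqref{eq:Kn_dist} combined with the coefficients of Theorem~\ref{theo:EPPF_SGP}), and the variance is obtained through the law of total variance with an integration by parts. The genuine difference is in how the moments are produced. For the mean, the paper recasts the problem in the natural exponential family parametrization $\eta = \log\alpha$ and invokes Theorem~2 of \citet{Diaconis_Yilvisaker_1979}, which directly yields $E(K_m)=k_0=a/b$ for the conjugate prior; you instead derive it self-containedly via a Stein-type score identity $E\{g'(\alpha)\}=-E\{g(\alpha)\ell(\alpha)\}$ with $g(\alpha)=\alpha$, exploiting the algebraic identity $\alpha\ell(\alpha)=(a-1)-b\,\mu(\alpha)$. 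For the variance, your second score identity $E\{\alpha^2\ell(\alpha)^2\}=(a+1)-b\,\mathcal{D}_{a,b,m}$ is algebraically equivalent to the paper's integration by parts (which works with $I(\alpha)=\alpha^{a+1}/\{(\alpha)_m\}^b$ and $M(\alpha)=\psi(\alpha+m)-\psi(\alpha)$, related to your objects through $\mu=\alpha M$ and $\alpha\ell=(a-1)-b\alpha M$), so the computations coincide up to a change of bookkeeping. What your route buys is self-containedness and transparency about regularity: you verify explicitly that the boundary terms $\alpha\pi(\alpha)$ and $\alpha^2\ell(\alpha)\pi(\alpha)$ vanish at both endpoints precisely when $1<a/b<m$, a point the paper's proof passes over silently when it discards $\left|M(\alpha)I(\alpha)\right|_{\alpha=0}^{\infty}$; what the paper's route buys is a conceptual explanation of \emph{why} the mean is exactly $a/b$ (conjugacy in the sense of Diaconis--Ylvisaker), which is the structural fact exploited again in Section~\ref{sec:inference}. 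One cosmetic remark: $\mathcal{D}_{a,b,m}$ is not merely ``finite by assumption'' --- it is bounded by $m$ since each summand $\alpha^2/(\alpha+i)^2\le 1$, which also settles the integrability needed for your interchanges.
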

In Section~\ref{subsec:robust} we further show that $\mathcal{D}_{a, b, m} \approx 1$ for $m$ large enough. The above statement suggests that hyperparameters $a$, $b$ and $m$ have an important meaning: when $\alpha \sim \mathrm{Sg}(a, b, m)$, the first $m$ statistical units $\{1, \ldots, m\}$ arising from the Stirling-gamma process identify $a/b$ clusters on average, with variance inversely related to $b$. For this reason, we can refer to $m$ as a hypothetical \emph{reference sample size}, $a/b$ as a \emph{location}, and $b$ as a \emph{precision}.   
Theorem~\ref{theo:MarginalKm} also provides an explicit motivation for why the hyperparameters of the Stirling-gamma must satisfy $1<a/b<m$ as in Definition~\ref{def:StirlingGammaPdf}: having $a/b = 1$ is equivalent to having $\mathds{E}(K_m) = 1$, which corresponds to a Dirichlet process where $\alpha \to 0$. On the contrary, setting $a/b = m$ leads to $\mathds{E}(K_m) = m$, meaning that every observation identifies a new cluster. This is the case of a Dirichlet process where $\alpha \to \infty$. Setting $1<a/b<m$ avoids both degenerate behaviors. 

The results in Theorem~\ref{theo:MarginalKm} hold exclusively at the $m\text{th}$ sample from the Stirling-gamma process. For arbitrary \emph{fixed} values of $a, b$ and $m$ the distribution of the number of cluster $K_n$ is given in equation~\eqref{eq:Kn_dist}, whose moments are not available in closed form. It is well known that the number of clusters $K_n$ arising at a generic $n\text{th}$ draw from the predictive scheme in equation~\eqref{eq:SGDP_scheme} maintains the logarithmic divergence typical of the Gibbs-type processes with $\sigma = 0$.  This is because $K_n/ \log{n} \to \alpha \sim \mathrm{Sg}(a, b,m)$ in distribution as $n\to\infty$, as discussed in \citet{Pitman1996}.   
On the other hand, one key aspect of Theorem~\ref{theo:MarginalKm} is that the expectation of the number of clusters among $\theta_1,\dots,\theta_m$, obtained from equation~\eqref{eq:SGDP_scheme}, is \emph{independent} of $m$. Indeed, it will be shown in Section~\ref{subsec:robust} that the distribution of $K_m$ provided in equation~\eqref{eq:K_m} converges to a finite discrete random variable as $m \rightarrow \infty$. This is a consequence of Proposition~\ref{pro:limit_gamma} and the diverging nature of $K_n$ discussed above: while $K_n$ diverges at a logarithmic rate $\alpha\log{n}$, the Stirling-gamma prior makes $\alpha$ approach zero logarithmically in $m$. Hence, when $m = n$ and both $m, n \to \infty$, the divergence to infinity and the convergence to zero happen at the same rate, which implies that the resulting random variable $\alpha \log{m}$ approaches a finite quantity instead of diverging. 

\subsection{Robustness properties}\label{subsec:robust}
In this Section, we investigate the behavior of the number of clusters of the Stirling-gamma process under a large reference sample size.  
Interestingly, if $m$ itself is chosen large, we are able to show that $K_m$ approaches a well-known distribution. 

\begin{theorem}\label{theo:NegBin}
Under the same assumptions of Theorem~\ref{theo:MarginalKm}, the following convergence in distribution holds:
$$
K_m \to K_\infty, \quad K_\infty \sim 1 + \mathrm{Negbin}\left(a - b,\frac{b}{b + 1}\right),  \quad m\to \infty.
$$
\end{theorem}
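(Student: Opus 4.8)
The plan is to work with the probability generating function (PGF) of $K_m$ and reduce the statement to a Poisson--gamma mixing argument. Conditionally on $\alpha$, the number of clusters $K_m$ is that of a Dirichlet process with precision $\alpha$, so by equation~\eqref{eq:Kn_dist} together with the generating identity $\sum_{k}|s(m,k)|x^k = (x)_m$ for the signless Stirling numbers, its conditional PGF is
\[
E(z^{K_m}\mid \alpha) = \sum_{k=1}^m z^k\,\frac{\alpha^k|s(m,k)|}{(\alpha)_m} = \frac{(\alpha z)_m}{(\alpha)_m}, \qquad z\in[0,1].
\]
The heuristic is that, since $\alpha\to 0$ in probability (Proposition~\ref{pro:limit_gamma} gives $\alpha\log m\to\gamma$ in distribution while $\log m\to\infty$), the Bernoulli cluster indicators become uniformly small and the Poisson approximation turns exact: conditionally on $\alpha\log m\to t$ one expects $K_m\to 1+\mathrm{Poisson}(t)$, and mixing the Poisson mean over $\gamma\sim\mathrm{Ga}(a-b,b)$ yields a shifted negative binomial.

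To make this rigorous I would establish the pointwise limit of the unconditional PGF $E(z^{K_m}) = E\{(\alpha z)_m/(\alpha)_m\}$ for each fixed $z\in(0,1]$. Taking logarithms and isolating the $i=0$ factor,
\[
\log\frac{(\alpha z)_m}{(\alpha)_m} = \log z + \sum_{i=1}^{m-1}\log\frac{1+\alpha z/i}{1+\alpha/i} = \log z + \alpha(z-1)H_{m-1} + E_m,
\]
where $H_{m-1}=\sum_{i=1}^{m-1}1/i$ and, by the elementary bound $|\log(1+x)-x|\le x^2/2$ for $x\ge 0$, the remainder satisfies $|E_m|\le \alpha^2\sum_{i\ge 1}i^{-2} = \alpha^2\pi^2/6$. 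Since $H_{m-1}=\log m + O(1)$ and $\alpha\to 0$ in probability, both $\alpha(z-1)(H_{m-1}-\log m)$ and $E_m$ vanish in probability, so $(\alpha z)_m/(\alpha)_m = z\,e^{(\alpha\log m)(z-1)}(1+o_P(1))$. Writing $T_m=\alpha\log m$ and $\phi(t)=z\,e^{t(z-1)}$, both $(\alpha z)_m/(\alpha)_m$ and $\phi(T_m)$ lie in $[0,1]$ for $z\in[0,1]$, so bounded convergence gives $E(z^{K_m})-E\{\phi(T_m)\}\to 0$; and since $\phi$ is bounded and continuous on $[0,\infty)$ and $T_m\to\gamma$ in distribution, $E\{\phi(T_m)\}\to E\{\phi(\gamma)\} = z\,E\{e^{\gamma(z-1)}\}$.

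It remains to identify the limit. The gamma moment generating function gives $E\{e^{\gamma(z-1)}\} = \{1-(z-1)/b\}^{-(a-b)} = \{b/(b+1-z)\}^{a-b}$ for $z<b+1$, hence $E(z^{K_m})\to z\,\{b/(b+1-z)\}^{a-b} =: G(z)$ for every $z\in(0,1]$. A direct check shows that $G$ is the PGF of $1+\mathrm{Negbin}(a-b, b/(b+1))$, with $G(1)=1$ so that no mass escapes to infinity; by the continuity theorem for probability generating functions the convergence $E(z^{K_m})\to G(z)$ on $[0,1]$ upgrades to $K_m\to K_\infty$ in distribution, as claimed.

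The delicate point, and the step I would treat most carefully, is the interchange of the $m\to\infty$ limit with the expectation over $\alpha$: both the conditional PGF and the law of $\alpha$ depend on $m$, so one cannot simply pass to the limit inside the integral. The resolution above is to control the log-PGF jointly, turning the closeness into an $o_P(1)$ statement and then exploiting the uniform bound on $[0,1]$ to invoke bounded convergence, while Proposition~\ref{pro:limit_gamma} supplies the weak limit of $T_m=\alpha\log m$. An alternative, more computational route is to insert the explicit forms $\mathscr{V}_{a,b,m}(m,k)=\mathcal{S}_{a+k,b+1,m}$ and the large-$m$ asymptotics of $\mathcal{S}_{a,b,m}$ and of $|s(m,k)|$ into equation~\eqref{eq:K_m}; there the powers of $\log m$ cancel exactly and one recovers the same negative binomial mass function term by term, but this requires two separate and somewhat delicate asymptotic expansions rather than the single input of Proposition~\ref{pro:limit_gamma}.
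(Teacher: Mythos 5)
Your proof is correct, and while it shares the paper's starting point --- your conditional PGF identity $E(z^{K_m}\mid\alpha)=(\alpha z)_m/(\alpha)_m$ is exactly the paper's Lemma~\ref{lem:laplace} with $z=e^{-t}$, and both arguments end by recognizing the gamma mixture of Poissons as a shifted negative binomial --- the way you pass to the limit is genuinely different. The paper substitutes $\alpha=x/\log m$ inside the mixture integral and applies pointwise asymptotics to every factor of the integrand: Lemma~\ref{lem:PoissonLaplace} for the ratio $(xz/\log m)_m/(x/\log m)_m\to z e^{zx-x}$, and Lemmas~\ref{lem:LogEppfApprox}--\ref{lem:NormConst_asymp} for the rescaled Stirling-gamma density and its normalizing constant, before exchanging limit and integral by bounded convergence (a step it justifies only via the boundedness of the transform factor, even though the effective density in the integral also varies with $m$). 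You avoid all density asymptotics: the elementary bound $|\log(1+x)-x|\le x^2/2$ controls the conditional PGF uniformly over the randomness in $\alpha$, giving $(\alpha z)_m/(\alpha)_m=\phi(\alpha\log m)+o_P(1)$ with $\phi(t)=z e^{t(z-1)}$ bounded and continuous, after which the only inputs are the weak convergence $\alpha\log m\to\gamma$ from Proposition~\ref{pro:limit_gamma}, the gamma moment generating function, and the continuity theorem for probability generating functions. This makes your handling of the limit--expectation interchange --- precisely the step you flag as delicate --- arguably tighter than the paper's, and it reuses Proposition~\ref{pro:limit_gamma} as a black box rather than re-deriving the asymptotics that underlie it. What the paper's route buys in exchange is reusability of its machinery: the same Lemma~\ref{lem:PoissonLaplace} also yields Proposition~\ref{pro:poisson} (the Poisson limit at fixed $\alpha=\lambda/\log m$) and the supplementary theorem that a $\mathrm{Ga}(a-b,\,b\log m)$ prior produces the identical negative binomial limit, results your probabilistic shortcut would have to re-prove separately.
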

In the above Theorem, $\mathrm{Negbin}(r,q)$ denotes a negative binomial distribution with mean $r(1-q)/q$ and variance $r(1-q)/q^2$. As such, it holds that 
$$
\mathds{E}(K_\infty) = \frac{a}{b}, \qquad \mathrm{var}(K_\infty) = \frac{b+1}{b}\left(\frac{a}{b} - 1\right).$$
Hence, the quantity $\mathcal{D}_{a, b, m}$ defined in Theorem~\ref{theo:MarginalKm} converges to one when $m\to \infty$. Thus, Theorem~\ref{theo:NegBin} provides a reliable approximation for the prior distribution of the number of clusters. The same result is maintained when $\alpha \sim \mathrm{Ga}(a-b, b\log{m})$. This should not come as a surprise considering the asymptotic equivalence discussed in  Proposition~\ref{pro:limit_gamma}.

In view of Theorem~\ref{theo:NegBin}, it is natural to draw a comparison between the Stirling-gamma process and the Dirichlet process. To mimic the behavior of $\alpha$ under a Stirling-gamma prior, we study the number of clusters from a Dirichlet process at the reference sample size $m$ when $\alpha =  \lambda/\log{m}$, with $\lambda >0$ being a positive constant. The large $m$ behavior is illustrated in the next Proposition, where $\mathrm{Po}(\lambda)$ denotes the Poisson distribution with mean $\lambda$.

\begin{proposition}\label{pro:poisson}
Let $\theta_1, \ldots, \theta_m$ be the first $m$ realizations from a Dirichlet process, obtained by setting $V_{n, k} = \alpha^k/(\alpha)_n$ and $\sigma = 0$ in equation~\eqref{eq:Gibbs_scheme}. If $\alpha = \lambda/\log{m}$ for some $\lambda >0$, then the following convergence in distribution holds:
$$K_m \to K_\infty, \qquad K_\infty \sim 1 + \mathrm{Po}(\lambda), \qquad m\to\infty.$$
\end{proposition}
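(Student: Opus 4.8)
The plan is to represent the number of clusters of a Dirichlet process as a sum of independent Bernoulli indicators and then invoke a Poisson approximation theorem. Setting $\sigma = 0$ and $V_{n,k} = \alpha^k/(\alpha)_n$ in the predictive scheme of equation~\eqref{eq:Gibbs_scheme}, the probability that the $i$th observation opens a new cluster is $V_{i,k+1}/V_{i-1,k} = \alpha/(\alpha + i - 1)$, which depends neither on the current number of clusters $k$ nor on the cluster sizes. A classical consequence of the Ewens sampling formula is that the associated indicators $B_1, \ldots, B_m$ are mutually independent, with $B_i \sim \mathrm{Bernoulli}(p_i)$ and $p_i = \alpha/(\alpha + i - 1)$; see \citet{Pitman1996}. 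Since $p_1 = 1$, we may write $K_m = 1 + \sum_{i=2}^m B_i$, reducing the problem to a Poisson limit for a triangular array of independent Bernoulli summands whose success probabilities change with $m$ through $\alpha = \lambda/\log m$.

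First I would show that the mean of the truncated sum converges to $\lambda$. Writing $\lambda_m = \sum_{i=2}^m p_i = \alpha \sum_{j=1}^{m-1}(\alpha + j)^{-1}$, I would sandwich the harmonic-type sum between integrals, using $\int_1^m (\alpha + x)^{-1}\dd x = \log(\alpha + m) - \log(\alpha + 1)$, to obtain $\sum_{j=1}^{m-1}(\alpha + j)^{-1} = \log m + O(1)$ as $\alpha \to 0$. Substituting $\alpha = \lambda/\log m$ then yields $\lambda_m = \lambda + O(1/\log m) \to \lambda$. Next I would bound the sum of squared probabilities: because $\sum_{i=2}^m p_i^2 = \alpha^2 \sum_{j=1}^{m-1}(\alpha + j)^{-2} \le \alpha^2\,\pi^2/6$ and $\alpha \to 0$, this quantity vanishes as $m \to \infty$.

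With these two estimates in hand, I would apply Le Cam's inequality, which controls the total variation distance between the law of $\sum_{i=2}^m B_i$ and the Poisson law of mean $\lambda_m$ by $\sum_{i=2}^m p_i^2$. Combining this with the elementary bound $\mathrm{TV}\{\mathrm{Po}(\lambda_m), \mathrm{Po}(\lambda)\} \le |\lambda_m - \lambda|$ shows that $\sum_{i=2}^m B_i$ converges in total variation, hence in distribution, to $\mathrm{Po}(\lambda)$. Adding the deterministic unit contributed by $B_1$ delivers $K_m \to 1 + \mathrm{Po}(\lambda)$, as claimed.

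The main obstacle is the delicate cancellation between the vanishing precision $\alpha = \lambda/\log m$ and the diverging number of summands $m$: the logarithmic growth of the harmonic sum must exactly offset the logarithmic decay of $\alpha$ so that the finite limit $\lambda$ survives. Keeping the $O(1)$ correction in $\sum_{j=1}^{m-1}(\alpha + j)^{-1}$ under control, for instance by comparing with $\log(\alpha + m)$ rather than directly with $\log m$, is precisely what turns the heuristic identity $\alpha \log m = \lambda$ into a rigorous statement, and it is the only point where the specific rate $1/\log m$ is essential.
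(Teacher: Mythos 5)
Your proof is correct, but it takes a genuinely different route from the paper. The paper works with Laplace transforms: it first records that $E(e^{-tK_m}\mid \alpha) = (\alpha e^{-t})_m/(\alpha)_m$ (a consequence of the Stirling-number expansion of the ascending factorial), then computes $\lim_{m\to\infty} (\lambda e^{-t}/\log m)_m / (\lambda/\log m)_m = e^{e^{-t}\lambda - \lambda}$ via asymptotics for ratios of gamma functions, identifying the limit as the transform of $1+\mathrm{Po}(\lambda)$. You instead exploit the classical fact that under the Chinese-restaurant predictive scheme the new-cluster indicators $B_i$ are \emph{independent} Bernoulli variables with $p_i = \alpha/(\alpha+i-1)$ (independence holds because the conditional probability of opening a cluster at step $i$ does not depend on the past configuration), and then apply Le Cam's inequality together with $\sum_i p_i \to \lambda$ and $\sum_i p_i^2 \le \alpha^2\pi^2/6 \to 0$. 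Your estimates are sound: the sandwich $\log(\alpha+m)-\log(\alpha+1) \le \sum_{j=1}^{m-1}(\alpha+j)^{-1} \le \log m + O(1)$ does give $\lambda_m = \lambda + O(1/\log m)$. What your approach buys is a strictly stronger conclusion — convergence in total variation with an explicit $O(1/\log m)$ rate — and a more elementary, probabilistically transparent argument. What the paper's approach buys is uniformity: the same two lemmas (the conditional Laplace transform and the ascending-factorial ratio limit) are reused verbatim in the proof of the negative binomial limit for the Stirling-gamma process (Theorem~\ref{theo:NegBin}), where one must integrate over a random $\alpha$; your indicator decomposition gives only conditional independence given $\alpha$ and so does not transfer to that mixed case as directly.
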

Similar Poisson-type behaviors for the number of clusters in the Dirichlet process have already been shown in the literature. See for example Proposition 4.8 in \citet{ghosal_van_der_vaart_2017}. Theorem~\ref{theo:NegBin} and Proposition~\ref{pro:poisson} suggest a theoretical reason for why a Dirichlet process with random precision is more flexible than the fixed precision counterpart. When $\alpha$ is kept fixed and sufficiently small, the number of clusters is approximately distributed as a Poisson, whose mean and variance are uniquely controlled by one parameter. 
On the contrary, choosing a Stirling-gamma prior with large $m$ induces an approximately negative-binomial prior for $K_n$, leading to much greater robustness to the prior expectation for $K_n$, as illustrated in Figure~\ref{fig:figure1}.

\section{Conjugate inference under Stirling-gamma priors}\label{sec:inference}
In this Section, we illustrate how the Stirling-gamma distribution has the further important property of being \emph{conjugate} to the law of the partition of the Dirichlet process. As we show in Proposition~\ref{pro:conjugacy}, this happens when the reference sample size $m$ is set equal to the number of data points $n$ in equation~\eqref{eq:dp_eppf}. 
\begin{proposition}\label{pro:conjugacy}
Suppose we observe a partition $\Pi_n$ distributed according to the Dirichlet process in~\eqref{eq:dp_eppf} and let $\alpha\sim \mathrm{Sg}(a, b, n)$. Then, $(\alpha\mid\Pi_n = \{C_1, \ldots, C_k\}) \sim \mathrm{Sg}(a + k, b + 1, n)$.
\end{proposition}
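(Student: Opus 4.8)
The plan is to apply Bayes' theorem directly and exploit the observation that, viewed as a function of $\alpha$, both the sampling law of $\Pi_n$ and the Stirling-gamma prior share the same functional kernel. Consequently, conjugacy should reduce to a matching of exponents, with no integration required: the role of Proposition~\ref{pro:moments} and the normalizing constants is only to confirm that the resulting kernel integrates to a proper density.

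First I would write the posterior density as proportional to the product of the likelihood in equation~\eqref{eq:dp_eppf} and the prior from Definition~\ref{def:StirlingGammaPdf} taken with $m = n$:
\[
\pi(\alpha \mid \Pi_n) \;\propto\; \frac{\alpha^k}{(\alpha)_n}\prod_{j=1}^k (n_j-1)! \;\cdot\; \frac{1}{\mathcal{S}_{a,b,n}}\frac{\alpha^{a-1}}{\{(\alpha)_n\}^b}.
\]
The factors $\prod_{j=1}^k (n_j-1)!$ and $\mathcal{S}_{a,b,n}$ do not depend on $\alpha$ and can be absorbed into the proportionality constant. Collecting the powers of $\alpha$ and of the ascending factorial $(\alpha)_n$ then leaves the kernel $\alpha^{a+k-1}/\{(\alpha)_n\}^{b+1}$, which I would recognize as precisely the unnormalized density of a $\mathrm{Sg}(a+k,\,b+1,\,n)$ distribution. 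Since a density is pinned down by its kernel, this identifies the posterior as claimed.

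The only point requiring care, and what I expect to be the main (albeit mild) obstacle, is to verify that this posterior kernel defines a \emph{genuine} Stirling-gamma, i.e. that the normalizing constant $\mathcal{S}_{a+k,\,b+1,\,n}$ is finite. By the properness condition of Definition~\ref{def:StirlingGammaPdf}, this amounts to checking that $1 < (a+k)/(b+1) < n$. Here I would combine the prior constraint $1 < a/b < n$ with the elementary bounds $1 \le k \le n$: the lower inequality follows from $a - b > 0 \ge 1 - k$, while the upper inequality follows from $a + k < nb + n = n(b+1)$, both strict. Hence the updated parameters stay in the admissible region for every attainable partition $\Pi_n$, and the stated conjugacy holds.
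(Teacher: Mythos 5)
Your proof is correct and follows essentially the same route as the paper's: apply Bayes' theorem, absorb the $\alpha$-free factors, recognize the kernel $\alpha^{a+k-1}/\{(\alpha)_n\}^{b+1}$ as that of $\mathrm{Sg}(a+k, b+1, n)$, and verify properness via $1 < (a+k)/(b+1) < n$ using $1 \le k \le n$. The paper's proof performs exactly this kernel-matching and the same admissibility check, so there is nothing to add.
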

The same result can be derived by conditioning on $K_n = k$ alone as in \citet{Escobar_West_1995} because of its sufficiency for $\alpha$. The above conjugacy simplifies computations when sampling from the posterior distribution in a Dirichlet process mixture model with random precision, which in the case of the gamma prior requires a data augmentation step. Under the conjugate Stirling-gamma prior, elicitation is straightforward by virtue of Theorems~\ref{theo:MarginalKm} and~\ref{theo:NegBin}. Thus, one can transparently tune the Stirling-gamma prior by leveraging upon information available on the clustering structure of the $n$ observations through choices of $a$ and $b$. 

The existence of the conjugate Stirling-gamma prior follows directly from the results of \citet{Diaconis_Yilvisaker_1979} for natural exponential families, which the partition law of the Dirichlet process is a member of. Nevertheless, the prior dependence on $n$ has some important consequences on the process, which must be handled with care. In particular, while the distribution in Theorem~\ref{theo:EPPF_SGP} remains the one of a finitely exchangeable product partition model, the Gibbs-type recursion characterizing the coefficients $V_{n, k}$ no longer holds. Namely, 
$V_{n,k} \neq nV_{n+1, k} + V_{n+1, k+1}$. This breaks the predictive scheme of equation~\eqref{eq:SGDP_scheme}, causing the sequence to lose the \emph{projectivity} property typical of species sampling models \citep{Lee2013}. In other terms, the distribution in Theorem~\ref{theo:EPPF_SGP} under $n$ observations does not coincide with the one obtained by marginalizing out the $(n+1)$th sample from the same distribution under $n+1$ data points. This is a limitation when one is interested in extrapolating inferences from a sample to the general population, but less so on clustering problems where out-of-sample predictions are not the main focus \citep{Betancourt_2020}. 

The lack of projectivity of the sequence under $m = n$ is less relevant in settings where $n$ plays the role of the dimension of the data rather than the number of observed data points. We illustrate this by introducing the following \emph{population of partitions} framework. Let $\Pi_{n, 1}, \ldots, \Pi_{n, N}$ denote $N$ independent and identically distributed realizations of a random partition of the same units $\{1, \ldots, n\}$ from an exchangeable partition probability function. If each partition is from a Dirichlet process with precision $\alpha$, then we have
\begin{equation}\label{eq:RepEPPF}
\mathds{P}(\Pi_{n, s} = \{C_{1, s}, \ldots, C_{k_s, s}\} \mid \alpha) = \frac{\alpha^{k_s}}{(\alpha)_n}\prod_{j = 1}^{k_s} (n_{j, s} - 1)!,  \quad (s = 1, \ldots, N),
\end{equation}
where $n_{j, s} = |C_{j, s}|$ is the number of elements in the $j$th cluster $C_{j, s}$ within the $s$th partition, and $k_s$ is the associated number of clusters. The model in equation~\eqref{eq:RepEPPF} is suitable for instances where, for example, we measure the interactions among the same $n$ nodes of a network multiple times. Similar data often occur in neuroscience studies, where the same $n$ brain regions are scanned for $N$ different individuals \citep{Durante_2017}, or in ecology, where the interactions among $n$ species are recorded for $N$ days \citep{Mersch2013}. The inferential goal of model~\eqref{eq:RepEPPF} is to retrieve the network-specific partition through a shared Dirichlet process precision parameter. Then, the following Theorem holds.
\begin{theorem}\label{theo:Conjugacy}
Let $\Pi_{n, 1}, \ldots, \Pi_{n, N}$ be independent and identically distributed realizations from equation~\eqref{eq:RepEPPF}. If $\alpha \sim \mathrm{Sg}(a,b,n)$, then $$(\alpha \mid \Pi_{n, 1}, \ldots, \Pi_{n, N}) \sim \mathrm{Sg}\left(a + \sum_{s=1}^N k_s, b + N, n\right).$$
\end{theorem}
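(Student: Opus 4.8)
The plan is to compute the posterior density directly by multiplying the prior density from Definition~\ref{def:StirlingGammaPdf} with the likelihood obtained from the $N$ independent partitions in equation~\eqref{eq:RepEPPF}, and then to recognize the result as another Stirling-gamma kernel. First I would write down the joint likelihood: by conditional independence of the $N$ partitions given $\alpha$, the likelihood is the product
\begin{equation*}
\prod_{s=1}^N \pr(\Pi_{n,s} = \{C_{1,s},\ldots,C_{k_s,s}\}\mid \alpha) = \prod_{s=1}^N \frac{\alpha^{k_s}}{(\alpha)_n}\prod_{j=1}^{k_s}(n_{j,s}-1)! \;=\; \frac{\alpha^{\sum_{s=1}^N k_s}}{\{(\alpha)_n\}^N}\prod_{s=1}^N\prod_{j=1}^{k_s}(n_{j,s}-1)!.
\end{equation*}
The crucial structural observation here is that the $N$ partitions all share the same reference size $n$, so the denominators collapse into the single factor $\{(\alpha)_n\}^N$, and the cluster-count exponents simply add up to $\sum_{s=1}^N k_s$.

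Next I would form the posterior by Bayes' rule, $\pi(\alpha\mid \Pi_{n,1},\ldots,\Pi_{n,N}) \propto \pi(\alpha)\times(\text{likelihood})$, discarding all factors that do not depend on $\alpha$ (the combinatorial products $\prod_s\prod_j(n_{j,s}-1)!$ and the prior normalizing constant $\mathcal{S}_{a,b,n}$). Substituting the prior density $\pi(\alpha) \propto \alpha^{a-1}/\{(\alpha)_n\}^b$ and multiplying by the $\alpha$-dependent part of the likelihood gives
\begin{equation*}
\pi(\alpha\mid \Pi_{n,1},\ldots,\Pi_{n,N}) \;\propto\; \frac{\alpha^{a-1}}{\{(\alpha)_n\}^b}\cdot\frac{\alpha^{\sum_{s=1}^N k_s}}{\{(\alpha)_n\}^N} \;=\; \frac{\alpha^{(a+\sum_{s=1}^N k_s)-1}}{\{(\alpha)_n\}^{b+N}}.
\end{equation*}
This is exactly the unnormalized density of a $\mathrm{Sg}(a+\sum_{s=1}^N k_s,\, b+N,\, n)$ distribution, which proves the claim once we verify the resulting density is proper.

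The main obstacle, and the only genuinely nontrivial step, is checking that the updated hyperparameters still satisfy the validity constraint $1 < (a+\sum_s k_s)/(b+N) < n$ from Definition~\ref{def:StirlingGammaPdf}, so that the posterior is a bona fide Stirling-gamma rather than a merely formal expression. For the upper bound I would use that each observed partition has $1\le k_s\le n$, so $\sum_{s=1}^N k_s \le Nn$; combined with the prior bound $a < bn$, this yields $a+\sum_s k_s < bn + Nn = (b+N)n$, giving the required $(a+\sum_s k_s)/(b+N) < n$. For the lower bound one has $\sum_s k_s \ge N$ together with the prior bound $a>b$, so $a+\sum_s k_s > b+N$, which delivers $(a+\sum_s k_s)/(b+N) > 1$. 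Thus the posterior parameters lie in the admissible region, the normalizing constant $\mathcal{S}_{a+\sum_s k_s,\,b+N,\,n}$ is finite, and the proof is complete. I would note that this argument is essentially the $N$-fold extension of Proposition~\ref{pro:conjugacy}, which recovers the case $N=1$, and that conjugacy here is an instance of the general natural-exponential-family theory of \citet{Diaconis_Yilvisaker_1979} invoked earlier in the paper.
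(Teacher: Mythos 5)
Your proof is correct and follows essentially the same route as the paper's: apply Bayes' rule to the product likelihood of the $N$ conditionally independent partitions and recognize the resulting kernel $\alpha^{a+\sum_s k_s - 1}/\{(\alpha)_n\}^{b+N}$ as a Stirling-gamma. Your explicit verification that $1 < (a+\sum_s k_s)/(b+N) < n$ is a welcome addition; the paper carries out that propriety check only in the $N=1$ case (Proposition~\ref{pro:conjugacy}) and leaves its extension implicit.
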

It is straightforward to notice that Proposition~\ref{pro:conjugacy} is retrieved by letting $N=1$ in the above. In light of Theorem~\ref{theo:Conjugacy}, we can also derive the classic Bayesian decomposition of the posterior mean as a weighted average between the observed data and the prior. Recall that $\mathds{E}(K_n\mid \alpha) = \sum_{i=0}^{n-1} \alpha/(\alpha + i)$ is the conditional mean for the number of clusters generated by a Dirichlet process over partitions of the units $\{1,\ldots, n\}$, and that $\mathds{E}(K_n) = \mathds{E}\{\mathds{E}(K_n\mid \alpha)\} =a/b$ thanks to the law of the iterated expectation. Then, the next Proposition holds.

\begin{proposition}\label{pro:PosteriorAlpha}
Under the same setting of Theorem~\ref{theo:Conjugacy}, we have 
$$
\mathds{E}\left(\sum_{i=0}^{n-1} \frac{\alpha}{\alpha + i}\mid \Pi_{n, 1}, \ldots, \Pi_{n, N} \right) = \frac{b}{b+N} \frac{a}{b} + \frac{N}{b+N} \bar{k},
$$
where $\bar{k} = N^{-1}\sum_{s=1}^{N} k_s$ is the average number of clusters observed across the partitions.
\end{proposition}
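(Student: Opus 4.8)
The plan is to reduce the claim to the expectation identity already contained in Theorem~\ref{theo:MarginalKm}, applied to the posterior law supplied by Theorem~\ref{theo:Conjugacy}. First I would recall, as noted in the paragraph preceding the statement, that $\sum_{i=0}^{n-1}\alpha/(\alpha+i)$ is exactly the conditional mean $E(K_n\mid\alpha)$ of the number of clusters generated by a Dirichlet process over $\{1,\ldots,n\}$. Consequently the quantity to be computed is a posterior expectation of this conditional mean, and the whole problem becomes one of evaluating an expectation under a Stirling-gamma law.

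The key observation is that Theorem~\ref{theo:MarginalKm}, together with the law of iterated expectation, yields for any $\alpha\sim\mathrm{Sg}(a,b,m)$ the identity
$$E\left(\sum_{i=0}^{m-1}\frac{\alpha}{\alpha+i}\right) = E\{E(K_m\mid\alpha)\} = E(K_m) = \frac{a}{b}.$$
Crucially, this identity holds whenever the upper index of the sum coincides with the third (reference-size) parameter of the Stirling-gamma, since only then does the inner sum match the conditional cluster-count formula at the matching sample size $m$.

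Next I would invoke Theorem~\ref{theo:Conjugacy}, which gives the posterior law $\mathrm{Sg}(a+\sum_{s=1}^N k_s,\, b+N,\, n)$; note that the reference sample size remains $n$, unchanged from the prior. Because the upper limit $n-1$ of the sum $\sum_{i=0}^{n-1}\alpha/(\alpha+i)$ still matches this third parameter, the identity above transfers verbatim with the updated hyperparameters $a'=a+\sum_s k_s$ and $b'=b+N$, giving
$$E\left(\sum_{i=0}^{n-1}\frac{\alpha}{\alpha+i}\;\middle|\;\Pi_{n,1},\ldots,\Pi_{n,N}\right) = \frac{a+\sum_{s=1}^N k_s}{b+N}.$$
It then remains only to rewrite the right-hand side as the advertised convex combination by splitting the fraction and inserting $b/b$ and $N/N$:
$$\frac{a+\sum_{s}k_s}{b+N} = \frac{b}{b+N}\,\frac{a}{b} + \frac{N}{b+N}\,\frac{\sum_{s}k_s}{N} = \frac{b}{b+N}\,\frac{a}{b} + \frac{N}{b+N}\,\bar{k}.$$

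The only point requiring genuine care—and the one I would flag as the main obstacle—is verifying that the third parameter of the Stirling-gamma is left invariant under conjugate updating, so that the expectation identity of Theorem~\ref{theo:MarginalKm} applies unchanged to the posterior with the sum still running up to $n-1$. Once this matching of the summation range with the reference size is confirmed, everything else is the observation that the expected Dirichlet-process cluster count equals $a/b$ under a Stirling-gamma prior, together with elementary rearrangement.
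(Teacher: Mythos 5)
Your proof is correct and follows essentially the same route as the paper, whose proof simply invokes equation (2.10) of \citet{Diaconis_Yilvisaker_1979} or, ``alternatively,'' derives the result via Theorem~\ref{theo:Conjugacy} by integrating over the posterior $\mathrm{Sg}(a+N\bar{k},\, b+N,\, n)$ --- precisely the argument you spell out, using the mean identity $E\{\sum_{i=0}^{n-1}\alpha/(\alpha+i)\} = a'/b'$ from Theorem~\ref{theo:MarginalKm} applied with the updated hyperparameters. Your emphasis on the reference sample size remaining equal to $n$ after conjugate updating is exactly the point that makes this transfer legitimate, so nothing is missing.
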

The above statement is a direct consequence of the conjugacy of the Stirling-gamma prior under $m = n$. See \citet{Diaconis_Yilvisaker_1979} and the Supplementary material for details.

\begin{remark} The results in Proposition~\ref{pro:conjugacy} and Theorem~\ref{theo:Conjugacy} are particularly useful when applied to mixture modeling. In this respect, \citet{JMLR:v15:miller14a} showed that Dirichlet process mixtures with fixed precision $\alpha$ are \emph{inconsistent} in retrieving the correct number of clusters when data are generated from a finite mixture, even when mixture kernels are correctly specified. However, the recent contribution of \citet{Ascolani_2022} counters this argument and proves that choosing $\alpha$ as random yields consistency if some strong assumptions hold on both $\pi(\alpha)$ and the data-generating process. In particular, they enlist the gamma distribution as a valid choice for $\pi(\alpha)$. We show in the Supplementary material that the Stirling-gamma prior $\alpha \sim \mathrm{Sg}(a, b, n)$  satisfies only two out of the three assumptions required by \citet{Ascolani_2022}, despite its asymptotic equivalence with gamma highlighted in Proposition~\ref{pro:limit_gamma}. Since we cannot use their proof in our case, we will address consistency in simulated examples in Section~\ref{sec:simulations}.
\end{remark}


\section{Computational aspects and sampling strategies}\label{sec:computations}
There exists a rich variety of algorithms to sample from the posterior distribution of the mixture model in~\eqref{eq:MixtModel}. For Gibbs-type processes, one popular approach lies in the class of marginal samplers, which rely on the sequential predictive scheme of equation~\eqref{eq:Gibbs_scheme}. See \citet{Escobar_West_1995, Neal_2000} for examples. In light of its hierarchical construction, inference under the Stirling-gamma process mixture model can be performed under the same marginal scheme of the Dirichlet process, with an additional sampling step for $\alpha$. Such a step is provided by Proposition~\ref{pro:conjugacy} or Theorem~\ref{theo:Conjugacy} depending on the setting. In both cases, the conditioning is with respect to the last sampled partition at the given iteration. 
\begin{algorithm}[t]
\caption{Rejection sampler for the Stirling-gamma distribution}\label{algo:Stirgamma_sampler}
\begin{algorithmic}
\State \small{\textbf{Input}}: parameters $a, b>0$ and $m\in\mathds{N}$ with $1<a/b<m$.
\If{$a-b \geq 1$}
\State Let $M_u = \max_{\alpha \geq 0} S(\alpha)$ and $M_v = \max_{\alpha \geq 0} \alpha^2S(\alpha)$, with $S(\alpha) = \alpha^{a-1}/\{(\alpha)_m\}^b$
\State Sample $(u, v)$ uniformly in  $[0, M_u^{1/2}]\times [0, M_v^{1/2}]$
\If{$2\log{u} \leq \log S(v/u)$}
\State Accept $\alpha = v/u$ 
\Else \State Reject
\EndIf
\Else
\State Let $r = \Gamma(m)^{1/(m-1)}$ and $A(\alpha)= (\alpha+r)^{b(m-1)}/(\alpha+1)^b_{m-1}$
\State Sample $x \sim \mathrm{Be}(a-b, mb-a)$ and set $y = r x/(1-x)$, and sample $u$ uniformly in  $[0, 1]$
\If{$\log{u} \leq \log A(y)$} 
\State Accept $\alpha = y $
\Else \State Reject
\EndIf
\EndIf
\State \small{\textbf{Output}}: a sample from $\alpha \sim\mathrm{Sg}(a, b, m)$. 
\end{algorithmic}
\end{algorithm}

To facilitate the adoption of our prior in applications, we hereby provide a strategy to draw random samples from the Stirling-gamma distribution $\alpha\sim\mathrm{Sg}(a,b,m)$, which can also be utilized for the conjugate posteriors in Section~\ref{sec:inference} when fitting Dirichlet process mixture models. Unfortunately, a rejection sampler using the gamma $\mathrm{Ga}(a-b, b\log{m})$ in Proposition~\ref{pro:limit_gamma} as the proposal is not feasible because the Stirling-gamma has heavier tails; see the Supplementary material. As such, an ideal proposal must be itself a heavy-tailed distribution from which sampling is relatively easy. We find a suitable candidate to be the \emph{generalized beta prime distribution}, denoted as $\alpha \sim \mathrm{BeP}(a_0, b_0, r)$ and whose density is 
\begin{equation}\label{eq:BetaPrimePdf}
\pi_{\mathrm{BeP}}(\alpha) = \frac{(\alpha/r)^{a_0 - 1}(1 + \alpha/r)^{-a_0 - b_0}}{r\beta(a_0, b_0)}
\end{equation}
with $\alpha > 0$ and $\beta(a_0, b_0) = \Gamma(a_0)\Gamma(b_0)/\Gamma(a_0 + b_0)$ denoting the Beta function. Its specific advantage is that samples can be drawn by letting $\alpha = r x/(1-x)$  with $x \sim \mathrm{Be}(a_0, b_0)$. Moreover, under suitable for values for $a$, $b$, and $r$, the density in equation~\eqref{eq:BetaPrimePdf} ``covers'' the Stirling-gamma density uniformly, as the next result shows.
\begin{proposition}\label{pro:BetaPrime_ratio}
    Let $\pi_\mathrm{Sg}(\alpha)$ denote the density of  $\alpha\sim \mathrm{Sg}(a, b, m)$, and call $\pi_{\mathrm{BeP}}(\alpha)$ the density of  $\alpha \sim \mathrm{BeP}(a-b, mb-a, r)$ as in equation~\eqref{eq:BetaPrimePdf}, with $r = \Gamma(m)^{1/(m-1)}$. Then,
    $$
    \frac{\pi_\mathrm{Sg}(\alpha)}{\pi_\mathrm{BeP}(\alpha)} \leq \frac{\beta(a-b, mb-a)}{r^{mb-a}\mathcal{S}_{a,b,m}}  = M < \infty.
    $$
\end{proposition}

The quantity $M$ is a finite upper bound. Hence, the generalized beta prime can be used as a proposal in an accept-reject algorithm \citep{Devroye86}, where the acceptance probability is $1/M$ and $A(\alpha) = \pi_\mathrm{Sg}(\alpha)/\{M\pi_\mathrm{BeP}(\alpha)\} = (\alpha+r)^{b(m-1)}/(\alpha+1)^b_{m-1}$ is the acceptance function. We find such a strategy particularly effective when $a-b<1$, which corresponds to the case when the density in Definition~\ref{def:StirlingGammaPdf} is unbounded near the origin as $\alpha\to0$. However, when $a-b\geq 1$, the quantity $1/M$ becomes particularly low. Nevertheless, in this case, the density admits a finite maximum. Hence, we rely on the ratio of uniforms method, which can be adapted to any distribution for which the density $f(x)$ and the function $x^2f(x)$ can be maximized; refer to \citet{Devroye86} for a thorough description.

\begin{table}[t]
\begin{adjustbox}{max width=1\textwidth,center}
\begin{tabular}{llcccc|cccc}
\toprule
& & \multicolumn{4}{c}{$m = 100$} & \multicolumn{4}{c}{$m = 1000$} \\
&   & $a = 2$ & $a = 3$ & $a = 10$ & $a = 15$ & $a = 2$ & $a = 3$ & $a = 10$ & $a = 15$\\
\midrule
$a-b \geq 1$& $b = 0.2$ & 0.756 & 0.701 & 0.544 & 0.594 & 0.742 & 0.668 & 0.425 & 0.358\\
& $b = 1$ & 0.679 & 0.724 & 0.445 & 0.377 & 0.680 & 0.717 & 0.419 & 0.346\\
& $b = 1.5$ &  & 0.754 & 0.446 & 0.372 &  & 0.752 & 0.427 & 0.349\\
& $b = 5$ &  &  & 0.528 & 0.394 & &  & 0.523 & 0.386\\
\midrule
  & & $a = 0.2$ & $a = 0.6$ & $a = 0.7$ & $a = 1$ & $a = 0.2$ & $a = 0.6$ & $a = 0.7$ & $a = 1$\\
  \midrule
$a-b < 1$ & $b = 0.1$ & 0.949 & 0.788 & 0.760 & 0.678 & 0.911 & 0.638 & 0.593 & 0.458\\
& $b = 0.2$ &  & 0.799 & 0.757 & 0.655 &  & 0.683 & 0.622 & 0.476\\
& $b = 0.5$ &  & 0.940 & 0.883 & 0.733 &  & 0.907 & 0.822 & 0.609\\
& $b = 0.6$ &  &  & 0.938 & 0.775 &  &  & 0.905 & 0.670\\
\bottomrule
\end{tabular}
\end{adjustbox}
\caption[Acceptance probabilities for Algorithm \ref{algo:Stirgamma_sampler} under varying $a$, $b$ and $m$.]{Acceptance probabilities for Algorithm \ref{algo:Stirgamma_sampler} under varying $a$, $b$ and $m$. Values are obtained by averaging the acceptance rate obtained in 1000 trials of Algorithm~\ref{algo:Stirgamma_sampler} under 100 replicates. Standard deviations were all around $0.01$ and therefore are omitted from the table. Empty cells indicate when $1<a/b<m$  and the $a-b$ condition is violated }\label{tab:accept}%
\end{table}

Algorithm~\ref{algo:Stirgamma_sampler} summarizes the sampler of the Stirling-gamma. To ease reproducibility, we implement it in the \texttt{R} package \texttt{ConjugateDP} via the function \texttt{rSg}. Table~\ref{tab:accept} reports the acceptance rates for the samplers above for selected values of $a$, $b$, and $m$. We see that, when $a-b\geq 1$, the rates range between $0.3$ and $0.7$, which is fairly large considering that the ratio of uniforms is a method that is not tailored specifically to the Stirling-gamma. Instead, when $a-b <1$, the rates are much larger and range between $0.4$ and $0.95$. This is due to the high similarity between the generalized beta prime distribution and the Stirling-gamma.

\section{Empirical demonstrations}\label{sec:simulations}
\subsection{Posterior comparison between Stirling-gamma and gamma}\label{subsec:Gamma_vs_Sg}
The similarity between the gamma and the Stirling-gamma highlighted in Proposition~\ref{pro:limit_gamma} naturally poses the question of whether the two distributions lead to different posteriors for $\alpha$ and $K_n$ when fitting Dirichlet process mixture models. In a foundational paper, \citet{Escobar_West_1995} first proposed to adopt $\alpha\sim \mathrm{Ga}(2,4)$ as a default. They also present a data-augmentation Gibbs sampling scheme based on beta-distributed random variables to sample from the full conditional of $\alpha$. In light of our result in Proposition~\ref{pro:conjugacy}, where such augmentation is not required, we investigate differences in the posteriors induced by the two priors.

We consider a dataset of $n = 500$ observations generated independently from a mixture of $k^*$ equally weighted bivariate normal components, with variance-covariance matrix equal to $\textrm{diag}\{0.15, 0.15\}$. In particular, we study three settings: one with $k^*=2$, with means $(-1,-1)$ and $(1,1)$; one with $k^*=4$ with means $(-1,-1)$, $(1,-1)$, $(-1,1)$ and $(1,1)$; and finally one with $k^*=6$ and two additional means $(3,0)$ and $(0,3)$, respectively. For each choice of $k^*$, we randomly generate 20 different datasets. 
In all cases, we let $\theta_i = (\mu_i, \Sigma_i)$ and $f(x\mid\theta_i) = N(x; \mu_i, \Sigma_i)$ in equation~\eqref{eq:MixtModel}, with $N(x, \mu, \Sigma)$ denoting a normal distribution with mean $\mu \in\mathds{R}^2$ and variance-covariance matrix $\Sigma\in \mathds{R}^{2\times 2}$. Our prior $\mathscr{Q}$ is a Dirichlet process with precision parameter $\alpha$ and normal-inverse Wishart baseline distribution $N(\mu; 0, \Sigma/\kappa_0)IW(\Sigma; \nu_0, I)$, with $\nu_0 = \kappa_0 = 2$. Three different choices of priors are considered with respect $\alpha$: (i) a weakly-informative conjugate prior $\alpha \sim \mathrm{Sg}(1, 0.25, n)$; (ii) $\alpha \sim \mathrm{Sg}(4, 1, n)$, which is more informative than case (ii); and (iii) the default gamma prior $\alpha \sim \mathrm{Ga}(2, 4)$ as in \citet{Escobar_West_1995}. In all cases, the induced distribution on $K_n$ has mean $\mathds{E}(K_n) \approx 4$, and we also have $\mathds{E}(\alpha) \approx 1/2$. Hence,  the major difference between (i) and (ii)-(iii) lies in the prior vagueness for $K_n$, whereas (ii) differs from (iii) in terms of right tail heaviness. Inference on the number of clusters in each replicate is performed by running a marginal Gibbs sampler as in Algorithm 3 in \citet{Neal_2000} for 15,000 iterations, discarding the first 5,000 as burn-in.

\begin{figure}[t]
    \centering
    \includegraphics[width = \linewidth]{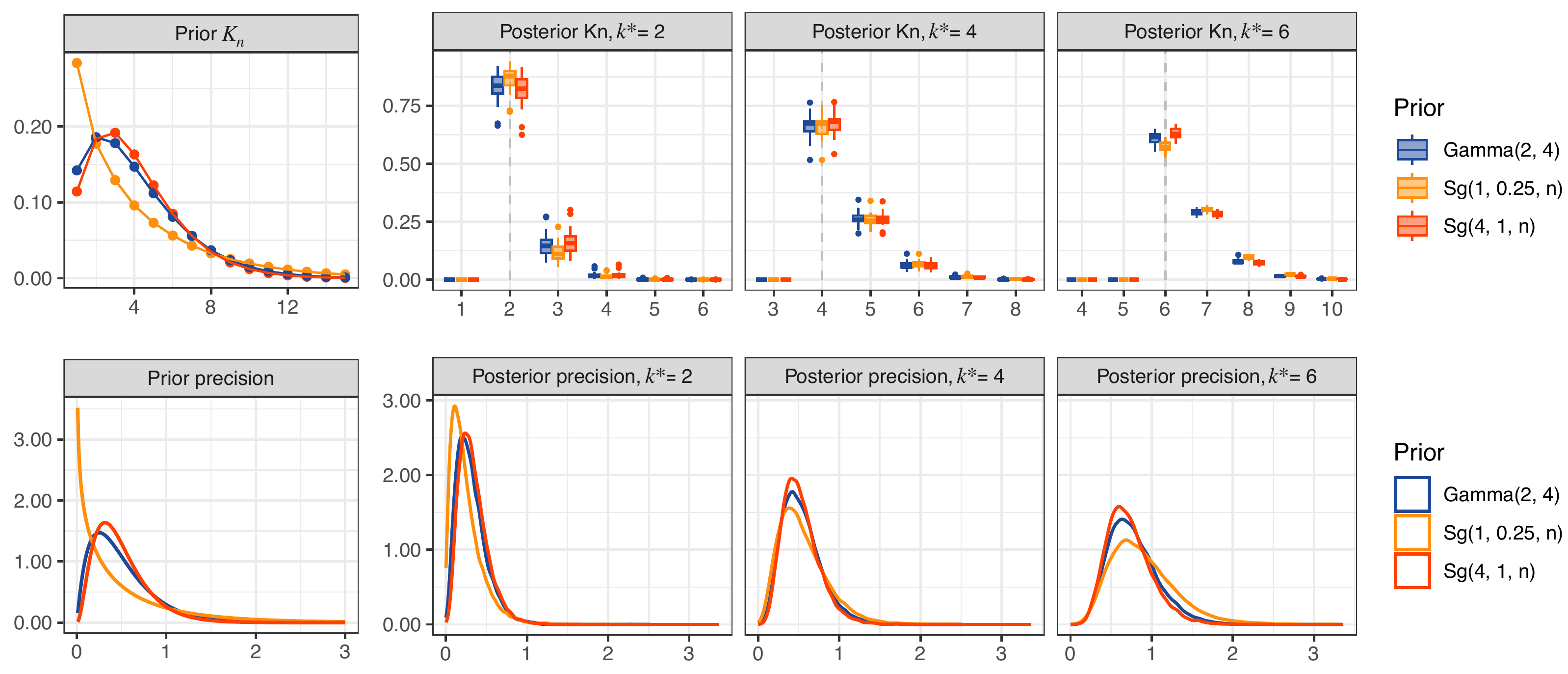}
    \caption{Top panels: probability mass function for the prior $K_n$ (leftmost plot) and posterior probabilities that $K_n=k$ across 20 replicates under the three different priors for $\alpha$. Vertical dashed lines indicate the true number of clusters $k^*$ in each simulation. Bottom panels: priors for $\alpha$ (leftmost plot) and posteriors in one of the replicates. }
    \label{fig:gamma_sg_simulation}
\end{figure}

Figure~\ref{fig:gamma_sg_simulation} displays the results of our simulation. We see that the default gamma prior (iii) and the informative Stirling-gamma (ii) have very similar densities and induce a similar prior for $K_n$. This is expected since both priors are chosen to have mean equal and approximately equal to $1/2$, respectively. The weakly informative Stirling-gamma prior, instead, assigns larger mass to small values of $K_n$. The posteriors, however, resemble one another across all three priors and values for $k^*$.  In particular, all three models assign the largest posterior probability of $K_n$ to the correct value for $k^*$, depicted by the vertical dashed lines. Such probabilities are fairly equal across replicates, indicating that posterior differences between the Stirling-gamma and the gamma are minimal in these settings, irrespective of the heaviness of the right tail and relative flatness of the prior. Minimal differences are also detected in the average posterior effective sample sizes for $\alpha$, which are reported in Table~\ref{tab:effectiveSize}. Specifically, the informative Stirling-gamma appears to have slightly higher effective sample sizes than the gamma counterpart. However, such benefits are marginal since the major computational hurdles of Dirichlet process mixtures lie in the sequential allocation scheme for each observation $i = 1, \ldots, n$, as in equation~\eqref{eq:SGDP_scheme}.

The choice of $\alpha\sim \mathrm{Ga}(2,4)$ is rather conservative, since  $\mathds{E}(\alpha) = 1/2$. However, the induced prior distribution over $K_n$ still varies with $n$. For instance, under the default gamma prior, we have $\mathds{E}(K_{100}) = 3.20$, $\mathds{E}(K_{1000}) = 4.35$, and $\mathds{E}(K_{10000}) = 5.51$.  While Figure~\ref{fig:gamma_sg_simulation} did not detect substantial differences in the posteriors, having a prior that varies with $n$ such as $\mathrm{Sg}(a, b, n)$ could further counterbalance the logarithmic growth of $K_n$ and, in turn, can be beneficial in preventing over-clustering. We will comment further on the matter in the next Section. 

\begin{table}[t]
\begin{adjustbox}{max width=1\textwidth,center}
\begin{tabular}{lccc}
\toprule
\textsc{prior}&  $k^*=2$ & $k^*=4$ & $k^*=6$\\
\midrule
  $\alpha \sim \mathrm{Ga}(2, 4)$ & 8066.35 (581.56) & 7212.07 (521.74) & 7537.74 (394.80) \\
    $\alpha \sim \mathrm{Sg}(4, 1, n)$ & 8536.85 (602.03) & 8107.74 (764.88) & 8354.43 (399.28) \\
    $\alpha \sim \mathrm{Sg}(1, 0.25, n)$ & 7826.41 (812.74) & 6974.12 (582.84) & 7449.69 (411.10) \\
\bottomrule
\end{tabular}
\end{adjustbox}
\caption[]{Average effective sample sizes for $\alpha$ across 20 replicates when estimating a Dirichlet process mixture model on $n = 500$ data points from a mixture of $k^*$ bivariate normals. Standard deviations are indicated in parenthesis.}\label{tab:effectiveSize}%
\end{table}

\subsection{Standard normal Dirichlet process mixture}\label{subsec:normalDPM}

We now empirically investigate how modeling $\alpha$ via the conjugate Stirling-gamma prior can mitigate over-clustering in settings where keeping $\alpha$ fixed leads to inconsistent estimates for the number of clusters. In a famous example, \citet{Miller_2013} first showed that when the data are generated from a single standard normal, namely $X_1, \ldots, X_n\stackrel{\text{iid}}{\sim} N(0, 1)$, the posterior number of clusters arising from a Dirichlet process mixture model \emph{does not} concentrate around one, even if the mixture kernels are correctly specified. In particular, they analyze the behavior of the \emph{standard normal} mixture 
\begin{equation}\label{eq:standardDPM}
X_{i}\mid \theta_i \stackrel{\mathrm{ind}}{\sim} N(\theta_i, 1), \quad \theta_i\mid \tilde{p}\stackrel{\mathrm{iid}}{\sim} \tilde{p}, \quad \tilde{p}\sim \textsc{dp}(\alpha, P_0)
\end{equation}
with baseline density $p_0$ being $N(0, 1)$, and show that $\mathds{P}(K_n = 1 \mid X_1, \ldots, X_n)\to 0$  in probability as $n\to \infty$  when $\alpha=1$. This implies that the model in equation~\eqref{eq:standardDPM} is asymptotically inconsistent with respect to the true number of clusters. Refer to \citet{JMLR:v15:miller14a} for more general examples. 

\begin{figure}[tp]
    \centering
\includegraphics[width = \linewidth]{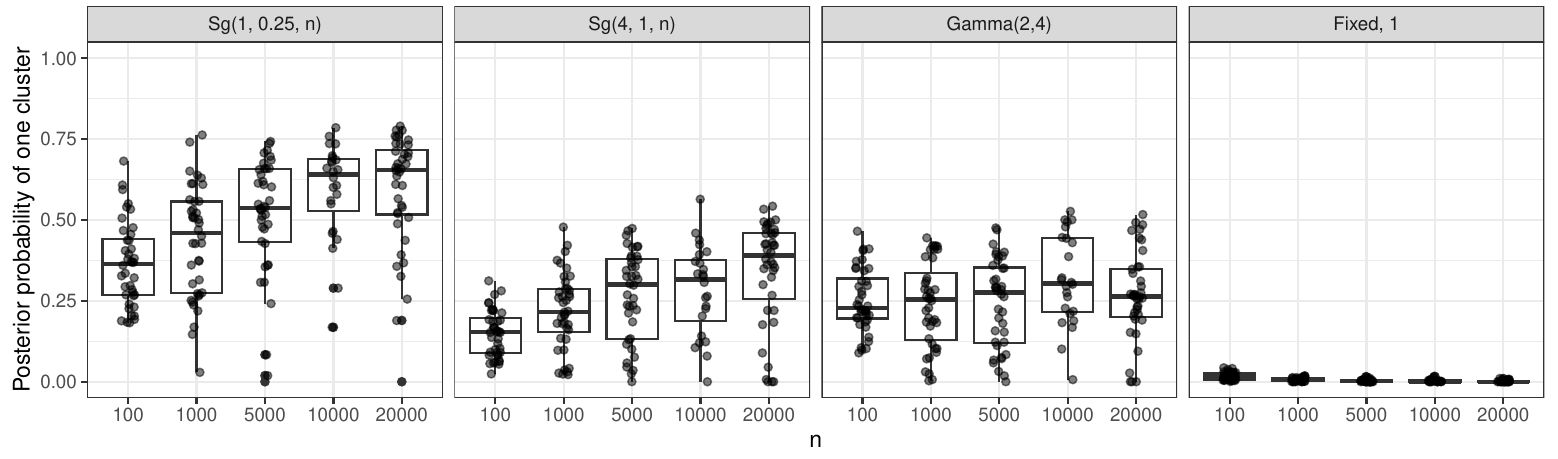}
    \caption{Posterior probability that $K_n=1$ in the standard normal Dirichlet process mixture in equation~\eqref{eq:standardDPM}, when data are independently generated from a single standard normal $X_i \sim N(0,1)$. Each point represents the result of one simulated dataset for each choice of priors over $\alpha$.}
    \label{fig:Consistency_DPM}
\end{figure}

In light of the recent contribution of \citet{Ascolani_2022} mentioned above, we revisit the problem from an empirical perspective by letting $\alpha$ be random. In particular, our aim is to evaluate the behavior of $\mathds{P}(K_n=1\mid X_1, \ldots,X_n)$ for increasingly larger values of $n$, considering again (i) $\alpha \sim \mathrm{Sg}(1, 0.25, n)$; (ii) $\alpha \sim \mathrm{Sg}(4, 1, n)$  (ii);  (iii) $\alpha \sim \mathrm{Ga}(2, 4)$ as in \citet{Escobar_West_1995}; and finally (iv) $\alpha =1$ as studied by \citet{Miller_2013}. Cases (i) and (ii) impose an explicit dependence on the sample size in the prior, which makes them both concentrate at zero progressively as $n$ increases; see Proposition~\ref{pro:limit_gamma}. Case (iii) induces a different prior over $K_n$ based on the value of $n$ instead. We simulate 40 datasets of size $n\in \{100, 1,000, 5,000, 10,000, 20,000\}$ each from a single standard normal, and we estimate the model in equation~\eqref{eq:standardDPM} running a marginal Gibbs sampler for $3,000$ iterations, discarding the first $1,000$ as burn-in. In cases (i) and (ii), we obtain posterior samples from $\alpha$ by applying Algorithm~\ref{algo:Stirgamma_sampler} jointly with Proposition~\ref{pro:conjugacy}, while case (iii) relies again on the augmentation proposed by \citet{Escobar_West_1995}. We report effective sample sizes in the Supplementary material.

Figure~\ref{fig:Consistency_DPM} displays the results. Not surprisingly, $\mathds{P}(K_n=1\mid X_1, \ldots,X_n)$ progressively approaches zero when $\alpha = 1$  \citep{Miller_2013}. Letting $\alpha$ be random, instead, allows the model to learn the precision from the data and, therefore, partially counteracts the tendency of Dirichlet process priors to favor unbalanced partitions \citep{Changwoo_2022}. As such, the resulting posterior does not concentrate away from $K_n=1$. Interestingly, when $\alpha\sim\mathrm{Ga}(2, 4)$, the probability of detecting one cluster appears somehow constant across $n$ across the 40 replicates. The Stirling-gamma priors, instead, show a clear increasing trend in both cases (i) and (ii). This is more evident in case (i) since the induced prior on $K_n$ is vague and, therefore, has a milder impact.  This suggests that adaptively tuning the prior for $\alpha$ with $n$ leads to a more stable behavior than having a one-fit-all default choice such as $\alpha\sim\mathrm{Ga}(2,4)$, and indicates $\alpha\sim \mathrm{Sg}(1, 0.25, n)$ as a good default in applied settings. We conclude by pointing out that, despite the increasing trend, $\mathds{P}(K_n=1\mid X_1, \ldots,X_n)$ is far from concentrating at one in all random settings. Hence, based on the slow logarithmic rate of convergence to 0 discussed in Proposition~\ref{pro:limit_gamma},  we expect that values for $n$ in the order of millions will be needed to see the consistency in simulation.

\section{Application: inferring sub-communities in ant colonies}\label{sec:ants}

We now further illustrate how modeling the precision parameter $\alpha$ via a Stirling-gamma prior in a Dirichlet process mixture as opposed to keeping it fixed yields a more robust estimate of the posterior partition. We specifically consider the problem of detecting community structures in a colony of ant workers by modeling daily ant-to-ant interaction networks via stochastic block models \citep{Nowicki_Snijders_2001}. The data were collected by \citet{Mersch2013} by continuously monitoring six colonies of the ant \emph{Camponotus fellah} through an automated video tracking system over a period of 41 days. Each day yielded a weighted adjacency matrix whose edges contain the number of individual interactions between workers. In this analysis, we model a binary version of the data from days two, four, six, and eight for the fifth colony ($n=149$), where $X_{i, j, s}$ equals one if ant $i$ and ant $j$ in network $s$ interacted more than five times, and zero otherwise. In line with the setting proposed by Theorem~\ref{theo:Conjugacy}, we use the Stirling-gamma process to independently model the $N=4$ latent partitions of the same $n = 149$ ants. Figure~\ref{fig:ant_networks} reports the binary adjacency matrices recording ant interactions. Rows and columns have been sorted according to the three social organization groups retrieved by \citet{Mersch2013}, namely \emph{foragers}, \emph{cleaners}, and \emph{nurses}. The last group is composed of younger individuals and forms a stronger connection with the queen.

\begin{figure}[t]
\centering
\includegraphics[width = \linewidth]{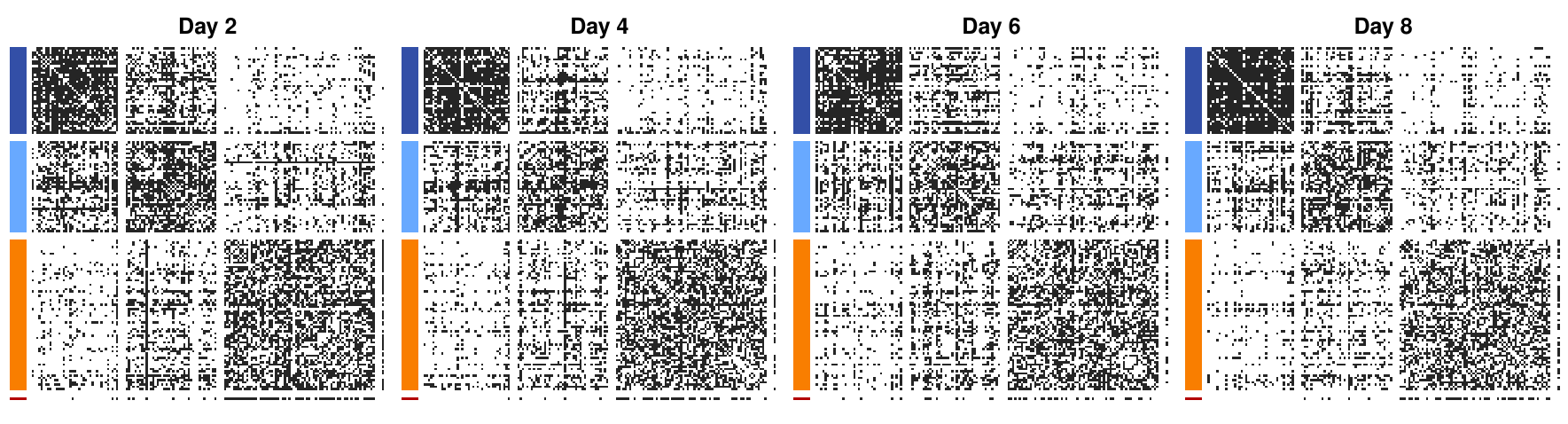}
\caption{Binary ant-to-ant interaction networks in a colony \emph{Camponotus fellah} observed in four different days. Each node is an ant, and black points denote edges. The colors on the left indicate the three groups of workers, namely foragers (dark blue), cleaners (light blue), and nurses (orange). The bottom red node indicates the queen.}
\label{fig:ant_networks}
\end{figure}

In order to perform community detection on each of the four networks, we rely on a stochastic block model formulation. This is a variant of equation~\eqref{eq:MixtModel} of the Introduction, which is best rewritten with the help of auxiliary variables representing cluster assignment as follows. Given a random partition of the nodes $\Pi_{n,s} = \{C_{1, s}, \ldots, C_{k_s, s}\}$ in $s$, call $Z_{i,s}$ an auxiliary variable so that $Z_{i, s} = h$  if the node $i \in C_{h,s}$, for $i = 1, \ldots, n$. The probability of detecting an edge between nodes $i$ and $j$ in network $s$ is specified as
\begin{equation}\label{eq:sbm}
\mathds{P}(X_{i, j,s} = 1 \mid Z_{i, s} = h, Z_{j,s} = h', \nu) = \nu_{h, h',s}, \quad \nu_{h, h',s} \sim \mathrm{Be}(1, 1).
\end{equation}
Here, $\nu_{h, h',s} \in \nu = (\nu_{1,1,1}, \ldots,)$ is the edge probability in the block identified by clusters $C_{h, s}$ and $C_{h',s}$, and $\mathrm{Be}(a_0,b_0)$ is the Beta distribution with mean $a_0/(a_0 + b_0)$.  We assume no node self-relation, thus ignoring the diagonal entries $X_{i, i, s}$. By modeling the latent partition via the Dirichlet process prior $\mathds{P}(\Pi_{n,s}\mid \alpha)$ as in equation~\eqref{eq:dp_eppf}, we can flexibly find a grouping of the nodes with a similar edge distribution and thus infer the number of ant worker communities without pre-specifying an upper bound to the number of clusters. See \citet{Legramanti2022} and references therein for a description of the posterior sampling algorithm.  The specific advantage of this approach is that we can identify richer \emph{sub-communities} than the three large ones described originally.
 
Our intent is to investigate the impact that different choices of $\alpha$ have on each posterior partition from the model in equation~\eqref{eq:sbm}. In particular, we compare a Dirichlet process mixture with (i) $\alpha = 0.4$ and (ii) $\alpha = 18$, against Stirling-gamma processes with (iii)  $\alpha \sim \mathrm{Sg}(0.6, 0.2, 149)$, and (iv)  $\alpha \sim \mathrm{Sg}(8, 0.2, 149)$. The hyperparameters in models (i) and (iii) are chosen such that $\mathds{E}(K_n) = 3$ so as to incorporate the \emph{a priori} knowledge of the three groups described by \citet{Mersch2013}. To check for posterior robustness, cases (ii) and (iv), instead, have $\mathds{E}(K_n) = 40$. As is evident from the leftmost panel of  Figure~\ref{fig:Prior_Post_Kn}, the Stirling-gamma prior is less informative about $K_n$ due to the choice of $b = 0.2$. We obtain the posterior partition in each model by running a collapsed Gibbs sampler as in \citet{Legramanti2022} for $40,000$ iterations, treating the first $10,000$ as burn-in. The full conditional for $\alpha$ in both Stirling-gamma processes is provided by Theorem~\ref{theo:Conjugacy}, setting $N=4$ and $n = 149$. The resulting effective sample size for $\alpha$ in cases (iii) and (iv) is $11,932.22$ and $19,422.59$, indicating good mixing. Figure~\ref{fig:Prior_Post_Kn} plots the posterior distribution for the number of retrieved clusters $K_n$ in each network. We see that there exists a non-negligible difference between the two posteriors where $\alpha$ is kept fixed, leading to under- and over-clustering in cases (i) and (ii), respectively. On the contrary, making $\alpha$ random with a sufficiently vague Stirling-gamma prior retrieves a virtually identical posterior irrespective of the induced prior on $K_n$. This is due to the additional flexibility granted by the Stirling-gamma, which enables the model to infer $\alpha$ from the data. 
\begin{figure}[tb]
\centering
\includegraphics[width = \linewidth]{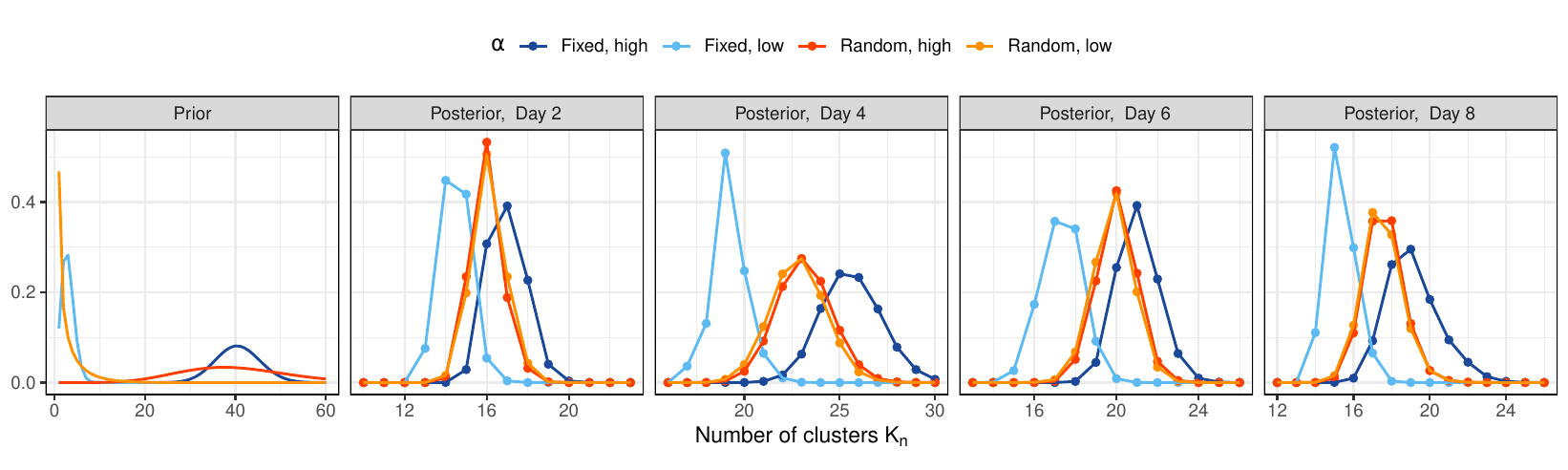}
\caption{Prior and posterior distribution of the number of clusters detected in the ant-to-ant binary interaction networks. Different colors refer to the four models tested. ``Fixed, low'' (light blue) refers to case (i) when $\alpha = 0.4$. ``Fixed, high'' (dark blue) is case (ii) when $\alpha = 18$. ``Random, low'' (light orange) is case (iii) with $\alpha \sim \mathrm{Sg}(0.6, 0.2, 149)$, and ``Random, high'' refer to case (iv) with $\alpha \sim \mathrm{Sg}(8, 0.2, 149)$.}
\label{fig:Prior_Post_Kn}
\end{figure}

\begin{figure}[h!]
\centering
\includegraphics[width = \linewidth]{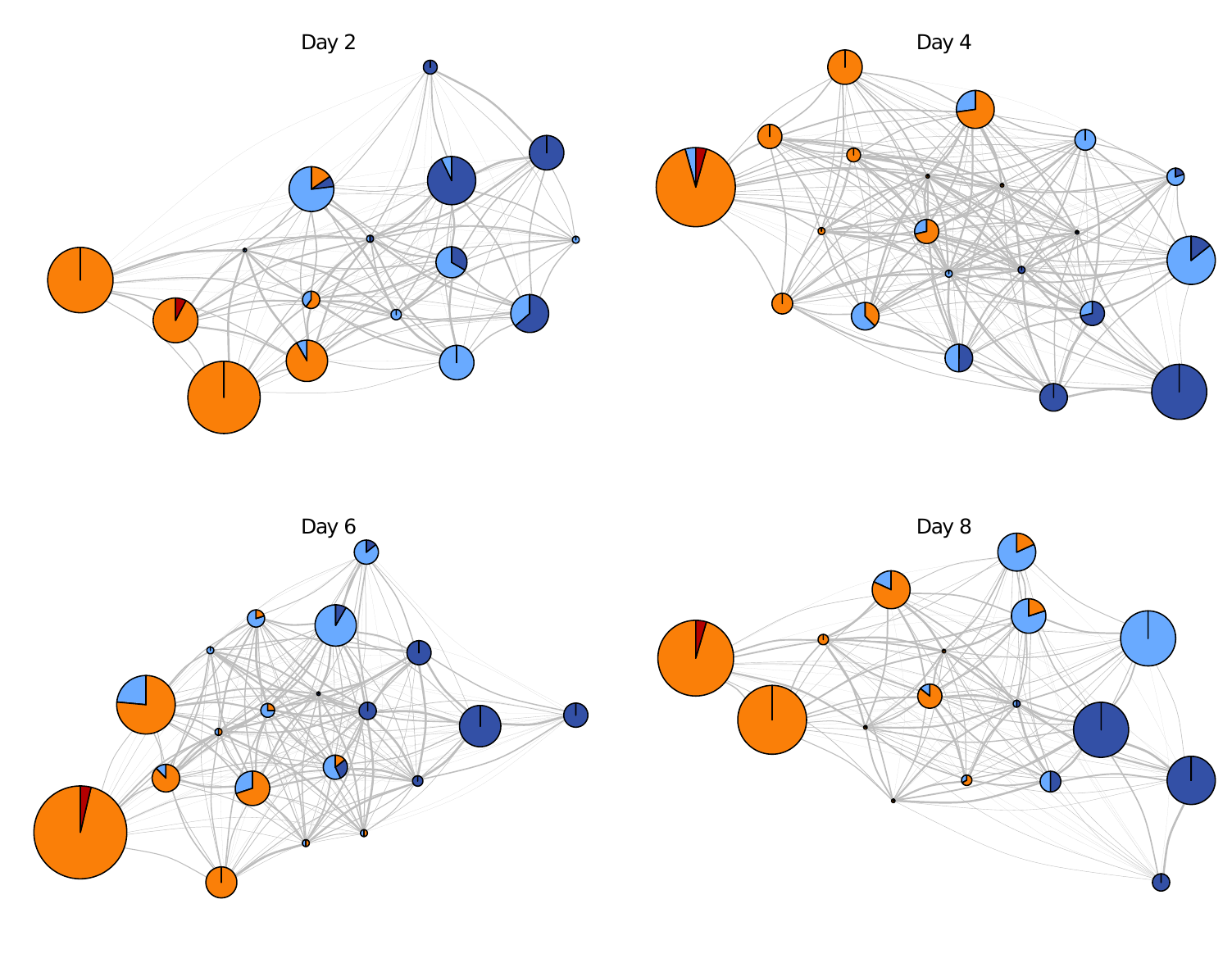}
\caption{Network representation of the inferred partition in the four networks displayed in Figure~\eqref{fig:ant_networks}. Nodes represent the retrieved clusters, with size determined by the number of ants they contain. Colors reflect the composition of each cluster according to the groups identified by \citet{Mersch2013}: foragers (dark blue), cleaners (light blue), and nurses (orange). The queen is indicated in red. We obtain the node positions through force-directed placement \citep{Fruchterman_Reingold_1991}. The width of the connections is determined by the posterior mean for the estimated block probabilities $\nu_{h, h', s}$, ignoring the ones below 0.1 for aesthetic reasons.}
\label{fig:Posteriror_networks}
\end{figure}

To further investigate the sub-communities retrieved by our model, we look at the posterior obtained from the Stirling-gamma process in model (iii). As we can see from Figure~\ref{fig:Prior_Post_Kn}, the average number of clusters detected in each day is much larger than the original grouping suggested by \citet{Mersch2013}. As such, the stochastic block model in equation~\eqref{eq:sbm} recovers a more complex ant organization than the one originally proposed, effectively detecting worker sub-communities. Figure~\ref{fig:Posteriror_networks} displays the posterior partitions obtained by minimizing the variation of information metric leveraging on the approach of \citet{Wade_Ghahramani_2018}. 
We observe 16, 21, 20,  and 17 clusters on days 2, 4, 6 and  8, respectively. Such differences are due to the within-day variability in worker interactions. These numbers are higher than the originally proposed three-cluster division in \citet{Mersch2013}, which is also apparent in the block structures in Figure~\ref{fig:ant_networks}. This is, however, in line with our model, which is designed to detect sub-communities within each day. In particular, we are able to isolate the groups uniquely characterized by nurses (in orange) and foragers (in dark blue), and identify the nurses and the forages that interact the most with the Cleaners (in light blue). Indeed, members of this last group tend to be co-clustered with the other two as they play a fundamental role in handling the passage of information within the colony. Finally, we are able to detect the sub-community of nurses that interacts the most with the queen. Such core structure of the social organization remains stable across days.

The model in equation~\eqref{eq:sbm}, endowed with a Dirichlet process prior, introduces a mild borrowing of information across the $N$ time layers through the shared precision $\alpha$.  There exists a wide variety of more complex models that account for repeated partitions of the same nodes where a stronger borrowing of information can be imposed. These include, for example, time-varying stochastic block models \citep{Matias_Miele_2016, Miele_Matias_2017} and multi-layer stochastic block models \citep{StanleyNatalie2016CNLw, Barbillon_2016}. While our analysis is meant to illustrate the robustness of having a random $\alpha$ in Dirichlet process mixtures, it will be interesting in the future to explore these approaches within a nonparametric framework using a Stirling-gamma process, possibly accounting for the many covariates available for the ants in each day.

\section{Discussion}\label{sec:discussion}

Our proposed Stirling-gamma prior was motivated by improving robustness to prior choice and transparency in prior elicitation in Dirichlet process mixture models. Fixing the precision parameter is a poor choice in most applications, since it implies a highly informative prior for the induced number of clusters. While the usual gamma prior can improve robustness to one's prior guess for the number of clusters, the implications of the gamma choice are unclear due to the lack of an analytically tractable form for the induced partition prior. The Stirling-gamma has the advantage of being more transparent in terms of the induced clustering prior than the gamma distribution while also marginally simplifying the posterior sampler thanks to its conjugacy and the efficient random variable generator introduced in this paper. In turn, the similarity between the two distributions can further aid interpretability to the gamma prior as well: when $\alpha\sim\mathrm{Ga}(a-b, b\log{m})$, then for sufficiently large $m$ we have that $\mathds{E}(K_m)$ is approximately $a/b$ from Theorem~\ref{theo:MarginalKm} and Proposition~\ref{pro:limit_gamma}.

The applications of the Stirling-gamma as a prior in mixture modeling are not limited to Dirichlet process mixtures alone. There exist many other Bayesian nonparametric models whose exchangeable partition probability function extends the Dirichlet process one. Such is the case, for instance, of the extended stochastic block model of \citet{Legramanti2022}, which modifies equation~\eqref{eq:dp_eppf} to incorporate cluster-specific covariates. As we show in the Supplementary material, the Stirling-gamma can be suitably used as a conjugate prior for $\alpha$ also under such a case, thus broadening its applicability to more complex models than the one considered in Section~\ref{sec:ants}. Another example is of the generalized mixture of finite mixtures of \citet{FruFru_2021}, where a heavy-tailed F-prior is employed over $\alpha$; refer again to the Supplementary material.

More broadly, the Stirling-gamma is of interest as a new heavy-tailed distribution having positive support. There are multiple other application areas in which this new distribution may be useful. For example, the Stirling-gamma could be used as the choice of exponential family distribution within a generalized linear model framework when the common log-normal or gamma choices lack sufficiently heavy tails for the data at hand. Alternatively, noting that the Dirichlet distribution arises by normalizing independent gamma random variables, one could obtain an alternative distribution on the probability simplex by normalizing Stirling-gamma random variables. This new distribution may be ideal at characterizing the case in which there are a small proportion of large probabilities with the remaining concentrated near zero; a common desirable behavior for shrinkage priors on the simplex. Other recent examples in this direction can be found in \citet{wang2024horseshoe}, who introduced a class of \emph{Pochhammer} distributions that include our Stirling-gamma as a special case. Interestingly, the Pochhammer prior is conjugate  to the Dirichlet-multinomial model and can suitably handle sparsity in compositional count data by inducing continuous shrinkage via a horseshoe-type behavior \citep{Carvalho_2010} thanks to the implied heavy tails. Finally, we could also leverage upon the property of infinite divisibility enjoyed by all generalized gamma convolutions \citep{Bondesson1979}, of which the Stirling-gamma is a member, to construct flexible finite-dimensional stochastic processes extending Dirichlet-multinomial partitions; see \citet{Lijoi_2023}.

One popular topic of debate in the recent literature is whether Dirichlet process mixtures are consistent in retrieving the ``true'' number of clusters \citep{Miller_2013, Zeng_Miller_Duan_2023, alamichel2023bayesian}. While keeping $\alpha$ fixed yields to an inconsistent posterior of $K_n$ in fairly broad settings \citep{JMLR:v15:miller14a}, randomization of the precision via a prior $\pi(\alpha)$ leads to theoretical consistency under very strong assumptions on the data generating process. In their main result, \cite{Ascolani_2022} assume complete separability of the true parameters identifying the clusters and consider only classes of kernels with bounded support; the standard normal mixture example of \citet{Miller_2013}, for instance, does not abide by these assumptions. Hence, while our Stirling-gamma prior does not meet the prior properties required by \cite{Ascolani_2022} for their proof (see the Supplementary material), additional theoretical investigation is still needed to gain a deeper understanding of the problem under more general settings. Based on the simulation in Section~\ref{sec:simulations}, we conjecture that consistency could be retrieved by letting $\pi(\alpha)$ converge to a point mass at zero as $n\to\infty$ at the appropriate rate, as happens with the conjugate Stirling-gamma prior; more work will be dedicated in the future to come to a conclusive statement. See also \cite{Ohn_Lin_2023} for properties of the posterior of Dirichlet process mixtures when $\alpha$ approaches zero deterministically as $n$ grows.

When the reference sample size $m$ of a Stirling-gamma prior $\alpha\sim \mathrm{Sg}(a, b, m)$ diverges, the total number of clusters generated from a Stirling-gamma process as $n\to\infty$ is negative binomial-distributed. Moreover, having $\alpha = \lambda/\log{m}$ with $m\to\infty$  makes $K_\infty - 1$ follows a Poisson distribution. This suggests a parallelism with mixtures of finite mixtures \citep{Richardson_green_1997, Nobile_2004, Miller_Harrison_2018}, which are a class of Gibbs-type processes \citep{Gnedin2005, DeBlasi2015} where the number of mixture components $k^*$ is itself a parameter of the model and follows a discrete prior of choosing, like $k^*-1 \sim\mathrm{NegBin}(r, p)$. Interestingly, these models have been shown to consistently retrieve the true number of clusters in general settings \citep{Miller_2023}. Based on our results, Dirichlet process mixtures can \emph{asymptotically} induce a finite number of clusters \emph{a priori} as well, provided that $\alpha\to0$ at a logarithmic rate. On the contrary, models such as generalized mixtures of finite mixtures introduced by \citet{FruFru_2021} converge to a Dirichlet process prior as $k^* \to \infty$. However, despite an apparent asymptotic equivalence, Dirichlet process mixtures always favor more imbalanced partitions than their mixture of finite mixture counterpart, since $\alpha$ does not influence the mechanism determining the cluster sizes in equation~\eqref{eq:dp_eppf}; refer to the insightful contribution of \citet{Changwoo_2022}. In future work, it will be interesting to study the implied asymptotic behavior for $K_n$ and the associated partition balancedness when randomizing the hyperparameters of Gibbs-type processes with infinite clusters with a prior that counterbalances the growth of $K_n$ itself. For example, one could design a prior for the strength hyperparameter $\sigma$ in a Pitman--Yor whose prior induces a polynomial decay at zero. Refer to \citet{DeBlasi2015} for details.

\section*{Acknowledgement}
This project has received funding from the European Research Council under the European Union’s Horizon 2020 research and innovation programme (grant agreement No 856506). The authors would like to extend their gratitude to Yuansi Chen, Antonio Lijoi, and Igor Pr\"{u}nster for their precious suggestions. We are also grateful to the associate editor and the three anonymous referees, who helped improving on the quality of the paper. 

\section*{Supplementary Material}
The Supplementary material includes the proofs of the statements in the paper, additional statements, details of the random number generator for the Stirling-gamma distribution, and a simulation study on data from simulated networks. Code to sample from the Stirling-gamma is available at \url{https://github.com/alessandrozito/ConjugateDP}.

\bibliographystyle{chicago}
\bibliography{references}

\newpage

\setcounter{page}{1}
\setcounter{section}{0}
\setcounter{table}{0}
\setcounter{figure}{0}
\setcounter{equation}{0}
\renewcommand{\theHsection}{SIsection.\arabic{section}}
\renewcommand{\theHtable}{SItable.\arabic{table}}
\renewcommand{\theHfigure}{SIfigure.\arabic{figure}}
\renewcommand{\thepage}{S\arabic{page}}  
\renewcommand{\thesection}{S\arabic{section}}   
\renewcommand{\theequation}{S\arabic{equation}}   
\renewcommand{\thetable}{S\arabic{table}}   
\renewcommand{\thefigure}{S\arabic{figure}}
\renewcommand{\thetheorem}{S\arabic{theorem}}
\renewcommand{\thelemma}{S\arabic{lemma}}
\renewcommand{\theproposition}{S\arabic{proposition}}
\renewcommand{\thecorollary}{S\arabic{corollary}}
\renewcommand{\theequation}{S\arabic{equation}}

\section*{Supplementary Material}
This document contains the proofs for all the statements of the main manuscript, additional simulations, and a rejection algorithm to sample from the Stirling-gamma distribution. The Supplement is divided as follows. Section~\ref{sec:lemmas} contains some preliminary lemmas that are useful for the main proofs. Section~\ref{sec:proofs} reports the proofs of the statements in the main manuscript. Section~\ref{sec:additional_res} presents additional useful theoretical results. Section~\ref{sec:proof_appendix} presents the proofs for the formulas of the normalizing constants.  Section~\ref{sec:FurtherResults} frames the Stirling-gamma within the consistency results of \citetSupp{Ascolani_2022}, and discusses some additional partition models where it can be useful. Section~\ref{sec:simul} presents an additional simulation study on the \emph{population of partition} framework, and reports the effective sample sizes for the simulation in Section~\ref{subsec:normalDPM}.  Throughout the document, we will write that $a(n) \sim b(n)$ as $n\to \infty$ to indicate that $\lim_{n\to\infty} a(n)/b(n) = 1$. While $\sim$  indicates ``is distributed as'' in the main document, here we employ such a slight abuse of notation to ease the readability of the statements below.

\section{Preliminary lemmas}\label{sec:lemmas}
This Section reports some preliminary lemmas that will be useful for the proofs presented in Section~\ref{sec:proofs}. We begin by recalling two asymptotic approximations for the gamma function:
\begin{align}
    \Gamma(m) \sim \sqrt{2\pi} e^{-m} m^{m - 1/2},  \qquad &m\to \infty; \label{eq:Gamma_Stirling_approx} \\
    \Gamma(z)  \sim 1/z, \qquad   &z\to 0. \label{eq:Gamma_zero_approx}
\end{align}
Equation~\eqref{eq:Gamma_Stirling_approx} is the famous Stirling approximation \citepSupp[][equation 6.1.37, p257]{abramowitz_stegun_1972}. Equation~\eqref{eq:Gamma_zero_approx} instead comes from taking the limit for $z\to 0$ to the product formula for the gamma function, since $\lim_{z\to0}\Gamma(z) = \lim_{z\to0}\Gamma(z+1)/z = \lim_{z\to0}1/z$. Then, the following lemmas hold.

\begin{lemma}\label{lem:exponential_limit}
For any $a \in \mathds{R}$, we have
$$\lim_{m\to\infty} \left(\frac{a}{\log{m}} + m\right)^{\frac{a}{\log{m}}} = e^a$$
\end{lemma}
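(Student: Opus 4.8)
The plan is to take logarithms and reduce the claim to an elementary limit. First I would note that for $m$ large enough the base $a/\log m + m$ is strictly positive: since $a/\log m \to 0$ while $m \to \infty$, the quantity $a/\log m + m$ is eventually positive regardless of the sign of $a$, so the power is well defined along the limit. Writing $L_m = (a/\log m + m)^{a/\log m}$, it suffices to show $\log L_m \to a$, where $\log L_m = \frac{a}{\log m}\,\log\!\bigl(a/\log m + m\bigr)$, and then conclude by continuity of the exponential.

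The key step is to factor $m$ out of the argument of the logarithm, writing $a/\log m + m = m\bigl(1 + a/(m\log m)\bigr)$. This gives $\log\!\bigl(a/\log m + m\bigr) = \log m + \log\!\bigl(1 + a/(m\log m)\bigr)$. Substituting back, I obtain $\log L_m = a + \frac{a}{\log m}\,\log\!\bigl(1 + a/(m\log m)\bigr)$, so the leading term is already exactly the target constant $a$.

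It then remains to check that the residual term vanishes. Since $a/(m\log m) \to 0$, the inner logarithm tends to $0$; multiplying it by the factor $a/\log m$, which also tends to $0$, forces the whole product to $0$. Hence $\log L_m \to a$, and exponentiating yields $L_m \to e^a$, which is the assertion of the Lemma.

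There is no substantive obstacle here; the statement is an asymptotic identity whose proof is routine once the factorization is in place. The only points deserving a moment's care are confirming the base is eventually positive when $a<0$ (so that the expression is meaningful as $m$ grows) and verifying that the error term genuinely tends to $0$ rather than merely remaining bounded. Both follow immediately from $a/\log m \to 0$ and $a/(m\log m) \to 0$.
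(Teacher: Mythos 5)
Your proof is correct and takes essentially the same approach as the paper: the paper substitutes $x=\log m$ and collects the factor $e^x=m$ out of the base, which is exactly your factorization $a/\log m + m = m\bigl(1+a/(m\log m)\bigr)$, with your logarithmic bookkeeping making explicit the paper's implicit claim that the residual factor tends to $1$.
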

\begin{proof}
By letting $x = \log{m}$ and collecting the term $e^x$, the limit simplifies as $$\lim_{x\to\infty} e^a (ae^{-x}/x + 1)^{a/x} = e^a.$$
\end{proof}

\begin{lemma}\label{lem:PoissonLaplace}
For any $x, z > 0$, we have 
$$
\lim_{m\to\infty} \frac{(xz/\log{m})_m}{(x/\log{m})_m} = ze^{zx - x}
$$
\end{lemma}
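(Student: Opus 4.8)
The plan is to rewrite the ratio of ascending factorials through the identity $(a)_m = \Gamma(a+m)/\Gamma(a)$ and then to factor the result into a piece governed by the small-argument asymptotic~\eqref{eq:Gamma_zero_approx} and a piece governed by Stirling's formula~\eqref{eq:Gamma_Stirling_approx}. Writing $a_m = x/\log{m}$ and $c_m = xz/\log{m}$, both of which tend to $0$ as $m\to\infty$, I would first record
$$
\frac{(c_m)_m}{(a_m)_m} = \frac{\Gamma(c_m+m)\,\Gamma(a_m)}{\Gamma(a_m+m)\,\Gamma(c_m)} = \frac{\Gamma(a_m)}{\Gamma(c_m)}\cdot\frac{\Gamma(c_m+m)}{\Gamma(a_m+m)}.
$$
The first factor is the easy one: since $a_m, c_m \to 0$, equation~\eqref{eq:Gamma_zero_approx} gives $\Gamma(a_m)\sim 1/a_m = \log{m}/x$ and $\Gamma(c_m)\sim 1/c_m = \log{m}/(xz)$, so, using that $c_m/a_m = z$ exactly, their ratio tends to $z$.

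The bulk of the work is the second factor, where I would apply Stirling's approximation~\eqref{eq:Gamma_Stirling_approx} to both $\Gamma(c_m+m)$ and $\Gamma(a_m+m)$ (legitimate since $c_m+m, a_m+m \to \infty$). The common prefactors $\sqrt{2\pi}$ and $e^{-m}$ cancel, leaving
$$
\frac{\Gamma(c_m+m)}{\Gamma(a_m+m)} \sim e^{-(c_m-a_m)}\,(c_m+m)^{c_m}\,(a_m+m)^{-a_m}\left(\frac{c_m+m}{a_m+m}\right)^{m-1/2}.
$$
Here $c_m - a_m = x(z-1)/\log{m}\to 0$, so the exponential prefactor tends to $1$, while Lemma~\ref{lem:exponential_limit} applied with $a=xz$ and $a=x$ respectively gives $(c_m+m)^{c_m}\to e^{xz}$ and $(a_m+m)^{-a_m}\to e^{-x}$. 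Multiplying the three limits yields $e^{xz-x}$ for the second factor once the final power is controlled, and combining with the first factor $z$ produces $z\,e^{xz-x}$, as claimed.

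The main obstacle is the delicate term $\big((c_m+m)/(a_m+m)\big)^{m-1/2}$, where an exponent of order $m$ acts on a base tending to $1$ and could in principle contribute a nontrivial factor. The key is to write the base as $1+\delta_m$ with $\delta_m = (c_m-a_m)/(a_m+m) = x(z-1)/(m\log{m}+x)$ and observe that $(m-1/2)\log(1+\delta_m)\sim (m-1/2)\delta_m \sim x(z-1)/\log{m}\to 0$; since $\delta_m$ decays like $1/(m\log{m})$, faster than $1/m$, even the $m$-fold power vanishes in the limit and the Stirling exponent $-1/2$ is immaterial. A secondary point of care is to justify that the asymptotic equivalences $\sim$ may be multiplied and divided, which is valid here because each factor in the decomposition converges individually to a finite nonzero limit.
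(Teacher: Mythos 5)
Your proof is correct and follows essentially the same route as the paper's: the same split of the ratio into $\Gamma(a_m)/\Gamma(c_m)$ (handled by the small-argument asymptotic $\Gamma(w)\sim 1/w$) and $\Gamma(c_m+m)/\Gamma(a_m+m)$ (handled by Stirling's formula together with Lemma~\ref{lem:exponential_limit}). The only difference is that you explicitly verify that $\bigl((c_m+m)/(a_m+m)\bigr)^{m-1/2}\to 1$ via the expansion $1+\delta_m$ with $\delta_m$ of order $1/(m\log m)$, a step the paper's proof drops without comment, so your write-up is if anything slightly more complete.
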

\begin{proof}
Recall that the ascending factorial can be defined as $(x)_a = \Gamma(x + a)/\Gamma(x)$. We can then rewrite the limit as 
$$
\lim_{m\to\infty} \frac{(xz/\log{m})_m}{(x/\log{m})_m} = \lim_{m\to \infty} \frac{\Gamma(x/\log{m})}{\Gamma(xz/\log{m})} \times \frac{\Gamma(xz/\log{m} + m)}{\Gamma(x/\log{m} + m)}.
$$
We study each fraction separately. By relying on the approximation in equation~\eqref{eq:Gamma_zero_approx}, we have  
\begin{equation}\label{eq:gamma_over_log}
\Gamma\left(\frac{a}{\log{m}}\right) \sim \frac{\log{m}}{a}, \qquad m \to \infty,
\end{equation}
for any $a>0$. Thus, the limit of the first fraction is equal to
$$
\lim_{m\to \infty} \frac{\Gamma(x/\log{m})}{\Gamma(xz/\log{m})} = \lim_{m\to \infty} \frac{\log{m}}{x} \frac{xz}{\log{m}} = z.
$$
From equation~\eqref{eq:Gamma_Stirling_approx}, we can also write that 
\begin{equation}\label{eq:gamma_over_log_m}
\Gamma\left(\frac{a}{\log{m}}  + m\right) \sim \sqrt{2\pi} e^{-\frac{a}{\log{m}}-m} \left(\frac{a}{\log{m}} + m\right)^{\frac{a}{\log{m}} + m - \frac12}, \qquad m \to \infty,
\end{equation}
for any $a>0$. But then, thanks to the result in Lemma~\ref{lem:exponential_limit}, the second fraction simplifies as
\begin{align*}
\lim_{m\to \infty} \frac{\Gamma(xz/\log{m} + m)}{\Gamma(x/\log{m} + m)} &= \lim_{m\to \infty} e^{\frac{-xz + x}{\log{m}}}\left(\frac{xz}{\log{m}} + m\right)^{\frac{xz}{\log{m}} + m - \frac{1}{2}}\left(\frac{x}{\log{m}} + m\right)^{-\frac{x}{\log{m}} - m + \frac{1}{2}} \\
&= \lim_{m\to \infty} \left(\frac{xz}{\log{m}} + m\right)^{\frac{xz}{\log{m}}}\left(\frac{x}{\log{m}} + m\right)^{-\frac{x}{\log{m}}} \left(\frac{xz + m\log{m}}{x + m\log{m}}\right)^{m - \frac{1}{2}} \\
&= \lim_{m\to \infty} \left(\frac{xz}{\log{m}} + m\right)^{\frac{xz}{\log{m}}}\left(\frac{x}{\log{m}} + m\right)^{-\frac{x}{\log{m}}}\\
&= e^{zx - x}
\end{align*}
This completes the proof.
\end{proof}

Lemma~\ref{lem:PoissonLaplace} will be useful when proving the convergence of $K_m$ to the negative binomial distribution under the Stirling-gamma process. The next two lemmas characterize the asymptotic behavior of the normalizing constant of the Stirling-gamma distribution.

\begin{lemma}\label{lem:LogEppfApprox}
The following asymptotic approximation holds for any $x > 0$:
$$
\frac{1}{(\log{m})^a}\frac{x^{a-1}}{\{(x/\log{m})_m\}^b} \sim g(m,a, b) x^{a-b -1} e^{-bx}, \qquad m\to \infty,
$$
where $g(m,a, b) = (2\pi)^{-b/2}(\log{m})^{b-a} e^{bm} m^{-bm + b/2}$.
\end{lemma}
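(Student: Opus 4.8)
The plan is to expand the ascending factorial as a ratio of gamma functions and then consolidate the two asymptotic approximations already recorded, keeping careful track of every constant. Writing $(x/\log m)_m = \Gamma(x/\log m + m)/\Gamma(x/\log m)$, I would first use equation~\eqref{eq:gamma_over_log} with $a = x$ to obtain $\Gamma(x/\log m) \sim (\log m)/x$. For the numerator I would apply the Stirling approximation~\eqref{eq:Gamma_Stirling_approx} directly at the argument $z = x/\log m + m \to \infty$, which gives
$$
\Gamma(x/\log m + m) \sim \sqrt{2\pi}\, e^{-x/\log m - m}\, (x/\log m + m)^{x/\log m + m - 1/2}, \qquad m \to \infty.
$$
The crucial difference from the proof of Lemma~\ref{lem:PoissonLaplace} is that here there is no ratio of two gamma values evaluated near $+m$, so the factor $e^{-m}$ does not cancel; it is precisely this term that will later produce the $e^{bm}$ appearing in $g(m,a,b)$, and retaining the full Stirling form is what keeps the bookkeeping correct.

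The second step is to reduce the exponentiated term. I would split
$$
(x/\log m + m)^{x/\log m + m - 1/2} = (x/\log m + m)^{x/\log m}\, (x/\log m + m)^{m - 1/2},
$$
where the first factor converges to $e^x$ by Lemma~\ref{lem:exponential_limit} (taking $a = x$). For the second I would write $(x/\log m + m)^{m-1/2} = m^{m-1/2}\,(1 + x/(m\log m))^{m-1/2}$ and verify that the correction factor tends to $1$, since its logarithm is $(m - 1/2)\log(1 + x/(m\log m)) \sim x/\log m \to 0$; likewise $e^{-x/\log m} \to 1$. Assembling these pieces yields
$$
(x/\log m)_m \sim \frac{\sqrt{2\pi}\, x\, e^{x - m}\, m^{m - 1/2}}{\log m}, \qquad m \to \infty.
$$

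Finally I would raise this equivalence to the power $b$, substitute into $\frac{1}{(\log m)^a}\frac{x^{a-1}}{\{(x/\log m)_m\}^b}$, and collect terms: the powers of $x$ combine to $x^{a-b-1}$, the exponentials to $e^{-bx}e^{bm}$, the powers of $\log m$ to $(\log m)^{b-a}$, the powers of $m$ to $m^{-bm + b/2}$, and the numerical constant to $(2\pi)^{-b/2}$, reproducing exactly $g(m,a,b)\,x^{a-b-1}e^{-bx}$. All of these manipulations are legitimate because $\sim$ is preserved under products and under multiplication by factors converging to nonzero constants (here $e^x$ and $1$). The only genuinely delicate points are the retention of the $e^{-m}$ term from Stirling's formula and the limit $(1 + x/(m\log m))^{m-1/2} \to 1$; once these are pinned down, the statement follows by direct consolidation. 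Since the claim is pointwise in $x > 0$, no uniformity in $x$ is needed.
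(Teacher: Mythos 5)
Your proof is correct and follows essentially the same route as the paper's: write the ascending factorial as $\Gamma(x/\log{m} + m)/\Gamma(x/\log{m})$, apply $\Gamma(z) \sim 1/z$ to the denominator piece and the full Stirling formula to the numerator piece, then simplify via Lemma~\ref{lem:exponential_limit} and the fact that $x/\log{m} \to 0$. Your insistence on retaining the $e^{-m}$ factor is exactly the right point of care: the paper's displayed equation~\eqref{eq:gamma_over_log_m} omits it (a typo, harmless in Lemma~\ref{lem:PoissonLaplace} where it cancels in a ratio), but the paper's own proof of this lemma restores it, producing the $e^{bm}$ in $g(m,a,b)$ just as your bookkeeping does.
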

\begin{proof}
This asymptotic behavior follows from equations~\eqref{eq:gamma_over_log} and~\eqref{eq:gamma_over_log_m}. In particular, 
by expressing the ascending factorial in the denominator as a ratio of gamma functions, we have 
\begin{align*}
    \lim_{m\to \infty} &\frac{1}{(\log{m})^a}\frac{x^{a-1}}{\{(x/\log{m})_m\}^b}  \\
    &=\lim_{m\to\infty}  (2\pi)^{-\frac{b}{2}} (\log{m})^{b-a} \ e^{\frac{bx}{\log{m}} + bm} \ \left(\frac{x}{\log{m}} + m\right)^{-\frac{x}{\log{m}} - bm + \frac{b}{2}} x^{a-b-1}\\
    &=\lim_{m \to \infty} (2\pi)^{-\frac{b}{2}}  (\log{m})^{b-a} \ e^{bm} m^{-bm +\frac{b}{2}}  \ x^{a-b-1} e^{-bx},
\end{align*}
where the simplifications follow from $bx/\log{m} \to 0$ and the limit in Lemma~\ref{lem:exponential_limit}.  We complete the proof by calling $g(m,a, b) = (2\pi)^{-b/2}(\log{m})^{b-a} e^{bm} m^{-bm + b/2}$ the part that  depends on $m$, $a$ and $b$ and not on $x$.
\end{proof}

\begin{lemma}\label{lem:NormConst_asymp}
When $a, b >0$ and $1<a/b<m$, the following limit holds for the normalizing constant of a Stirling-gamma distribution:
$$
\lim_{m \to \infty} \frac{g(m,a,b)}{\mathcal{S}_{a, b, m}} = \frac{b ^ {a - b}}{\Gamma(a - b)},
$$
where $g(m, a, b)$ is defined in Lemma~\ref{lem:LogEppfApprox}.
\end{lemma}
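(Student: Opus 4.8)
The plan is to recognize $\mathcal{S}_{a,b,m}/g(m,a,b)$ as an integral that converges, pointwise in the integrand, to the normalizing constant of a $\mathrm{Ga}(a-b,b)$ density. First I would rescale the integration variable in $\mathcal{S}_{a,b,m}=\int_0^\infty \alpha^{a-1}\{(\alpha)_m\}^{-b}\,\mathrm d\alpha$ by setting $x=\alpha\log m$, which turns the integral into
\[
\frac{\mathcal{S}_{a,b,m}}{g(m,a,b)}=\int_0^\infty \psi_m(x)\,\mathrm dx,\qquad \psi_m(x)=\frac{1}{g(m,a,b)(\log m)^a}\frac{x^{a-1}}{\{(x/\log m)_m\}^b}.
\]
Lemma~\ref{lem:LogEppfApprox} supplies precisely the pointwise limit $\psi_m(x)\to x^{a-b-1}e^{-bx}$ for every fixed $x>0$. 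Since $\int_0^\infty x^{a-b-1}e^{-bx}\,\mathrm dx=\Gamma(a-b)/b^{a-b}$, finite because $a-b>0$ under the constraint $a/b>1$, the statement would follow at once from interchanging limit and integral and then taking reciprocals.

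The real work is justifying that interchange, and this is where I expect the main obstacle to lie: the convergence in Lemma~\ref{lem:LogEppfApprox} is not uniform, and the tail of $\psi_m$ genuinely depends on $m$, so a single dominating function is not immediate. I would therefore split the domain at $x=\log m$, equivalently $\alpha=1$. On the central region $x\le\log m$ I would bound $S_m(x)=\sum_{i=1}^{m-1}\log(1+x/(i\log m))$ from below using $\log(1+u)\ge u\log 2$ for $u\in[0,1]$ together with $H_{m-1}=\sum_{i=1}^{m-1}i^{-1}\sim\log m$; this yields $S_m(x)\ge x/2$ for all large $m$. Writing $\psi_m(x)=P_m\,x^{a-b-1}e^{-bS_m(x)}$ with $P_m\to1$, one then gets $\psi_m(x)\mathbf 1_{\{x\le\log m\}}\le 2\,x^{a-b-1}e^{-bx/2}$, an integrable dominating function independent of $m$, and dominated convergence gives $\int_0^{\log m}\psi_m\to\Gamma(a-b)/b^{a-b}$.

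It remains to show the tail contributes nothing, i.e.\ $g(m,a,b)^{-1}\int_1^\infty \alpha^{a-1}\{(\alpha)_m\}^{-b}\,\mathrm d\alpha\to 0$. Here I would exploit that the Stirling-gamma mass escapes to $0$ at rate $1/\log m$, so the region $\alpha\ge1$ is exponentially light. Concretely, I would lower bound $(\alpha)_m$ for $\alpha\ge 1$ by combining the elementary estimate $(\alpha)_r\ge \alpha^r$ for a fixed integer $r>a/b$ with a lower bound on $\prod_{i=r}^{m-1}(1+\alpha/i)$ capturing its $m^{\,\Omega(\alpha)}$ growth; after multiplying by $g(m,a,b)^{-1}\sim (\log m)^{a-b}\{(m-1)!\}^{b}$ the factorial cancels and one is left with a quantity of order $(\log m)^{a-b}m^{-c}\to0$. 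Combining the central and tail estimates gives $\mathcal{S}_{a,b,m}/g(m,a,b)\to \Gamma(a-b)/b^{a-b}$, and taking reciprocals yields the claim. The Stirling approximation~\eqref{eq:Gamma_Stirling_approx} enters only through $\{(m-1)!\}^b\sim (2\pi)^{b/2}e^{-bm}m^{bm-b/2}$, which is what makes both the prefactor $P_m$ tend to $1$ and the product $g(m,a,b)^{-1}\{(m-1)!\}^{-b}$ collapse to $(\log m)^{a-b}$.
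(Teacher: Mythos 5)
You reach the result by the same overall route as the paper---rescale $x=\alpha\log m$, feed in the pointwise limit from Lemma~\ref{lem:LogEppfApprox}, and interchange limit and integral---but you justify the interchange differently, and your justification is the more solid one. The paper asserts that the integrands are ``provably'' monotonically decreasing in $m$ and invokes monotone convergence; note, however, that what must be controlled is the \emph{normalized} integrand $\psi_m(x)$ (the integrand divided by $g(m,a,b)$), whose monotonicity in $m$ is not evident, since the raw integrand tends pointwise to zero and monotone convergence applied to it would only give $\mathcal{S}_{a,b,m}\to 0$, not the ratio limit. Your argument instead splits the domain at $\alpha=1$: on the central region the inequality $\log(1+u)\ge u\log 2$ for $u\in[0,1]$ together with $H_{m-1}\sim\log m$ gives $S_m(x)\ge x/2$ for all $x\le\log m$ and large $m$, hence the $m$-free dominating function $2x^{a-b-1}e^{-bx/2}$ (integrable precisely because $a>b$) and dominated convergence applies; on the tail, writing $(\alpha)_m=(\alpha)_r\prod_{i=r}^{m-1}(\alpha+i)$ with a fixed integer $r>a/b$, and using $(\alpha)_r\ge\alpha^r$ together with $\prod_{i=r}^{m-1}(1+\alpha/i)\ge m/r$ for $\alpha\ge1$, yields a contribution of order $(\log m)^{a-b}m^{-b}\to0$, with the $\alpha$-integral finite because $a-1-rb<-1$. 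Both halves check out (your prefactor $P_m\to1$ follows from Stirling's formula, exactly as in the asymptotics of $g$), so your proof is complete and, unlike the paper's, does not rest on an unverified monotonicity claim; the price is a longer, two-region argument in place of a one-line appeal to monotone convergence.
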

\begin{proof}
The proof is a direct consequence of Lemma~\ref{lem:LogEppfApprox} and of the monotone convergence theorem. Consider the change of variable $\alpha = x/\log{m}$. Then, the normalizing constant can be rewritten as
$$
\mathcal{S}_{a,b,m} = \int_{\mathds{R}_+} \frac{\alpha^{a-1}}{\{(\alpha)_m\}^b}\textrm{d}\alpha = \int_{\mathds{R}_+} \frac{1}{(\log{m})^a}\frac{x^{a-1}}{\{(x/\log{m})_m\}^b} \text{d}x.
$$
Provably, both integrands are monotonically decreasing in $m$. By monotone convergence theorem, this ensures that the limit and the integral can be interchanged. Invoking the approximation of Lemma~\ref{lem:LogEppfApprox}, we have
\begin{align*}
    \lim_{m\to\infty} \frac{g(m,a, b)}{\mathcal{S}_{a,b,m}} = \lim_{m\to\infty} \frac{g(m, a, b)}{g(m,a, b)\int_{\mathds{R_+}} x^{a-b-1}e^{-bx}\text{d}x} = \frac{b^{a-b}}{\Gamma(a-b)}.
\end{align*}
Notice that $\mathcal{S}_{a,b, m} <\infty$ when $1<a/b<m$, as we show in Proposition~\ref{pro:NormConst_finite} below. 
\end{proof}

We conclude the section by providing an expression for the Laplace transform of the distribution for $K_m$ conditional on $\alpha$. 
\begin{lemma}\label{lem:laplace}
Let $\theta_1, \ldots, \theta_m$ be a sample from a Dirichlet process with precision parameter $\alpha$.
The conditional Laplace transform of the number of clusters $K_m$ is 
$$\mathds{E}(e^{-tK_m}\mid \alpha) = \frac{(\alpha e^{-t})_m}{(\alpha)_m}, \qquad t \geq 0.$$
\end{lemma}
\begin{proof}
The result follows directly from the relationship between the ascending factorial and the signless Stirling numbers of the first kind detailed in \citetSupp{Charalambides_2005}:
\begin{align*}
\mathds{E}(e^{-tK_m}\mid \alpha) = \sum_{k = 1}^m e^{-tk} \frac{\alpha^k}{(\alpha)_m}|s(m, k)| = \frac{1}{(\alpha)_m}\sum_{k = 1}^n (\alpha e^{-t})^k|s(m, k)| = \frac{(\alpha e^{-t})_m}{(\alpha)_m},
\end{align*}
for any $t > 0.$
\end{proof}
\section{Main proofs}\label{sec:proofs}
\subsection{Proof of Proposition~\ref{pro:limit_gamma}}
\begin{proof}
To prove the convergence in distribution, it is sufficient to show that the Laplace transform of the quantity $\alpha \log{m}$ converges to that of a gamma distribution. In particular, for any $t >0$ we have
$$
\mathds{E}(e^{-t \alpha \log{m}}) = \int_{\mathds{R}_+} \frac{1}{\Vabm}e^{-t \alpha \log{m}} \frac{\alpha^{a-1}}{\{(\alpha)_m\}^b}\textrm{d}\alpha = \int_{\mathds{R}_+} \frac{1}{\Vabm}\frac{e^{-t x}}{(\log{m})^a} \frac{x^{a-1}}{\{(x/\log{m})_m\}^{b}}\textrm{d}x,
$$
where the second equality comes from changing the integration variable to $x = \alpha \log{m}$.
By relying on the bounded convergence Theorem, we can interchange the integral and the limit when $m\to\infty$. Thus, from Lemma~\ref{lem:LogEppfApprox} and \ref{lem:NormConst_asymp}, we can write that
\begin{align*}
    \lim_{m\to\infty} \mathds{E}(e^{-t\alpha\log{m}}) &= \lim_{m\to\infty} \frac{g(m,a, b)}{\Vabm} \int_{\mathds{R}_+} x^{a-b-1} e^{- (b+t)x}\text{d}x \\
    &= \frac{b^{a-b}}{\Gamma(a-b)}\int_{\mathds{R}_+} x^{a-b-1} e^{- (b+t)x}\text{d}x  = \left(\frac{b}{b+t}\right)^{a-b},
\end{align*}
which is the Laplace transform of a $\mathrm{Ga}(a-b, b)$ random variable.
\end{proof}

\subsection{Proof of Proposition~\ref{pro:moments}}
\begin{proof}
By definition of expected value, we have 
$$
\mathds{E}(\alpha^s) = \frac{1}{\mathcal{S}_{a,b,m}} \int_{\mathds{R}_+} \frac{\alpha^{a + s - 1}}{\{(\alpha)_m\}^b}\text{d}\alpha = \frac{\mathcal{S}_{a+s,b,m}}{\mathcal{S}_{a,b,m}},
$$
where the integral at the numerator is finite if and only if $0 <s < mb-a$. See  Proposition~\ref{pro:NormConst_finite} for a proof.
\end{proof}

\subsection{Proof of Theorem~\ref{theo:EPPF_SGP}}
\begin{proof}
Let $\alpha \sim \mathrm{Sg}(a, b,m)$ in equation~(1). Then, 
\begin{align*}
    \mathds{P}(\Pi_n = \{C_1, \ldots, C_k\}) = \frac{1}{\mathcal{S}_{a,b,m}} \left\{\int_{\mathds{R}_+} \frac{\alpha^{a + k - 1}}{\{(\alpha)_m\}^b(\alpha)_n}\text{d}\alpha\right\}\prod_{j = 1}^{k} (n_j - 1)!
\end{align*}
Calling $\mathscr{V}_{a, b, m}(n, k) =\int_{\mathds{R}_+} \alpha^{a + k - 1}/[\{(\alpha)_m\}^b(\alpha)_n]^{-1}\text{d}\alpha$ and $\mathscr{V}_{a, b, m}(1, 1) =\mathcal{S}_{a,b,m}$ completes the proof.
\end{proof}

\subsection{Proof of Theorem~\ref{theo:MarginalKm}}
\begin{proof}
We begin by showing the shape of the probability mass function for $K_m$. This follows directly from the formula in \citetSupp{antoniak1974}. In particular, 
\begin{align*}
\mathds{P}(K_m = k) &= \int_{\mathds{R}_+} \mathds{P}(K_m = k\mid \alpha) \pi(\alpha)\text{d}\alpha = \frac{1}{\Vabm}\left[\int_{\mathds{R}_+} \frac{\alpha^{a + k - 1}}{\{(\alpha)_m\}^b(\alpha)_m}\text{d}\alpha\right]|s(m, k)| \\
&= \frac{\mathscr{V}_{a, b, m}(m, k)}{\mathscr{V}_{a, b, m}(1, 1)}|s(m, k)|,
\end{align*}
for any $k = 1, \ldots, m$. We now provide a formula for the mean and the variance of the distribution above. First, recall that the expected value and the variance of $K_m$ conditional on $\alpha$ are
\begin{align}
\mathds{E}(K_m\mid \alpha) &=  \alpha\{\psi(\alpha + m)- \psi(\alpha)\} \label{eq:EKn_alpha} \\
\mathrm{var}(K_m \mid \alpha) &=  \alpha\{\psi(\alpha + m)- \psi(\alpha)\} +  \alpha^2\{\psi'(\alpha + m)- \psi'(\alpha)\} \label{eq:varKn_alpha},
\end{align}
where $\psi(x) = \Gamma'(x)/\Gamma(x)$ is the digamma function and $\psi'(x)$ is its derivative, called trigamma function. The property that $\mathds{E}(K_m) = a/b$ follows immediately from \citetSupp{Diaconis_Yilvisaker_1979}: writing $\eta = \log{\alpha}$, we have $$\mathds{P}(K_m = k \mid \eta) \propto \exp\{k\eta  - \mathcal{K}(\eta,m)\},$$ with $\mathcal{K}(\eta,m) = \log \Gamma(e^\eta + m) - \log\Gamma(e^\eta)$. Thus, the associated conjugate prior in the natural parametrization is $p(\eta) \propto \exp\{\tau_0 k_0 \eta - \tau_0 \mathcal{K}(\eta,m)\}$. Moreover, we have that $\text{d} \mathcal{K}(\eta,n)/\text{d}\eta = e^\eta\{\psi(e^\eta + n) - \psi(e^\eta)\}$.
From Theorem 2 in~\citetSupp{Diaconis_Yilvisaker_1979}, we have
$$
\mathds{E}(K_m) = \mathds{E}\{\mathds{E}(K_m \mid \eta)\} =  E\left\{\frac{\text{d}}{\text{d}\eta}\mathcal{K}(\eta,m)\right\}= E[e^\eta\{\psi(e^\eta + m) - \psi(e^\eta)\}] = k_0.
$$
Substituting again $\alpha = e^\eta$ and calling $k_0 = a/b$ and $\tau_0 = b$ in the above proves the statement.

To prove the expression for the variance, we rely on the law of iterated variances, that is 
$$\mathrm{var}(K_m) = \mathds{E}\{\mathrm{var}(K_m\mid \alpha)\}+ \mathrm{var}\{\mathds{E}(K_m\mid \alpha)\}.$$
Both terms can be expressed as a function of the quantity 
$\mathcal{D}_{a, b, m} = E[\alpha^2\{\psi'(\alpha) - \psi'(\alpha +m)\}] = \mathds{E}\{\sum_{i = 0}^{m - 1}\alpha^2/(\alpha + i)\}$. 
To simplify the expression, we rely on integration by parts. Consider the following functions:
$$I(\alpha) = \frac{\alpha^{a + 1}}{\{(\alpha)_m\}^b},\qquad M(\alpha) = \psi(\alpha + m) - \psi(\alpha),$$
whose derivatives with respect to $\alpha$ are equal to
$$
I'(\alpha) = \frac{(a+1)}{\alpha}I(\alpha) - bM(\alpha)I(\alpha), \qquad M'(\alpha) = \psi'(\alpha + m) - \psi'(\alpha).
$$
Then, the integral simplifies as 
\begin{align*}
\mathcal{D}_{a, b, m} &= -\frac{1}{\mathcal{S}_{a, b, m}}\int_{\mathds{R}_+} M'(\alpha)I(\alpha)\text{d}\alpha \\
&= -\frac{1}{\mathcal{S}_{a, b, m}}\left|M(\alpha)I(\alpha)\right|_{\alpha = 0}^{\infty} - \frac{b}{\mathcal{S}_{a, b, m}} \int_{\mathds{R}_+} M(\alpha)^2I(\alpha)\text{d}\alpha + \frac{a+1}{\mathcal{S}_{a, b, m}}\int_{\mathds{R}_+} \frac{M(\alpha)}{\alpha}I(\alpha)\text{d}\alpha \\
& = -b \mathds{E}\{\alpha^2M(\alpha)^2\} + (a+1)\mathds{E}\{\alpha M(\alpha)\}  \\
& = - b \mathds{E}\{\alpha^2M(\alpha)^2\} + \frac{a(a+1)}{b} \\
&= - b\left[ \mathds{E}\{\alpha^2M(\alpha)^2\}   - \frac{a^2}{b^2}\right] + \frac{a}{b}.
\end{align*}
%
%
Moreover, from equation~\eqref{eq:EKn_alpha}, we can write 
\begin{align*}
    \mathrm{var}\{\mathds{E}(K_m\mid \alpha)\} &= \mathrm{var}\{\alpha M(\alpha)\} = \mathds{E}\{\alpha^2M(\alpha)^2\} - [\mathds{E}\{\alpha M(\alpha)\}]^2 \\
    &= \mathds{E}\{\alpha^2M(\alpha)^2\} - \frac{a^2}{b^2} \\
    &= \frac{a}{b^2} - \frac{\mathcal{D}_{a, b, m}}{b},
\end{align*}
where the last equality comes from plugging in the new expression for $\mathcal{D}_{a, b, m}$. Finally, from equation~\eqref{eq:varKn_alpha}, we also have
\begin{align*}
    \mathds{E}\{\mathrm{var}(K_m\mid \alpha)\} = \mathds{E}\{\alpha M(\alpha)\} + \mathds{E}\{\alpha^2 M'(\alpha)\} = \frac{a}{b} - \mathcal{D}_{a,b,m}.
\end{align*}
Combining the last two equalities in the law of iterated variance proves the result. 
\end{proof}

\subsection{Proof of Theorem~\ref{theo:NegBin}}
\begin{proof}
The proof of convergence in distribution to the negative binomial relies on Lemmas~\ref{lem:PoissonLaplace}, \ref{lem:NormConst_asymp} and \ref{lem:laplace}. Substituting $\alpha = x/\log{m}$ in the integral, the marginal Laplace transform of $K_m$ is
\begin{align*}
\mathds{E}(e^{-t K_m}) &= \mathds{E}\{\mathds{E}(e^{-t K_m}\mid \alpha)\} = \frac{1}{\mathcal{S}_{a, b, m}} \int_{\mathds{R}_+}\frac{(\alpha e^{-t})_m}{(\alpha)_m} \frac{\alpha^{a-1}}{\{(\alpha)_m\}^b}\text{d}\alpha  \\
&= \frac{1}{\mathcal{S}_{a, b, m}} \int_{\mathds{R}_+}\frac{(x e^{-t}/\log{m})_m}{(x/\log{m})_m}\frac{1}{(\log{m})^a} \frac{x^{a-1}}{\{(x/\log{m})_m\}^b}\text{d}x
\end{align*}
Then, the limit is
\begin{align*}
    \lim_{m\to\infty} \mathds{E}(e^{-t K_m}) &= \lim_{m\to\infty} \frac{1}{\mathcal{S}_{a, b, m}} \int_{\mathds{R}_+}\frac{(x e^{-t}/\log{m})_m}{(x/\log{m})_m}\frac{1}{(\log{m})^a} \frac{x^{a-1}}{\{(x/\log{m})_m\}^b}\text{d}x \\
     &= \lim_{m\to\infty} \frac{g(a,b,m)}{\mathcal{S}_{a,b,m}}\int_{\mathds{R}_+} e^{e^{-t}x - x -t} x^{a-b-1}e^{-bx}\text{d}x\\
     &= \frac{e^{-t}b^{a-b}}{\Gamma(a-b)}\int_{\mathds{R}_+} x^{a-b-1}e^{-(1+ b - e^{-t})x} \text{d}x \\
     &= e^{-t}\left(\frac{b}{1+b - e^{-t}}\right)^{a-b},
\end{align*}
which is the Laplace transform of $1 + \mathrm{Negbin}(a-b, b/(b + 1) )$, whose probability mass function is 
$$
\mathds{P}(K_\infty = k) = \frac{\Gamma(a-b + k -1)}{(k-1)!\Gamma(a-b)} \left(\frac{1}{b+1}\right)^{k-1}\left(\frac{b}{b+1}\right)^{a-b}, \quad k = 1, 2, \ldots
$$
The limit and integral can be interchanged thanks to the bounded convergence theorem, since the Laplace transform is always bounded by 1. 
\end{proof}

\subsection{Proof of Proposition~\ref{pro:poisson}}
\begin{proof}
The proof follows directly from Lemma~\ref{lem:PoissonLaplace} and Lemma~\ref{lem:laplace}. In particular, substituting $\alpha = \lambda/\log{m}$ in the limit, we have
$$
\lim_{m\to\infty} \mathds{E}(e^{-t K_m}) = \lim_{m\to\infty}\frac{(\alpha e^{-t})_m}{(\alpha)_m} = \lim_{m\to\infty}\frac{(\lambda e^{-t}/\log{m})_m}{(\lambda/\log{m})_m} = e^{e^{-t}\lambda - \lambda - t},
$$
which is the Laplace transform of $1 + \text{Po}(\lambda)$.
\end{proof}

\subsection{Proof of Proposition~\ref{pro:conjugacy}}
\begin{proof}
The statement follows trivially from Bayes theorem:
$$
\pi(\alpha\mid \Pi_{n} = \{C_1, \ldots, C_k\}) \propto \pi(\alpha)\ \mathds{P}(\Pi_{n} = \{C_1, \ldots, C_k\} \mid \alpha) \propto \frac{\alpha^{a-1}}{\{(\alpha)_n\}^b}\frac{\alpha^{k}}{(\alpha)_n}.
$$
Therefore, $(\alpha\mid \Pi_{n} = \{C_1, \ldots, C_k\}) \sim \mathrm{Sg}(a + k, b + 1, n)$. Notice that if $a, b>0$ and $1< a/b < n$, then also $1 < (a + k)/(b+1) < n$, since $1 \leq k \leq n$. Thus, a proper prior implies automatically a proper posterior.
\end{proof}

\subsection{Proof of Theorem~\ref{theo:Conjugacy}}
\begin{proof}
The proof is similar to the one of Proposition 4. In particular, we have that 
\begin{align*}
\pi(\alpha\mid {\Pi}_{n,1}, \ldots, \Pi_{n, N}) &\propto \pi(\alpha)\prod_{s = 1}^N \mathds{P}(\Pi_{n, s} = \{C_{1, s}, \ldots, C_{k_s, s}\} \mid \alpha) \propto \frac{\alpha^{a-1}}{\{(\alpha)_n\}^b}\frac{\alpha^{\sum_{s = 1}^N k_s}}{\{(\alpha)_n\}^N},
\end{align*}
which proves the statement.
\end{proof}

\subsection{Proof of Proposition~\ref{pro:PosteriorAlpha}}
\begin{proof}
The proof naturally follows from equation~(2.10) in \citetSupp{Diaconis_Yilvisaker_1979}. Alternatively, we can derive the same result via Theorem 4, integrating over $\alpha\sim \mathrm{Sg}(a + N\bar{k}, b + N, n)$.
\end{proof}

\subsection{Proof of Proposition~\ref{pro:BetaPrime_ratio}}
\begin{proof}
The density of the generalized beta prime distribution  $\alpha \sim \mathrm{BeP}(a_0, b_0, r)$ is
$$
\pi_{\mathrm{BeP}}(\alpha) = \frac{(\alpha/r)^{a_0 - 1}(1 + \alpha/r)^{-a_0 - b_0}}{r\beta(a_0, b_0)}
$$
with $\alpha > 0$ and $\beta(a_0, b_0) = \Gamma(a_0)\Gamma(b_0)/\Gamma(a_0 + b_0)$ denoting the Beta function. Let $r(\alpha)$ be the function of $\alpha\geq 0$ defined as
$$
r(\alpha) = \left\{\prod_{i = 1}^{m-1}(\alpha + i)\right\}^{1/(m-1)} - \alpha.
$$
It is easy to see that $r(\alpha)$ is a monotonically increasing function of $\alpha$ whose minimum value is $r = r(0) = \Gamma(m)^{1/(m-1)}$. Since $(\alpha + r(\alpha))^{m-1} =  \prod_{i = 1}^{m-1}(\alpha + i)$, then
$$
(\alpha + r)^{m-1} \leq \prod_{i = 1}^{m-1}(\alpha + i), \qquad r = \Gamma(m)^{1/(m-1)},
$$
for every $\alpha >0$. Then, we always have that
$$
\frac{\alpha^{a-b-1}}{\prod_{i = 1}^{m-1} (\alpha + i)^b} \leq \frac{\alpha^{a-b-1}}{(\alpha + r)^{b(m-1)}} = r^{a - mb - 1} \frac{(\alpha/r)^{a-b-1}}{(1 + \alpha/r)^{b(m-1)}}.
$$
Hence, when $\alpha \sim \mathrm{BeP}(a-b, mb -a, r)$, with $r = \Gamma(m)^{1/(m-1)}$, we can write
\begin{align*}
\frac{\pi_\mathrm{Sg}(\alpha)}{\pi_\mathrm{BeP}(\alpha)} &= \frac{r\beta(a-b, mb-a)}{\mathcal{S}_{a, b,m}} \frac{\alpha^{a-b-1}}{\prod_{i = 1}^{m-1}(\alpha+i)^b} \frac{(1 + \alpha/r)^{b(m-1)}}{(\alpha/r)^{a-b-1}}  \\
& =\frac{(\alpha + r)^{b(m-1)}}{(\alpha+1)^{b}_{m-1}} \frac{\beta(a-b, mb-a)}{r^{mb-a}\mathcal{S}_{a,b,m}} \\
&\leq \frac{\beta(a-b, mb-a)}{r^{mb-a}\mathcal{S}_{a,b,m}}  = M < \infty,
\end{align*}
where the inequality follows after writing $\prod_{i =1}^{m-1}(\alpha + i) = (\alpha + 1)_{m-1}$. 
\end{proof}

\section{Additional results}\label{sec:additional_res}
\subsection{Finiteness of the normalizing constant}
\begin{proposition}\label{pro:NormConst_finite}
The normalizing constant of a Stirling-Gamma distribution, namely
$$\mathcal{S}_{a, b, m} = \int_{\mathds{R}_+}\frac{\alpha^{a-1}}{\{(\alpha)_m\}^b}\mathrm{d}\alpha,$$
is finite if and only if $a, b > 0$ and $m \in \mathds{N}$ with $1< a/b < m$.
\end{proposition}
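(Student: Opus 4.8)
The plan is to treat $\mathcal{S}_{a,b,m}$ as an improper integral and to determine its convergence purely from the behavior of the integrand at the two endpoints $\alpha \to 0^+$ and $\alpha \to \infty$. Indeed, on any compact subinterval $[\varepsilon, R] \subset (0,\infty)$ the integrand $\alpha^{a-1}/\{(\alpha)_m\}^b$ is continuous and positive, hence integrable, so writing $\mathcal{S}_{a,b,m} = \int_0^1 \!+ \int_1^\infty$, finiteness of the whole integral is equivalent to finiteness of each of the two pieces separately.

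First I would analyze the endpoint $\alpha \to 0^+$. Factoring the ascending factorial as $(\alpha)_m = \alpha \prod_{i=1}^{m-1}(\alpha + i)$, the product converges to $(m-1)!$ as $\alpha \to 0$, so that $\{(\alpha)_m\}^b \sim \{(m-1)!\}^b\,\alpha^b$ and the integrand is asymptotically equivalent to $\{(m-1)!\}^{-b}\,\alpha^{a-b-1}$. By the limit comparison test against $\alpha^{a-b-1}$, together with the elementary fact that $\int_0^1 \alpha^p\,\mathrm{d}\alpha < \infty$ if and only if $p > -1$, the piece $\int_0^1$ is finite precisely when $a - b - 1 > -1$, i.e.\ when $a > b$.

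Next I would analyze $\alpha \to \infty$. Since each factor satisfies $\alpha + i \sim \alpha$, we have $(\alpha)_m \sim \alpha^m$, hence $\{(\alpha)_m\}^b \sim \alpha^{mb}$, and the integrand is asymptotically equivalent to $\alpha^{a-1-mb}$. By limit comparison and the fact that $\int_1^\infty \alpha^q\,\mathrm{d}\alpha < \infty$ if and only if $q < -1$, the piece $\int_1^\infty$ is finite precisely when $a - 1 - mb < -1$, i.e.\ when $a < mb$. Combining the two endpoint conditions gives $b < a < mb$, and dividing through by $b>0$ yields exactly $1 < a/b < m$.

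Finally I would close the only-if direction on the positivity of the parameters. If $b = 0$ the integrand reduces to $\alpha^{a-1}$, which is never integrable over $(1,\infty)$; and if $b < 0$ then $\{(\alpha)_m\}^{-b} \to \infty$ superpolynomially as $\alpha \to \infty$, so the integral again diverges. Hence $b > 0$ is necessary, and once $b>0$ is in force the lower inequality $1 < a/b$ forces $a > b > 0$, so the constraint $a > 0$ is automatic rather than an independent requirement. I do not anticipate a genuine obstacle: the argument is a routine two-sided improper-integral convergence analysis, and the only points requiring care are justifying the limit comparisons rigorously at both endpoints and correctly excluding the degenerate sign cases for $b$.
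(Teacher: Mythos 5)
Your proof is correct and follows essentially the same route as the paper's: split the integral at $\alpha = 1$ and settle each piece by comparison with a power of $\alpha$, obtaining $a > b$ from the endpoint at zero and $a < mb$ from the endpoint at infinity. If anything, your limit-comparison argument at $\alpha \to 0^+$ is slightly tidier than the paper's one-sided bound $\{(\alpha)_m\}^b > \alpha^b$, since the asymptotic equivalence delivers both directions of the ``if and only if'' at that endpoint in one stroke.
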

\begin{proof}
Let $\mathcal{S}_{a, b, m} = \mathcal{S}_{a, b, m}^{0,1} + \mathcal{S}_{a, b, m}^{1, \infty}$, with $\mathcal{S}_{a, b, m}^{\ell,u}=\int_{\ell}^{u} \alpha^{a-1}\{(\alpha)_m\}^{-b}\mathrm{d}\alpha$. To prove the statement, it is sufficient to show that both integrals are finite. For $\mathcal{S}_{a, b, m}^{0,1}$, recall that for $0<\alpha<1$ and $b>0$, we have $\{(\alpha)_m\}^b = \alpha^b\{\prod_{j=1}^{m-1}(\alpha+j)\}^b > \alpha^b$, since $\prod_{j=1}^{m-1}(\alpha+j) \geq (m-1)! \geq 1$. This implies that $\mathcal{S}_{a, b, m}^{0,1} \leq \int_0^1 \alpha^{a-b-1}\text{d}\alpha$, which is finite if and only if $a/b >1$. For $\mathcal{S}_{a, b, m}^{1, \infty}$, instead, recall that $\lim_{\alpha\to\infty}\alpha^m\Gamma(\alpha)/\Gamma(\alpha+m)=1$ for any $m\geq 1$. But then, letting $q_1(\alpha) = \alpha^{a-1}\{(\alpha)_m\}^{-b}$ and $q_2(\alpha) = \alpha^{a-1-mb}$, we have that 
$\lim_{\alpha\to\infty} q_1(x)/q_2(x) =\lim_{\alpha\to\infty} \{\alpha^m\Gamma(\alpha)/\Gamma(\alpha+m)\}^b=1$, which implies that $\mathcal{S}_{a, b, m}^{1, \infty}=\int_1^\infty q_1(\alpha)\text{d}\alpha<\infty$  if and only if $\int_1^\infty q_2(\alpha)\text{d}\alpha<\infty$  by the limit comparison test. The latter integral is finite if and only if $a/b < m$ and $b>0$. Both sides require $b>0$, which in turn implies that also $a>0$. This completes the proof.
\end{proof}

\subsection{Theorem: convergence to a negative binomial via gamma prior}
\begin{theorem}
In the same setting of Theorem 3, let $\alpha \sim \mathrm{Ga}(a-b, b\log{m})$. Then, the following convergence in distribution holds:
$$
K_m \to  K_\infty, \quad K_\infty \sim 1 + \mathrm{Negbin}\left( a - b,\frac{b}{b + 1}\right), \quad m \to \infty. 
$$
\end{theorem}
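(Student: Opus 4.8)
The plan is to reproduce, almost verbatim, the Laplace-transform argument used in the proof of Theorem~\ref{theo:NegBin}, with the intractable Stirling-gamma normalizing constant replaced by the explicit gamma one. Proposition~\ref{pro:limit_gamma} already tells me to expect this outcome: under both the Stirling-gamma and the $\mathrm{Ga}(a-b,b\log m)$ priors the rescaled precision $\alpha\log m$ converges to $\mathrm{Ga}(a-b,b)$, so the induced limits of $K_m$ should coincide. I would start from the tower property together with Lemma~\ref{lem:laplace}, which gives the conditional transform $E(e^{-tK_m}\mid\alpha)=(\alpha e^{-t})_m/(\alpha)_m$, so that $E(e^{-tK_m})=E\{(\alpha e^{-t})_m/(\alpha)_m\}$ for every $t\ge 0$.

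Next I would insert the density $\pi(\alpha)=\{(b\log m)^{a-b}/\Gamma(a-b)\}\,\alpha^{a-b-1}e^{-b\alpha\log m}$ and perform the substitution $x=\alpha\log m$, exactly as in the Stirling-gamma proof. The decisive simplification is that the powers of $\log m$ now cancel \emph{exactly}: the factor $(b\log m)^{a-b}$ from the normalizing constant, the $(\log m)^{-(a-b-1)}$ produced by $\alpha^{a-b-1}$, and the $(\log m)^{-1}$ Jacobian combine to a pure constant, leaving
\[
E(e^{-tK_m}) = \frac{b^{a-b}}{\Gamma(a-b)}\int_{\mathds{R}_+}\frac{(xe^{-t}/\log m)_m}{(x/\log m)_m}\,x^{a-b-1}e^{-bx}\,\mathrm{d}x .
\]
Thus the role that Lemma~\ref{lem:NormConst_asymp} plays in the proof of Theorem~\ref{theo:NegBin} is here trivial, since the gamma normalizing constant is known in closed form. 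Letting $m\to\infty$ and applying Lemma~\ref{lem:PoissonLaplace} with $z=e^{-t}$ to the ratio of ascending factorials, the integrand converges pointwise to $e^{-t}e^{(e^{-t}-1)x}\,x^{a-b-1}e^{-bx}$.

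The only step requiring care is the interchange of limit and integral, which is the analog of the bounded-convergence step in Theorem~\ref{theo:NegBin}, and I expect it to be the sole technical point rather than a genuine obstacle. Since $K_m\ge 1$ and $e^{-t}\le 1$ for $t\ge 0$, the conditional transform $(\alpha e^{-t})_m/(\alpha)_m=E(e^{-tK_m}\mid\alpha)$ is bounded by $1$, so the integrand is dominated by $\{b^{a-b}/\Gamma(a-b)\}\,x^{a-b-1}e^{-bx}$, which is an integrable (indeed probability) density; dominated convergence then applies. Evaluating the resulting gamma integral yields $e^{-t}\{b/(1+b-e^{-t})\}^{a-b}$, which is precisely the Laplace transform already identified in the proof of Theorem~\ref{theo:NegBin} as that of $1+\mathrm{Negbin}(a-b,b/(b+1))$, completing the argument. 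Notably, the gamma case is cleaner than the Stirling-gamma case, as it bypasses entirely the asymptotic analysis of the normalizing constant.
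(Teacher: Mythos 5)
Your proposal is correct and follows essentially the same route as the paper's own proof: the Laplace-transform identity from Lemma~\ref{lem:laplace}, the substitution $x=\alpha\log m$ with exact cancellation of the powers of $\log m$, the limit of the ratio of ascending factorials via Lemma~\ref{lem:PoissonLaplace}, and bounded convergence to pass the limit inside the integral. Your explicit dominated-convergence justification is slightly more detailed than the paper's (which simply invokes the argument of Theorem~\ref{theo:NegBin}), but the substance is identical.
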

\begin{proof}
The proof follows similarly to the one of Theorem 3 in the previous section, substituting again $\alpha = x/\log{m}$ in the integral:
\begin{align*}
    \lim_{n\to\infty} \mathds{E}(e^{-t K_m}) 
    &= \lim_{m\to\infty} \int_{\mathds{R}_+}\frac{(b\log{m})^{a-b}}{\Gamma(a-b)}\frac{(\alpha e^{-t})_m}{(\alpha)_m} \alpha^{a-b-1}e^{-\alpha b \log{m}} \text{d}\alpha \\
     &=\lim_{m\to\infty} \frac{b^{a-b}}{\Gamma(a-b)}\int_{\mathds{R}_+}\frac{(x e^{-t}/\log{m})_m}{(x/\log{m})_m} x^{a-b-1}e^{-bx} \text{d}x \\
     &= \frac{e^{-t}b^{a-b}}{\Gamma(a-b)}\int_{\mathds{R}_+} x^{a-b-1}e^{-(1+ b - e^{-t})x} \text{d}x = e^{-t}\left(\frac{b}{1+b - e^{-t}}\right)^{a-b},
\end{align*}
which is again the Laplace transform of $1 + \mathrm{Negbin}(a-b, b/(b + 1))$.
\end{proof}

\subsection{Proposition: tail of the Stirling-gamma distribution}

\begin{proposition}\label{pro:Heavytails}
The Stirling-gamma distribution $\alpha\sim\mathrm{Sg}(a,b,m)$  is heavy-tailed, namely
$$
\lim_{x \to\infty} e^{tx}\mathds{P}(\alpha > x) = \infty,\quad t> 0. 
$$
\end{proposition}
\begin{proof}
The proof relies on the Stirling approximation of the Gamma function applied to the ascending factorial. Following equation~\eqref{eq:Gamma_Stirling_approx}, we have
\begin{align*}
\frac{1}{(x)_m} = \frac{\Gamma(x)}{\Gamma(x + m)} \sim  \frac{e^{-x}x^{x - 1/2}}{e^{-x - m} (x + m)^{x + m - 1/2}} \sim \frac{1}{(x + m)^m}, \qquad x \to \infty,
\end{align*}
since $\lim_{x \to \infty}\{x/(x + m)\}^{x - 1/2} = e^{-m}$. This implies that 
\begin{equation*}
\frac{x^{a-1}}{\{(x)_m\}^b} \sim \frac{x^{a-1}}{(x + m)^{mb}} \sim \frac{1}{x^{mb-a+1}}, \qquad x \to \infty,
\end{equation*}
because $mb>a$ by definition. But then, we write
\begin{align*}
\lim_{x\to\infty} e^{tx}\mathds{P}(\alpha>x) &= \lim_{x\to\infty} \frac{1}{\mathcal{S}_{a, b, m}}\frac{\int_{x}^{\infty} \alpha^{a-1}\{(\alpha)_m\}^{-b}\mathrm{d}\alpha}{e^{-tx}} = \lim_{x\to\infty} \frac{1}{\mathcal{S}_{a, b, m}} \frac{x^{a-1}\{(x)_m\}^{-b}}{te^{-tx}} \\
&= \lim_{x\to\infty} \frac{1}{\mathcal{S}_{a, b, m}} \frac{e^{tx}}{tx^{mb-a+1}} = \infty,
\end{align*}
where the second equality follows by looking at the ratio of the derivatives with respect to $x$ using L'H\^{o}pital's rule. 
\end{proof}

\begin{corollary}\label{cor:Gammatails}
Let $\pi_{\mathrm{Sg}}(\alpha)$ denote the density of $\alpha \sim \mathrm{Sg}(a,b,m)$, and $\pi_{\mathrm{Ga}}(\alpha)$ the density of $\alpha \sim \mathrm{Ga}(a- b, b\log{m})$. The following limit holds:
$$
\lim_{\alpha \to \infty} \frac{\pi_{\mathrm{Sg}}(\alpha)}{\pi_{\mathrm{Ga}}(\alpha)} = \infty.
$$
Hence, a Stirling-gamma has a heavier right tail than the gamma distribution. 
\end{corollary}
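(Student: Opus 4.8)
The plan is to compute the ratio of the two densities explicitly and read off its asymptotics, exploiting the fact that the gamma density carries an exponential factor $e^{-b\alpha\log m}=m^{-b\alpha}$ in its tail while the Stirling-gamma density decays only polynomially. Writing out the densities from Definition~\ref{def:StirlingGammaPdf} and the parametrization $\mathrm{Ga}(a-b,b\log m)$, one finds
\begin{equation*}
\frac{\pi_{\mathrm{Sg}}(\alpha)}{\pi_{\mathrm{Ga}}(\alpha)} = \frac{\Gamma(a-b)}{\mathcal{S}_{a,b,m}\,(b\log m)^{a-b}}\cdot \frac{\alpha^{b}}{\{(\alpha)_m\}^{b}}\,e^{b\alpha\log m},
\end{equation*}
after cancelling the common power $\alpha^{a-b-1}$ and collecting all $\alpha$-free factors into a single constant $C=\Gamma(a-b)/\{\mathcal{S}_{a,b,m}(b\log m)^{a-b}\}$. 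Finiteness and positivity of $C$ follow from Proposition~\ref{pro:NormConst_finite}, which guarantees $0<\mathcal{S}_{a,b,m}<\infty$ under the constraint $1<a/b<m$.

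Next I would invoke the tail estimate already derived inside the proof of Proposition~\ref{pro:Heavytails}, namely $1/(\alpha)_m \sim \alpha^{-m}$ as $\alpha\to\infty$, a direct consequence of the Stirling approximation~\eqref{eq:Gamma_Stirling_approx}. This yields
\begin{equation*}
\frac{\alpha^{b}}{\{(\alpha)_m\}^{b}} \sim \alpha^{b(1-m)}, \qquad \alpha\to\infty,
\end{equation*}
so that the ratio behaves like $C\,\alpha^{b(1-m)}\,m^{b\alpha}$. Since $1<a/b<m$ forces $m\ge 2$, and hence $\log m>0$, the factor $m^{b\alpha}=e^{b\alpha\log m}$ grows exponentially while $\alpha^{b(1-m)}$ decays only polynomially because $b(1-m)<0$. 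The exponential term therefore dominates and the ratio diverges to $+\infty$.

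To make the conclusion rigorous rather than merely asymptotic, I would replace the $\sim$ by the elementary bound $(\alpha)_m\le(\alpha+m)^m$, giving
\begin{equation*}
\frac{\pi_{\mathrm{Sg}}(\alpha)}{\pi_{\mathrm{Ga}}(\alpha)} \ge C\,\frac{\alpha^{b}}{(\alpha+m)^{mb}}\,m^{b\alpha} = C\left(\frac{\alpha\,m^{\alpha}}{(\alpha+m)^{m}}\right)^{b},
\end{equation*}
whose base tends to $+\infty$ because the exponential $m^{\alpha}$ outpaces the polynomial $(\alpha+m)^m/\alpha$; raising to the positive power $b$ preserves divergence. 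There is no genuine obstacle here: the whole content is the contrast between the polynomial tail of the Stirling-gamma and the exponential tail of the gamma, and the only points demanding care are the finiteness of $\mathcal{S}_{a,b,m}$ and the strict positivity of $\log m$, both secured by $1<a/b<m$.
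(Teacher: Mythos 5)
Your proof is correct and follows essentially the same route as the paper: the same constant $C=\Gamma(a-b)/\{\mathcal{S}_{a,b,m}(b\log m)^{a-b}\}$, the same cancellation leaving $\alpha^{b}\{(\alpha)_m\}^{-b}e^{b\alpha\log m}$, and the same comparison of the exponential factor against the polynomial decay of $\{(\alpha)_m\}^{-b}$, which the paper handles via $(\alpha)_m \sim (\alpha+m)^m$ from its Proposition~\ref{pro:Heavytails}. Your extra step of replacing the asymptotic equivalence by the elementary bound $(\alpha)_m \le (\alpha+m)^m$ is a small tightening of rigor, not a different argument.
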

\begin{proof}
Let $\mathcal{C} = \Gamma(a - b)/\{\mathcal{S}_{a, b, m} (b\log{m})^{a-b}\}$ denote the ratio of the normalizing constants. Then, the limit of the ratio of the densities is equal to
\begin{align*}
\lim_{\alpha \to \infty} \frac{\pi_{\mathrm{Sg}}(\alpha)}{\pi_{\mathrm{Ga}}(\alpha)} = \lim_{\alpha \to \infty} \mathcal{C} \frac{\alpha^{a-1}\{(\alpha)_m\}^{-b}}{\alpha^{a-b-1}e^{-\alpha b \log{m}}} = \lim_{\alpha \to \infty} \mathcal{C} \frac{\alpha^b e^{\alpha b \log{m}}}{(\alpha + m)^{bm}} = \infty.
\end{align*}
This implies that the tail of the gamma distribution decays faster than that of the Stirling-gamma.
\end{proof}

\section{Proofs of the analytic expressions for the normalizing constants}\label{sec:proof_appendix}
\subsection{Prior closed form expressions}
We hereby show how the coefficients $\mathcal{S}_{a, b, m}$ introduced in Definition~\ref{def:StirlingGammaPdf} admit an explicit form. These depend on complete exponential Bell polynomials, which are defined as follows. Given the variables  $x_1, \ldots, x_s$ for $s\geq 1$, the $s\textrm{th}$ complete exponential Bell polynomial is 
\begin{equation}\label{eq:bell}
B_s(x_1, \ldots, x_s) = \sum_{(i_1, \ldots, i_s)\in I_s}\frac{s!}{i_1!i_2!\cdots i_s!} \left(\frac{x_1}{1!}\right)^{i_1}\left(\frac{x_2}{2!}\right)^{i_2}\cdots \left(\frac{x_s}{s!}\right)^{i_s},
\end{equation}
where $I_s$ is the set of all non-negative integers $\{i_1, \ldots, i_s\}$ that satisfy the equality constraint $i_1 + 2i_2+\ldots+ si_s = s$ \citepSupp{Charalambides_2005}. Recalling that $\mathscr{S}_{b, j}(x_1, \ldots, x_b)$ are defined in equation~\eqref{eq:Pfunction2}, the following holds for the prior normalizing constant. 

\begin{theorem}\label{theo:NormConst}
If $a, b \in \mathds{N}$, then 
\begin{equation*}\label{eq:NormConst}
\mathcal{S}_{a, b, m} =  \sum_{j=1}^{m-1}(-1)^{\bar{c}+bj}\frac{ j ^{\bar{c}}}{\{\Gamma(j) \Gamma(m-j)\}^b} \mathscr{S}_{b, j}(h_{j, 1}, \ldots, h_{j, b}),
\end{equation*}
where $\bar{c} = a - b - 1$, $\mathscr{S}_{b, j}$ is defined in equation~\eqref{eq:Pfunction2} and 
$$h_{j, s} = -(a-1)\frac{(s-1)!}{j^{s}} - b(s-1)!(H_{m-j-1,s} - H_{j,s}),$$
with $H_{j,s} = \sum_{i = 1}^j 1/i^s$ being the $j\textrm{th}$ generalized harmonic number of order $s$.
\end{theorem}
Notice that the expression above can be simplified when $b=1$, as in the following Corollary. 
\begin{corollary}\label{cor:NormConst_b1}
The normalizing constant when $\alpha \sim \mathrm{Sg}(a, 1, m)$ and $a\in \mathds{N}$ and $m \geq 3$ is 
$$
\mathcal{S}_{a, 1, m} = \sum_{j=1}^{m-1} (-1)^{a+j}\frac{j^{a-2}\log{j}}{\Gamma(j)\Gamma(m-j)}.
$$
\end{corollary}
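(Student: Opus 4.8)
The plan is to obtain the corollary as the $b=1$ specialization of Theorem~\ref{theo:NormConst}, exploiting the fact that almost all of the combinatorial machinery of the general formula collapses in this case. First I would substitute $b=1$ into the three ingredients of the general expression. The exponent becomes $\bar c = a-b-1 = a-2$, so $j^{\bar c} = j^{a-2}$ and the gamma factor is simply $\{\Gamma(j)\Gamma(m-j)\}^1 = \Gamma(j)\Gamma(m-j)$; moreover the prefactor sign is $(-1)^{\bar c + bj} = (-1)^{a-2+j} = (-1)^{a+j}$, since shifting the exponent by the even integer $-2$ does not change the sign.

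The decisive simplification is in the polynomial factor $\mathscr{S}_{b,j}$ of~\eqref{eq:Pfunction}. For $b=1$ the defining sum runs over $s=1$ only, and since $B_0 = 1$ and $0!=1$ it reduces to $\mathscr{S}_{1,j}(h_{j,1}) = \phi_1(j) = -\log j$. In particular the argument $h_{j,1}$, and hence the factorials and generalized harmonic numbers entering the general $h_{j,s}$, drops out entirely when $b=1$. Substituting $\mathscr{S}_{1,j} = -\log j$ together with the simplified prefactor into Theorem~\ref{theo:NormConst} and collecting the two sign contributions then yields the displayed sum; the one point requiring care here is the consistent bookkeeping of the signs carried by $(-1)^{\bar c + bj}$ and by $\phi_1(j)=-\log j$.

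As an independent check, and a self-contained alternative proof, I would compute $\mathcal{S}_{a,1,m} = \int_0^\infty \alpha^{a-1}/(\alpha)_m \dd\alpha$ directly by partial fractions. Writing $\alpha^{a-1}/(\alpha)_m = \sum_{i=0}^{m-1} c_i/(\alpha+i)$, the residue at $\alpha=-i$ is $c_i = (-1)^{a-1-i}\, i^{a-1}/\{i!\,(m-1-i)!\}$, and the $i=0$ term vanishes because properness forces $a\geq 2$. The constraint $a<m$ from Definition~\ref{def:StirlingGammaPdf} makes the numerator degree at most $m-2$, so the integrand is $O(\alpha^{-2})$ at infinity and the residues satisfy $\sum_i c_i = 0$. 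Consequently, although each $\int_0^T c_i/(\alpha+i)\dd\alpha = c_i\{\log(T+i)-\log i\}$ diverges individually, the $\log T$ contributions cancel in the sum, and letting $T\to\infty$ gives $\mathcal{S}_{a,1,m} = -\sum_{i=1}^{m-1} c_i \log i$. Rewriting $i^{a-1}/i! = i^{a-2}/\Gamma(i)$ and $(m-1-i)! = \Gamma(m-i)$, and absorbing the leftover sign through $-(-1)^{a-1-i} = (-1)^{a+i}$, reproduces the claimed expression after renaming $i$ to $j$.

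I expect the main obstacle to be not any deep step but the careful justification of the termwise integration in the partial-fraction route: the individual integrals diverge, and convergence of the whole hinges on the vanishing of the residue sum, which is itself exactly the properness condition $1<a<m$. With $a\in\mathds{N}$ this reads $a\geq 2$ and $m\geq 3$, which is precisely why the hypothesis $m\geq 3$ appears in the statement: it is the smallest reference sample size admitting a proper $\mathrm{Sg}(a,1,m)$. This direct computation also settles the sign unambiguously, providing the cross-check for the quicker specialization argument.
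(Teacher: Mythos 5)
Your first route is exactly the paper's own proof: the paper derives Corollary~\ref{cor:NormConst_b1} by setting $b=1$ in Theorem~\ref{theo:NormConst}, noting that only $A_{1,j}=\rho_j(-j)$ survives. But carried out faithfully, this specialization does \emph{not} produce the displayed formula. With $b=1$ the prefactor is $(-1)^{\bar{c}+j}=(-1)^{a+j}$ and $\mathscr{S}_{1,j}=\phi_1(j)=-\log j$, so their product is $(-1)^{a+j+1}\,j^{a-2}\log j/\{\Gamma(j)\Gamma(m-j)\}$ --- the \emph{negative} of the corollary. The discrepancy is not your arithmetic; it is a sign error in Theorem~\ref{theo:NormConst} as stated: in the paper's proof of that theorem, $\prod_{i=1}^{j-1}(i-j)$ has $j-1$ negative factors, so it equals $(-1)^{b(j-1)}\Gamma(j)^b$, not $(-1)^{bj}\Gamma(j)^b$ as claimed in equation~\eqref{eq:rho_minusj}; consequently the sign in Theorem~\ref{theo:NormConst} should read $(-1)^{\bar{c}+b(j+1)}$, which differs from the printed one by $(-1)^b$ and therefore flips exactly when $b$ is odd. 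A two-line check with $a=2$, $b=1$, $m=3$ settles it: $\mathcal{S}_{2,1,3}=\int_0^\infty \{(\alpha+1)(\alpha+2)\}^{-1}\mathrm{d}\alpha=\log 2$, which matches the corollary's sign and not the specialized theorem's. So your assertion that ``collecting the two sign contributions then yields the displayed sum'' is false as written; the bookkeeping you flagged as the one point requiring care is precisely where this route, and the paper's own proof, breaks.

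Your second route, by contrast, is correct, complete, and genuinely different from the paper's argument, and it is what actually proves the corollary. For $b=1$ all poles of $\alpha^{a-1}/(\alpha)_m$ are simple, so classical partial fractions with residues $c_i=(-1)^{a-1-i}\,i^{a-1}/\{i!\,(m-1-i)!\}$ suffice; no Fa\`a di Bruno formula or Bell polynomials are needed. Your treatment of the divergent logarithms is also the right one: integrate over $[0,T]$, use $\sum_i c_i=0$ (the residues of a rational function that is $O(\alpha^{-2})$ at infinity sum to zero, guaranteed here by $a<m$) to cancel the $\log T$ contributions, and let $T\to\infty$; the $i=0$ term vanishes because properness forces $a\ge 2$, and the rewriting $i^{a-1}/i!=i^{a-2}/\Gamma(i)$, $(m-1-i)!=\Gamma(m-i)$, $-(-1)^{a-1-i}=(-1)^{a+i}$ gives the stated formula with the correct sign. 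In short, what you presented as an ``independent check'' is the actual proof: had you carried both routes to the end, their contradiction would have exposed the sign error in Theorem~\ref{theo:NormConst} rather than cross-validated the specialization, and the resolution is in favor of your direct computation and of the corollary as printed.
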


\subsection{Proofs of Theorem~\ref{theo:NormConst}}
We break down the proof of Theorem~\ref{theo:NormConst} into three steps to ease readability. First, we prove that the quantity $\alpha^{a-1}/\{(\alpha)_m\}^{b}$ can be rewritten as a sum of partial fractions with Lemma~\ref{lem:PartialFraction}. Second, we illustrate how this decomposition is useful to evaluate the normalizing constant integral via Lemma~\ref{lem:PartialFractionIntegral}. Then, the proof of the statement follows using Faà di Bruno's formula.
\begin{lemma}\label{lem:PartialFraction}
Let $a$ and $b$ be integers. Then, we can write
\begin{equation}\label{eq:PartialFractions}
    \frac{\alpha^{a-1}}{\{(\alpha)_m\}^b} =
    \sum_{j=1}^{m-1}\sum_{s=1}^{b} \frac{A_{s,j}}{(\alpha+j)^s},
\end{equation}
where $A_{s,j} = \rho_j^{(b-s)}(-j)/(b-s)!$ and $\rho^{(d)}_j(\alpha)$ is the $d^{\mathrm{th}}$ derivative of the function $\rho_j(\alpha) = \alpha^{a-b-1}/\{(\alpha+1)_{j-1}(\alpha + j + 1)_{m-j-1}\}^b$.
\end{lemma}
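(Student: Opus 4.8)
The plan is to recognize the left-hand side as a proper rational function in $\alpha$ and then invoke the classical partial-fraction theorem for repeated linear factors. First I would extract the two arithmetic consequences of the hypotheses $a, b \in \mathds{N}$ and $1 < a/b < m$: integrality together with $a/b > 1$ forces $a \geq b+1$, while $a/b < m$ gives $a \leq mb - 1$. The first inequality is exactly what rules out a term at $j = 0$ in the claimed expansion, and the second is what guarantees properness, so both hypotheses enter in an essential way.

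Next I would simplify the ratio. Writing the ascending factorial as $(\alpha)_m = \alpha \prod_{j=1}^{m-1}(\alpha + j)$, we obtain $\{(\alpha)_m\}^b = \alpha^b \prod_{j=1}^{m-1}(\alpha+j)^b$, so that
\begin{equation*}
\frac{\alpha^{a-1}}{\{(\alpha)_m\}^b} = \frac{\alpha^{a-b-1}}{\prod_{j=1}^{m-1}(\alpha+j)^b}.
\end{equation*}
Since $a - b - 1 \geq 0$, the numerator is a genuine polynomial, $\alpha = 0$ is \emph{not} a pole, and the only poles sit at $\alpha = -1, \ldots, -(m-1)$, each of order $b$. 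Because $a - b - 1 < (m-1)b$ (equivalent to $a \leq mb - 1$), the function is proper, so its partial-fraction expansion carries no polynomial part and has precisely the shape displayed on the right-hand side of~\eqref{eq:PartialFractions}.

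It then remains to compute the coefficients. By the standard formula at a pole of order $b$, the coefficient of $(\alpha+j)^{-s}$ is
\begin{equation*}
A_{s,j} = \frac{1}{(b-s)!} \left. \frac{\mathrm{d}^{b-s}}{\mathrm{d}\alpha^{b-s}} \left[ (\alpha+j)^b \, \frac{\alpha^{a-b-1}}{\prod_{i=1}^{m-1}(\alpha+i)^b} \right] \right|_{\alpha = -j}.
\end{equation*}
The bracketed function is analytic at $\alpha = -j$ (and nonzero there, confirming the pole has exact order $b$), so the formula is legitimate. To identify it with $\rho_j$, I would use the definition of the ascending factorial to write $(\alpha+1)_{j-1} = \prod_{i=1}^{j-1}(\alpha+i)$ and $(\alpha+j+1)_{m-j-1} = \prod_{i=j+1}^{m-1}(\alpha+i)$, whose product is exactly $\prod_{i=1,\,i \neq j}^{m-1}(\alpha+i)$. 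Consequently $(\alpha+j)^b f(\alpha) = \rho_j(\alpha)$, which yields $A_{s,j} = \rho_j^{(b-s)}(-j)/(b-s)!$ as claimed.

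The computation is essentially routine once the setup is fixed, so I do not expect a serious obstacle. The only genuine care points are verifying that the origin contributes no pole (this is precisely where $a > b$, i.e.\ $a/b > 1$, is used), checking properness so that no polynomial term appears (where $a/b < m$ enters), and the bookkeeping identity relating the product over $i \neq j$ to the two ascending factorials defining $\rho_j$. Small cases such as $m = 2$ or $b = 1$ are handled with the usual convention that empty products equal $1$, and require no separate argument.
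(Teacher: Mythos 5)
Your proof is correct and takes essentially the same route as the paper's: cancel $\alpha^{b}$ to rewrite the kernel as $\alpha^{a-b-1}/\prod_{i=1}^{m-1}(\alpha+i)^{b}$, invoke the classical partial-fraction expansion at the repeated linear factors $\alpha=-1,\ldots,-(m-1)$, and identify $(\alpha+j)^{b}$ times the function with $\rho_j(\alpha)$ to get $A_{s,j}=\rho_j^{(b-s)}(-j)/(b-s)!$. If anything you are more careful than the paper (which simply cites the algorithm in Gradshteyn--Ryzhik), since you explicitly verify that $\alpha=0$ is not a pole and that the function is strictly proper; the sole nitpick is that strict properness is equivalent to $a\leq mb$ rather than $a\leq mb-1$, but the latter follows from $a/b<m$ and suffices for your argument.
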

\begin{proof}
From the definition of ascending factorial, we can write
$$\frac{\alpha^{a-1}}{\{(\alpha)_m\}^{b}} = \frac{\alpha^{a-b-1}}{\prod_{i=1}^{m-1}(\alpha + i)^b},$$
which is ratio of polynomials whose roots for the denominators are $-1, \ldots, -m + 1$. Following the algorithm of Section 2.102, page 66 in \citetSupp{gradshteyn2007}, we can rewrite the above as the sum of partial fractions in equation~\eqref{eq:PartialFractions}, where the coefficients $A_{s,j}$ of the expansion depend on the derivatives of the function
$$
\rho_j(\alpha) = \frac{\alpha^{a-b-1}}{\prod_{i=1}^{m-1}(\alpha + i)^b}(\alpha + j)^b, \qquad (j = 1, \ldots, m-1),
$$
evaluated at the solution $\alpha =-j$. In particular, we have $A_{s, j} = \rho_j^{(b-s)}(-j)/(b-s)!$ We calculate their exact values below, after noticing that
\begin{equation}\label{eq:rhoj}
\rho_j(\alpha)= \frac{\alpha^{a-b-1}}{\prod_{i=1}^{j-1}(\alpha + i)^b\prod_{i=j+1}^{m-1}(\alpha + i)^b} = \frac{\alpha^{a-b-1}}{\{(\alpha + 1)_{j-1}(\alpha+j+1)_{m-j-1}\}^b}.
\end{equation}
\end{proof}

\begin{lemma}\label{lem:PartialFractionIntegral}
The normalizing constant of the Stirling-gamma $\alpha \sim \mathrm{Sg}(a, b, m)$ where $a, b \in \mathds{N}$ can be expressed as
\begin{equation*}
    \mathcal{S}_{a, b, m} = \sum_{j=1}^{m-1}\sum_{s=1}^{b} A_{s,j} \phi_s(j), \qquad \phi_s(j) =
\begin{cases}
-\log{j}, &s = 1,\\
j^{1-s}/(s-1) &s=2, 3, \ldots \\
\end{cases}    
\end{equation*}

where $A_{s,j}$ are defined in Lemma~\ref{lem:PartialFraction}. 
\end{lemma}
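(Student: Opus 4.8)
The plan is to substitute the partial-fraction expansion established in Lemma~\ref{lem:PartialFraction} directly into the integral defining the normalizing constant and integrate term by term. Writing
$$
\mathcal{S}_{a,b,m} = \int_{\mathds{R}_+}\frac{\alpha^{a-1}}{\{(\alpha)_m\}^b}\,\mathrm{d}\alpha = \int_{\mathds{R}_+}\sum_{j=1}^{m-1}\sum_{s=1}^{b}\frac{A_{s,j}}{(\alpha+j)^s}\,\mathrm{d}\alpha,
$$
the double sum is finite, so the only genuine issue in exchanging summation and integration is that the individual $s=1$ summands fail to be integrable on $(0,\infty)$. I would therefore split the computation according to whether $s\geq 2$ or $s=1$.

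For $s\geq 2$ the antiderivative is elementary and each term converges: $\int_0^\infty (\alpha+j)^{-s}\,\mathrm{d}\alpha = j^{1-s}/(s-1) = \phi_s(j)$, matching the stated formula. The delicate part is the collection of $s=1$ terms, whose individual integrals diverge logarithmically. I would handle these as a single group by integrating up to a cutoff $R$ and then letting $R\to\infty$:
$$
\sum_{j=1}^{m-1} A_{1,j}\int_0^R\frac{\mathrm{d}\alpha}{\alpha+j} = \Big(\sum_{j=1}^{m-1}A_{1,j}\Big)\log R + \sum_{j=1}^{m-1}A_{1,j}\log\!\Big(1+\tfrac{j}{R}\Big) - \sum_{j=1}^{m-1}A_{1,j}\log j.
$$

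The crux of the argument is thus to show that $\sum_{j=1}^{m-1}A_{1,j}=0$, which annihilates the divergent $\log R$ term and, since $\log(1+j/R)\to 0$, leaves precisely $-\sum_j A_{1,j}\log j = \sum_j A_{1,j}\phi_1(j)$. I expect this vanishing-sum identity to be the main obstacle, but it follows cleanly from the decay of the integrand: multiplying the identity of Lemma~\ref{lem:PartialFraction} by $\alpha$ and letting $\alpha\to\infty$, the left-hand side $\alpha^{a}/\{(\alpha)_m\}^b\sim\alpha^{a-mb}$ tends to zero because $a/b<m$ forces $mb>a$, while on the right-hand side every $s\geq 2$ term vanishes and the $s=1$ terms contribute $\lim_{\alpha\to\infty}\sum_j A_{1,j}\,\alpha/(\alpha+j)=\sum_j A_{1,j}$. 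Equating the two limits gives $\sum_j A_{1,j}=0$. Combining the $s\geq 2$ contributions with this regularized $s=1$ contribution yields $\mathcal{S}_{a,b,m}=\sum_{j=1}^{m-1}\sum_{s=1}^{b}A_{s,j}\phi_s(j)$, as claimed.
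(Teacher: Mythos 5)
Your proposal is correct, and its overall skeleton (substitute the expansion of Lemma~\ref{lem:PartialFraction}, integrate the $s\geq 2$ terms directly, and isolate the logarithmically divergent $s=1$ terms, whose divergence is killed by the identity $\sum_{j=1}^{m-1}A_{1,j}=0$) matches the paper's. Where you genuinely diverge is in how that vanishing-sum identity is established. The paper proves it indirectly, by contradiction: if $\sum_j A_{1,j}\neq 0$ the sum of logarithms would diverge, forcing $|\mathcal{S}_{a,b,m}|=\infty$ and contradicting the finiteness result of Proposition~\ref{pro:NormConst_finite} (it also points to the literature for a direct argument). You instead prove it directly and self-containedly: multiplying the partial-fraction identity by $\alpha$ and letting $\alpha\to\infty$, the left side behaves as $\alpha^{a-mb}\to 0$ since $a<mb$, the $s\geq 2$ terms vanish, and the $s=1$ terms tend to $\sum_j A_{1,j}$, which must therefore be zero. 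This is the classical fact that the first-order partial-fraction coefficients of a rational function decaying faster than $1/\alpha$ sum to zero. Your route buys two things: it does not lean on Proposition~\ref{pro:NormConst_finite} at all (indeed, combined with your explicit cutoff-$R$ regularization, it independently re-derives convergence of the integral), and the bookkeeping of the divergent pieces via $\log(R+j)=\log R+\log(1+j/R)$ is more explicit than the paper's treatment. The paper's route is shorter on the page but transfers the real work to an external result; yours keeps the argument local to the decomposition itself.
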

\begin{proof} 
Recall that $\int_{\mathds{R}_+}1/(\alpha + j)^s\text{d}\alpha = 1/\{(s-1)j^{s-1}\}$ for $s > 1$, while $$\int_{\mathds{R}_+}\frac{1}{\alpha + j}\text{d}\alpha = \lim_{\alpha \to\infty}\log(\alpha + j) - \log{j}.$$ Define the functions
$$
\phi_1(j) = -\log{j}, \qquad \phi_s(j) = \frac{1}{(s-1)j^{s-1}}, \quad s = 2,3, \ldots
$$
From Lemma~\ref{lem:PartialFraction}, we have
\begin{align*}
\mathcal{S}_{a, b, m} = \int_{\mathds{R}_+} \frac{\alpha^{a-1}}{\{(\alpha)_{m}\}^b}\mathrm{d}\alpha = 
\int_{\mathds{R}_+} 
    \sum_{j=1}^{m-1}\sum_{s=1}^{b} \frac{A_{s,j}}{(\alpha+j)^s} \text{d}\alpha
    = \sum_{j=1}^{m-1}\sum_{s=1}^{b} A_{s,j} \phi_s(j)
\end{align*}

The last equality holds because $\sum_{j = 1}^{m-1} A_{1, j} = 0$, which makes the limit of the sum of logarithms necessarily equal to zero. This can be shown by contradiction. If $\sum_{j = 1}^{m-1} A_{1, j} \neq 0$, then necessarily $\left\vert \sum_{j = 1}^{m-1} A_{1, j} \lim_{\alpha \to\infty} \log{(\alpha + j)} \right\vert = \infty$, which implies that $|\mathcal{S}_{a,b,m}| = \infty$. However, this contradicts Proposition~\ref{pro:NormConst_finite}, which states that $0 < \mathcal{S}_{a,b,m} < \infty$ for appropriate choices of $a$, $b$ and $m$. Hence, the divergence of each logarithmic term is compensated by the alternating sum. For an alternative proof of why this happens, refer to \citetSupp{Zhu_Luo_2021} and references therein. 
\end{proof}

\begin{proof}[Proof of Theorem~\ref{theo:NormConst}]
Lemma~\ref{lem:PartialFraction} and Lemma~\ref{lem:PartialFractionIntegral} show that we can write the normalizing constant as a sum of logarithms. It remains to calculate the values for the coefficients $A_{s, j} = \rho_j^{(b-s)}(-j)/(b-s)!$ We start by rewriting equation~\eqref{eq:rhoj} as
$$
\rho_j(\alpha) = \exp\left[(a-b-1)\log{\alpha} - b\left\{\log(\alpha+1)_{j-1} + \log(\alpha+j+1)_{m-j-1} \right\} \right].
$$
Recalling that $\frac{\text{d}}{\text{d}x}\log(x)_n = \psi(x + n) -\psi(x)$ and that $\psi(x+1) = \psi(x) + 1/x$, we have that 
\begin{align*}
\rho_j'(\alpha) = \rho_j(\alpha)h_j(\alpha), \quad h_j(\alpha) = \frac{a-1}{\alpha} - b\{\psi(\alpha + m) -\psi(\alpha) - \psi(\alpha + j + 1) + \psi(\alpha + j)\}.
\end{align*}
Then, the $s^{\textrm{th}}$ derivative of $\rho_j(\alpha)$ can be expressed via Faà di Bruno's formula as
\begin{equation}\label{eq:Derivatives}
\rho_j^{(s)}(\alpha) = \rho_j(\alpha)B_s\{h_j(\alpha), h_j'(\alpha), \ldots, h_j^{(s-1)}(\alpha)\},
\end{equation}
where 
\begin{equation}\label{eq:hdj}
h_j^{(d)}(\alpha) = (-1)^{d}d!\frac{a-1}{\alpha^{d+1}} - b\{\psi^{(d)}(\alpha + m) -\psi^{(d)}(\alpha) - \psi^{(d)}(\alpha + j + 1) + \psi^{(d)}(\alpha + j)\} 
\end{equation}
for $d = 0, \ldots, s-1$, and
\begin{equation}\label{eq:Bell}
    B_s(x_1, \ldots, x_s) = \sum_{(j_1, \ldots, j_s) \in I_s} \frac{s!}{j_1!j_2!\cdots j_s!} \left(\frac{x_1}{1!}\right)^{j_1}\left(\frac{x_2}{2!}\right)^{j_2}\cdots \left(\frac{x_s}{s!}\right)^{j_s}
\end{equation}
is the complete exponential Bell polynomial of order $s$ and $I_s$ is the set of all nonnegative integers $(j_1, \ldots, j_s)$ that satisfy $j_1 + 2j_2+\ldots+ sj_s = s$. It remains to evaluate the function $\rho_j^{(s)}(\alpha)$ when $\alpha = -j$. We start by noticing that
\begin{align*}
\prod_{i=1}^{j-1}(-j + i)^b &= \{(1-j)(2-j)\cdots (-2)(-1)\}^b = (-1)^{bj}\Gamma(j)^b,\\
\prod_{i=j+1}^{m-1}(-j + i)^b &= \{1 \cdot 2 \cdot 3 \cdots (m-j -1)\}^b = \Gamma(m-j)^b.
\end{align*}
Plugging in the above in equation~\eqref{eq:rhoj}, we have
\begin{align}\label{eq:rho_minusj}
\rho_j(-j) = \frac{(-j)^{a-b-1}}{\prod_{i=1}^{j-1}(-j + i)^b\prod_{i=j+1}^{m-1}(-j + i)^b} = \frac{(-1)^{a-b(j+1) - 1} j ^{a-b-1}}{\Gamma(j)^b \Gamma(m-j)^b}.
\end{align}
Moreover, we have that the derivatives in equation~\eqref{eq:hdj} can be rewritten as
\begin{equation}
h_j^{(d)}(\alpha) = (-1)^{d}d!\left[\frac{a-1}{\alpha^{d+1}} - b\left\{\sum_{i = 0}^{j-1} \frac{1}{(\alpha + i)^{d+1}} + \sum_{i = j+1}^{m-1} \frac{1}{(\alpha + i)^{d+1}}\right\}\right].
\end{equation}
Calling $h_{j, d} = h_j^{(d)}(-j)$, we then have that 
\begin{equation}\label{eq:h}
\frac{h_{j, d+1}}{d!} = -\frac{(a-1)}{j^{d+1}} - b(H_{m-j-1, d+1} - H_{j, d+1}), 
\end{equation}
with $H_{j, s} = \sum_{i=1}^j 1/i^s$ the $j^{\text{th}}$ generalized harmonic number of order $s$. Plugging equations~\eqref{eq:rho_minusj} and \eqref{eq:h} into \eqref{eq:Derivatives} and recalling that $A_{s, j} = \rho_j^{(b-s)}(-j)/(b-s)!$, we write the partial fraction decomposition coefficients as 
$$
A_{s, j} = \frac{1}{(b-s)!}\frac{(-1)^{a-b(j+1) - 1} j ^{a-b-1}}{\{\Gamma(j) \Gamma(m-j)\}^b} B_{b-s}(h_{j, 1}, h_{j,2}, \ldots, h_{j,b-s}).
$$
Combining this expression with Lemma~\ref{lem:PartialFractionIntegral} yields the final form
\begin{align*}
\mathcal{S}_{a, b, m} &= \sum_{j=1}^{m-1}\sum_{s=1}^{b} A_{s,j} \phi_s(j) \\
&= \sum_{j=1}^{m-1}\sum_{s=1}^{b}\frac{(-1)^{a-b(j+1) - 1} j ^{a-b-1}}{\{\Gamma(j) \Gamma(m-j)\}^b} \frac{B_{b-s}(h_{j, 1}, h_{j,2}, \ldots, h_{j,b-s})}{{(b-s)!}}\phi_s(j).
\end{align*}
Rearranging the terms and collecting the Bell polynomials into the quantity
$$
\mathscr{S}_{b, j}(x_1, \ldots, x_b) =  \sum_{s=1}^{b} \frac{B_{b-s}(x_{1}, \ldots, x_{b-s})}{(b-s)!}\phi_s(j),
$$
yields the desired result.
\end{proof}
\subsection{Proof of Corollary~\ref{cor:NormConst_b1}}
\begin{proof}
Setting $b = 1$ in the above statement proves the statement. To see why, recall that $A_{s, j} = \rho_{j}^{(b-s)}(-j)/(b-s)!$ Since $b = 1$, we only have $A_{1, j} =  \rho_{j}(-j)$, whose general formula is provided in equation~\eqref{eq:rho_minusj}.
\end{proof}

\subsection{Proof of Theorem 2}
The proof of Theorem 2 follows the same reasoning as the one of Theorem A1. We discuss the highlights with the help of the following statements. 
\begin{lemma}\label{lem:PartialFractionPost}
Let $a, b, m, n, k \in \mathds{N}$ with $1<a/b<m$, and $1 \geq k \geq n$, and call $M = \min\{n, m\}$ and $\ell = |n - m|$. Then, we can write
\begin{equation}\label{eq:PartialFractionPost}
    \frac{\alpha^{a + k -1}}{\{(\alpha)_m\}^b(\alpha)_n} =
    \sum_{j=1}^{M-1}\sum_{s=1}^{b+1} \frac{T_{s,j}}{(\alpha+j)^s} + \sum_{i = 0}^{\ell - 1} \frac{U_{i}}{\alpha + M +i},
\end{equation}
where $T_{s,j} = \tau_j^{(b + 1-s)}(-j)/(b + 1-s)!$ and $\tau^{(d)}_j(\alpha)$ is the $d^{\mathrm{th}}$ derivative of the function 
\begin{equation}\label{eq:Tsj}
\tau_j(\alpha) = \frac{\alpha^{a+k-b-2}}{\{(\alpha+1)_{j-1}(\alpha + j + 1)_{M-j-1}\}^{b + 1}(\alpha+M)_\ell},
\end{equation}
$U_i = u_i(-M-i)$ and 
\begin{equation}\label{eq:ui}
u_i(\alpha) = \frac{\alpha^{a+k-b-2}}{\prod_{j = 1}^{M-1}(\alpha + j)^{b + 1}\prod_{t = 0}^{i - 1}(\alpha + M + t)\prod_{v = i + 1}^{\ell - 1}(\alpha + M + v)}
\end{equation}
\end{lemma}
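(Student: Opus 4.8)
The plan is to mirror the argument of Lemma~\ref{lem:PartialFraction}, viewing $\alpha^{a+k-1}/[\{(\alpha)_m\}^{b}(\alpha)_n]$ as a proper rational function and invoking the partial-fraction algorithm of Section~2.102 in \citet{gradshteyn2007}. First I would merge the two ascending factorials in the denominator by writing $(\alpha)_m=\prod_{r=0}^{m-1}(\alpha+r)$ and $(\alpha)_n=\prod_{r=0}^{n-1}(\alpha+r)$, so that, after cancelling the common factor $\alpha^{b+1}$ against the numerator,
\begin{equation*}
\frac{\alpha^{a+k-1}}{\{(\alpha)_m\}^{b}(\alpha)_n}
= \frac{\alpha^{a+k-b-2}}{\prod_{r=1}^{M-1}(\alpha+r)^{b+1}\prod_{r=M}^{M+\ell-1}(\alpha+r)}.
\end{equation*}
The exponent $a+k-b-2$ is nonnegative, since $a>b$ and $k\geq1$ force $a+k\geq b+2$ for integers, so the simplified expression has no pole at the origin and is a genuine proper rational function; this makes the decomposition in~\eqref{eq:PartialFractionPost} well posed, with no polynomial part and no term at $\alpha=0$.

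Next I would read off the pole structure. Each index $r$ with $1\leq r\leq M-1$ appears in both $\{(\alpha)_m\}^{b}$ and $(\alpha)_n$, producing a pole of order $b+1$ at $\alpha=-j$ for $j=1,\dots,M-1$; the remaining $\ell$ indices $r=M,\dots,M+\ell-1$ appear in only one of the two factorials and hence contribute simple poles at $\alpha=-(M+i)$ for $i=0,\dots,\ell-1$. This is exactly the two-block structure on the right-hand side of~\eqref{eq:PartialFractionPost}. Applying the cover-up rule of \citet{gradshteyn2007}, the coefficient of $(\alpha+j)^{-s}$ at the order-$(b+1)$ pole is $T_{s,j}=\tau_j^{(b+1-s)}(-j)/(b+1-s)!$, where $\tau_j(\alpha)$ is the original function multiplied by $(\alpha+j)^{b+1}$, while the residue at each simple pole is $U_i=u_i(-M-i)$, with $u_i(\alpha)$ the original function multiplied by $(\alpha+M+i)$.

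Finally I would verify that these cover-up functions collapse to the stated closed forms. For $\tau_j$ this amounts to deleting the single factor $(\alpha+j)^{b+1}$ from the denominator above and re-expressing the surviving product of consecutive linear terms as the ascending factorials $(\alpha+1)_{j-1}$ and $(\alpha+j+1)_{M-j-1}$, each raised to the power $b+1$, together with $(\alpha+M)_\ell$, which reproduces~\eqref{eq:Tsj}; for $u_i$ one deletes $(\alpha+M+i)$ and regroups the leftover linear factors into the two products displayed in~\eqref{eq:ui}. The only delicate point is the index bookkeeping when reassembling these ascending factorials from the ranges $1\leq r\leq M-1$ and $M\leq r\leq M+\ell-1$, precisely as in the passage to equation~\eqref{eq:rhoj} in Lemma~\ref{lem:PartialFraction}; the explicit evaluation of the derivatives $\tau_j^{(b+1-s)}(-j)$ via Fa\`a di Bruno's formula is then carried out in the proof of Theorem~A2, in direct analogy with the treatment of $\rho_j$ in Lemma~\ref{lem:PartialFraction}.
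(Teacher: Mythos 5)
Your proposal is correct and follows essentially the same route as the paper's own proof: rewrite the denominator as $\prod_{j=1}^{M-1}(\alpha+j)^{b+1}\prod_{i=0}^{\ell-1}(\alpha+M+i)$ after cancelling $\alpha^{b+1}$ against the numerator, invoke the partial-fraction algorithm of Section~2.102 of \citet{gradshteyn2007}, and identify $\tau_j$ and $u_i$ as the cover-up functions obtained by deleting the factors $(\alpha+j)^{b+1}$ and $(\alpha+M+i)$, respectively. One caveat, which your write-up shares with the paper's proof rather than introduces: the claim that the indices $r=M,\dots,M+\ell-1$ give \emph{simple} poles tacitly assumes those factors come from $(\alpha)_n$ alone (i.e.\ $m\leq n$); when $n<m$ and $b\geq 2$ they arise from $\{(\alpha)_m\}^{b}$ and thus have multiplicity $b$, so in that regime the displayed factorization, and the second block of the decomposition, would need to be adjusted.
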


\begin{proof}
The proof follows from the same reasoning discussed in Lemma~\ref{lem:PartialFraction}, which is a direct consequence of the algorithm of Section 2.102, page 66 in \citetSupp{gradshteyn2007}, after writing that
\begin{equation}\label{eq:fraction_rewrite}
\frac{\alpha^{a + k -1}}{\{(\alpha)_m\}^b(\alpha)_n} = \frac{\alpha^{a + k -1}}{\{(\alpha)_M\}^{b+1}(\alpha + M)_\ell} = \frac{\alpha^{a - b + k - 2}}{\prod_{j = 1}^{M-1}(\alpha + j)^{b + 1} \prod_{i = 0}^{\ell - 1}(\alpha + M + i)}.
\end{equation}
The quantities $\tau_j(\alpha)$ and $u_j(\alpha)$ are obtained by multiplying equation~\eqref{eq:fraction_rewrite} by $(\alpha + j)^{b+1}$ and $(\alpha + M + i)$, respectively, and simplifying appropriately.
\end{proof}

\begin{lemma}\label{lem:PartialFractionIntegral_post}
The coefficients $\mathscr{V}_{a, b, m}(n, k)$  when $a, b \in \mathds{N}$ are expressed as follows:
\begin{equation*}
    \mathscr{V}_{a, b, m}(n, k) = \sum_{j=1}^{M-1}\sum_{s=1}^{b+1} T_{s,j}\phi_s{(j)} - \sum_{i = 0}^{\ell - 1} U_{i}\log{(M + i)},
\end{equation*}
where $T_{s,j}$ and $U_i$ are defined in Lemma~\ref{lem:PartialFractionPost} and the coefficients $\phi_s(j)$ are defined in Lemma~\ref{lem:PartialFractionIntegral}. 
\end{lemma}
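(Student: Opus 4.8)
The plan is to integrate the partial fraction decomposition of Lemma~\ref{lem:PartialFractionPost} term by term over $\mathds{R}_+$, mirroring exactly the argument used in Lemma~\ref{lem:PartialFractionIntegral}. Recall that $\mathscr{V}_{a, b, m}(n, k) = \int_{\mathds{R}_+} \alpha^{a+k-1}/[\{(\alpha)_m\}^b(\alpha)_n]\,\mathrm{d}\alpha$, and that Lemma~\ref{lem:PartialFractionPost} writes the integrand as $\sum_{j=1}^{M-1}\sum_{s=1}^{b+1} T_{s,j}(\alpha+j)^{-s} + \sum_{i=0}^{\ell-1} U_i(\alpha+M+i)^{-1}$. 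The terms of order $s \geq 2$ integrate to the convergent values $\int_{\mathds{R}_+}(\alpha+j)^{-s}\,\mathrm{d}\alpha = \phi_s(j)$, and so contribute $\sum_{j=1}^{M-1}\sum_{s=2}^{b+1} T_{s,j}\phi_s(j)$ outright.

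First I would isolate the order-$s=1$ contributions, which are the only sources of divergence. Both the $T_{1,j}$ terms and \emph{all} of the $U_i$ terms are of this form, and each produces $\int_{\mathds{R}_+}(\alpha+c)^{-1}\,\mathrm{d}\alpha = \lim_{\alpha\to\infty}\log(\alpha+c) - \log c$, which diverges individually. Collecting these, and using $\log(\alpha+c) - \log\alpha \to 0$, the coefficient multiplying the common divergent limit $\lim_{\alpha\to\infty}\log\alpha$ is precisely $\sum_{j=1}^{M-1} T_{1,j} + \sum_{i=0}^{\ell-1} U_i$.

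The crux of the proof, and the step I expect to be the main obstacle, is to show that this coefficient vanishes so that the divergences cancel. I would argue by contradiction exactly as in Lemma~\ref{lem:PartialFractionIntegral}: if $\sum_{j=1}^{M-1} T_{1,j} + \sum_{i=0}^{\ell-1} U_i \neq 0$, the uncancelled logarithmic growth would force $|\mathscr{V}_{a, b, m}(n, k)| = \infty$. But $\mathscr{V}_{a, b, m}(n, k)$ is finite under the standing hypotheses $1 < a/b < m$ and $1 \leq k \leq n$, which can be verified by the same limit-comparison analysis as in Proposition~\ref{pro:NormConst_finite}: near $\alpha = 0$ the integrand behaves like $\alpha^{a+k-b-2}$ with $a+k-b-1 > 0$, while near $\alpha \to \infty$ it behaves like $\alpha^{a+k-mb-n-1}$ with $a+k-mb-n < 0$. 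This contradiction forces the coefficient of the divergence to be zero.

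Once the divergences cancel, only the finite logarithmic pieces survive: each $T_{1,j}$ term leaves $-T_{1,j}\log j = T_{1,j}\phi_1(j)$, and each $U_i$ term leaves $-U_i\log(M+i)$. Merging the $s=1$ contribution into the earlier $s \geq 2$ double sum yields $\sum_{j=1}^{M-1}\sum_{s=1}^{b+1} T_{s,j}\phi_s(j)$ together with $-\sum_{i=0}^{\ell-1} U_i\log(M+i)$, which is exactly the claimed expression. The remaining work is purely bookkeeping; the only genuinely non-routine point is the cancellation argument, which leans on the finiteness of $\mathscr{V}_{a, b, m}(n, k)$.
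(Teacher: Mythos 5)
Your proposal is correct and follows essentially the same route as the paper: term-by-term integration of the decomposition in Lemma~\ref{lem:PartialFractionPost}, with the divergent $s=1$ logarithms cancelling because $\sum_{j=1}^{M-1} T_{1,j} + \sum_{i=0}^{\ell-1} U_i = 0$, which is forced by the finiteness of $\mathscr{V}_{a,b,m}(n,k)$. Your explicit limit-comparison check of that finiteness (behavior $\alpha^{a+k-b-2}$ near zero and $\alpha^{a+k-mb-n-1}$ at infinity) is a welcome detail that the paper leaves implicit by appealing to finiteness ``by definition.''
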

\begin{proof}
Follow the same line of reasoning as the proof of Lemma~\ref{lem:PartialFractionIntegral}, since we have rational functions as integrands. In particular, we point out that the coefficients multiplying the logarithms are such that $\sum_{j = 1}^{M-1} T_{1, j} + \sum_{i = 0}^{\ell - 1} U_i = 0$. Again, this must happen because the diverging logarithms resulting from the integration must cancel each other out because $\mathscr{V}_{a, b, m}(n, k) < \infty$ by definition. 
\end{proof}

We are now ready to provide proof of the final statement. 
\begin{proof}[Proof of Theorem A2]
First of all, we provide a simpler expression for $U_i$. This is equal to $u_i(-M-i)$ in equation~\eqref{eq:ui}. In particular, the quantities at the denominator simplify as
\begin{align*}
\prod_{j = 1}^{M-1}(-M-i+j)^{b +1} &= \{(\!-M\!-i\!+1)(\!-M\!-i\!+2)\cdots(\!-i\!-1)\}^{b+1} = \{(i+1)_{M-1}\}^{b + 1} \\
\prod_{t = 0}^{i-1}(-M-i + t) &= (-i)(-i + 1)\cdots(-2)(-1) = 
(-1)^i\Gamma(i+1)\\
\prod_{v = i+1}^{\ell-1}(-M-i+ v) &= (1)(2)\cdots(\ell -1 + i) = \Gamma(\ell - i).
\end{align*}
This implies that 
$$
U_i = (-1)^{a-b+k-2+i}\frac{(M+i)^{a-b+k-2}}{\{(i+1)_{M-1}\}^{b + 1}\Gamma(i+1)\Gamma(\ell - i)},
$$
which is the generic coefficient in the second sum of logarithms. As for $T_{s,j}$, we rely on a similar argument as the proof of Theorem A1, which depends on Faà di Bruno's formula. Rewriting $\tau_j(\alpha)$ as 
\begin{align*}
\tau_j(\alpha) &= \exp\{\log{\tau_j(\alpha)}\} \\
&= \exp[(a+k-b-2)\log{\alpha} - (b+1)\{\log(\alpha + 1)_{j-1} + \log{(\alpha + j + 1)_{M-j-1}}\} \\
&\qquad - \log{(\alpha + M)_\ell}],
\end{align*}
we have that $\tau'_{j}(\alpha) = \tau_j(\alpha)g_j(\alpha)$ with 
\begin{align*}
g_j(\alpha) &= \frac{a+k -1}{\alpha} - (b+1)\{\psi(\alpha+j) - \psi(\alpha) + \psi(\alpha + M) - \psi(\alpha + j +1)\} \\
&\qquad- \psi(\alpha + M + \ell) + \psi(\alpha + M)
\end{align*}
whose $d$th derivative is equal to
\begin{align*}
g^{(d)}_j(\alpha) & = (-1)^dd!\frac{a+k -1}{\alpha^{d+1}} - (b+1)\{\psi^{(d)}(\alpha+j) - \psi^{(d)}(\alpha) + \psi^{(d)}(\alpha + M) \\
&\qquad - \psi^{(d)}(\alpha + j +1)\} - \psi^{(d)}(\alpha + M + \ell) + \psi^{(d)}(\alpha + M) \\
&=  (-1)^dd!\!\left[\frac{a\!+\!k\!-\!1}{\alpha^{d+1}}\! - \!(b\!+\!1)\!\left\{\sum_{i = 0}^{j-1}\frac{1}{(\alpha\! +\! i)^{d+1}} \!+\!\sum_{i = j+1}^{M-1}\frac{1}{(\alpha\! +\! i)^{d+1}}\right\}\! +\! \sum_{i = 0}^{\ell-1}\frac{1}{\alpha\!+ \!M\! +\! i}\right].
\end{align*}
Using Faà di Bruno's formula and recalling the complete exponential Bell polynomial in equation~\eqref{eq:Bell}, we obtain that 
\begin{equation}\label{eq:tj_bell}
\tau_{j}^{(d)}(\alpha) = \tau_{j}(\alpha)B_d\left(g_j(\alpha),g'_j(\alpha), \ldots, g^{(d-1)}_j(\alpha)\right).
\end{equation}
We are finally ready to calculate $T_{s, j} = \tau_j^{(b+1-s)}(-j)/(b+1-s)!$. Calling $g_{j, d+1} = g_j^{(d)}(-j)$, we can write that
\begin{equation}\label{eq:gjd}
\frac{g_{j, d+1}}{d!} = -\frac{a+k-1}{j^{d+1}} + b H_{M-j-1, d+1} - (b+1)H_{j, d+1} + H_{M-j+\ell-1, d+1},
\end{equation}
since $\sum_{i=0}^{\ell-1}1/(M+i-j)^{d+1} =  H_{M-j+\ell-1, d+1} -H_{M-j-1, d+1}$, and $H_{j,s} = \sum_{i = 1}^{j}1/i^s$ is again the $j$th generalized harmonic number of order $s$. With similar calculations as the one for equation~\eqref{eq:rho_minusj}, we also have that
\begin{equation}\label{eq:tau_minusj}
\begin{split}
    \tau_j(-j) &= \frac{(-j)^{a-b + k-2}}{\{\prod_{i=1}^{j-1}(-j + i)\prod_{i=j+1}^{m-1}(-j + i)\}^{b+1}(M-j)_\ell} \\
    &= (-1)^{a-b + k -2 + b(j+1)} \frac{j ^{a-b + k -2}}{\{\Gamma(j) \Gamma(m-j)\}^{b+1}(M-j)_\ell}.
\end{split} 
\end{equation}
Plugging equations~\eqref{eq:tau_minusj}, \eqref{eq:gjd} and \eqref{eq:tj_bell} into the formula for $T_{s, j}$ yiels
$$
T_{s, j} = \frac{(-1)^{a-b + k -2 + b(j+1)} j ^{a-b + k -2}}{\{\Gamma(j) \Gamma(m-j)\}^{b+1}(M-j)_\ell} \frac{B_{b+1-s}(g_{j, 1}, g_{j, 2}, \ldots, g_{j,b+1-s})}{(b+1-s)!}.
$$
The rest of the proof follows by regrouping the coefficients in a similar manner as the one in the proof for Theorem A1. 
\end{proof}

\section{Further clustering implications}\label{sec:FurtherResults}

\subsection{Relationship with clustering consistency results of \citet{Ascolani_2022}}
In this Section, we discuss the implications of adopting the Stirling-gamma prior for $\alpha$ in terms of the consistency of the DPM mixture model. Within this setting, \citetSupp{Ascolani_2022} proposes a class of priors that, under certain strict requirements on the data and on the mixture kernels, lead to the desired consistency result. In particular, they impose the following assumptions on $\pi(\alpha)$:
\begin{enumerate}[label=A\arabic*]
\item continuous density with respect to the Lebesgue measure;
\item polynomial behavior around the origin: There exist $\epsilon, \delta, \beta$ such that, for all $\alpha \in (0, \epsilon)$, it holds that $\alpha^\beta/\delta \leq \pi(\alpha) \leq \delta \alpha^\beta$;
\item subfactorial moments: there exist $D, \nu>0$ such that  $\mathds{E}(\alpha^s) < D\rho^{-s}\Gamma(\nu + s +1)$ for every $s\geq1$.
\end{enumerate}
The Stirling-gamma distribution $\alpha \sim \mathrm{Sg}(a, b, m)$ with a fixed reference sample size $m$ satisfies assumptions A1 and A2, but not A3. In particular, A1 trivially holds since the Stirling-gamma is a continuous random variable. To see that A2 is met, pick $\epsilon = 1$ and recall that $(\alpha)_m = \alpha \prod_{j = 1}^{m-1}(\alpha + j)$. Then, for $\alpha \in (0, 1)$, we have, 
$$
\frac{1}{\mathcal{S}_{a, b, m}}\frac{\alpha^{a-b-1}}{\Gamma(m+1)^b} \leq \frac{1}{\mathcal{S}_{a, b, m}}\frac{\alpha^{a-b-1}}{\prod_{j = 1}^{m-1} (\alpha + j)^b} \leq \frac{1}{\mathcal{S}_{a, b, m}}\frac{\alpha^{a-b-1}}{\Gamma(m)^b},
$$
where the quantity in the center is the density of the Stirling-gamma. We can then set $\beta = a-b-1$, and pick any $\delta > \mathrm{max}\{\mathcal{S}_{a, b, m}\Gamma(m+1)^b, 1/[\mathcal{S}_{a, b, m}\Gamma(m)^b]\}$ to meet the requirements of A2. Finally, recall that from Proposition~\ref{pro:moments} in the main paper, we have that $\mathds{E}(\alpha ^s) = \infty$ whenever $s > ma-b$. This means that A3 does not hold for every 
$s\geq 1$, preventing us from directly applying the results in Theorems 1 to 3 in \citetSupp{Ascolani_2022}

\subsection{Applications of the Stirling-gamma to other partitions models}
We now discuss two additional contexts from the Bayesian nonparametric literature where the Stirling-gamma can be potentially useful as a prior for some hyperparameters.

A first simple extension dwells in the \emph{extended stochastic block model} framework of \citetSupp{Legramanti2022}. In particular, they leverage upon the covariate-dependent product partitions models in  \citetSupp{Park_Dunson_2010} and \citetSupp{Muller_quintana_rosner_2011} and consider an extended version of the exchangeable partition probability function of the Dirichlet process in equation~\eqref{eq:dp_eppf}, namely
\begin{equation}\label{eq:esbm_DP}
\mathds{P}(\Pi_n = \{C_1,\ldots, C_k\}\mid \mathbf{X}, \alpha) = \frac{\alpha^k}{(\alpha)_n} \prod_{j = 1}^k g(\mathbf{X}_k)(n_j - 1)!,
\end{equation}
where $g(X)$ is a \textit{cohesion function} that depends on cluster-specific attributes $\mathbf{X}_k$. It is straightforward to see that equation~\eqref{eq:esbm_DP} depends on $\alpha$ only through the size of the partition $k$. Then, from trivial calculations based on Proposition~\ref{pro:conjugacy}, imposing $\alpha\sim\mathrm{Sg}(a,b,n)$ as a prior over the precision still yields $(\alpha\mid \Pi_n, \mathbf{X}) \sim \mathrm{Sg}(a+k,b+1,n)$. This implies that the Stirling-gamma can be a suitable prior for preventing over-clustering in Dirichlet process-based extended stochastic block models; see \citetSupp{Legramanti2022}.

A second useful example lies in the \emph{generalized mixtures of finite mixtures} introduced by \citetSupp{FruFru_2021}. These are mixture models where the prior over the mixing weights depends on the total number of components as well. The implied exchangeable partition function, which is not of the Gibbs-type class, is presented in their Theorem 2.2, and is equal to
\begin{equation*}
\mathds{P}(\Pi_n = \{C_1,\ldots, C_{K_+}\}\mid \alpha) = \frac{\alpha^{K_+}}{(\alpha)_n} \prod_{j=1}^{K_+}\Gamma(n_j) \times\sum_{k = K_+}^{\infty} \pi(k) \prod_{j=1}^{K_+}\frac{\Gamma(n_j + \alpha/k)(k-j+1)}{\Gamma(1+\alpha/k) \Gamma(n_j) k}, \end{equation*}
where $K_+$ indicates the total clusters, while $\pi(k)$ is a prior over the number of mixture components. It is easy to see that when $\pi(k) = \delta_\infty$, then the model above reduces to the partition of the Dirichlet process in equation~\eqref{eq:dp_eppf}. However, since $\alpha$ enters the infinite sum, the Stirling-gamma is only \emph{quasi-conjugate}.  Nevertheless, one can still perform inference under their framework via Metropolis-Hastings moves. Interestingly, \citetSupp{FruFru_2021} specify an F-distribution, which is again a heavy-tailed prior, for $\alpha$, indicating this as a better choice than the gamma of \citetSupp{Escobar_West_1995}. Hence, we can seamlessly use a prior $\alpha\sim \mathrm{Sg}(a, b, m)$ with small $m$ in place of the proposed F-distribution and still retain similar interpretability as in the Dirichlet process case when $K$ is sufficiently large while maintaining the desirable heavy-tailedness mentioned in their Section 4.3.

\section{Simulations}\label{sec:simul}

\subsection{Details of the simulation in Figure~\ref{fig:figure1}}
The data in the left panel of Figure~\ref{fig:figure1} consists of $n = 800$ observations generated independently from a mixture of four equally weighted bivariate normal distributions in the same way described in Section~\ref{subsec:Gamma_vs_Sg}. Four different scenarios are considered with respect to $\alpha$: ``Fixed, low'' sets $\alpha = 1$, ``Random, high'' sets $\alpha = 5$, ``Random, low'' lets $\alpha \sim \mathrm{Sg}(0.73, 0.1)$ and ``Random, high'' lets $\alpha \sim \mathrm{Sg}(2.6, 0.1)$. The induced distribution on $K_n$ has mean $\mathds{E}(K_n) = 7.26$ in low cases and $\mathds{E}(K_n) = 26$ in high ones. Inference on the number of clusters $K_n$ in each scenario is performed by running a marginal Gibbs sampler as in Algorithm 3 in \citetSupp{Neal_2000} for 20,000 iterations, discarding the first 5,000 as burn-in. 

\subsection{Pooling $\alpha$ across repeated networks}
\begin{figure}[tb]
\centering
\includegraphics[width=\linewidth]{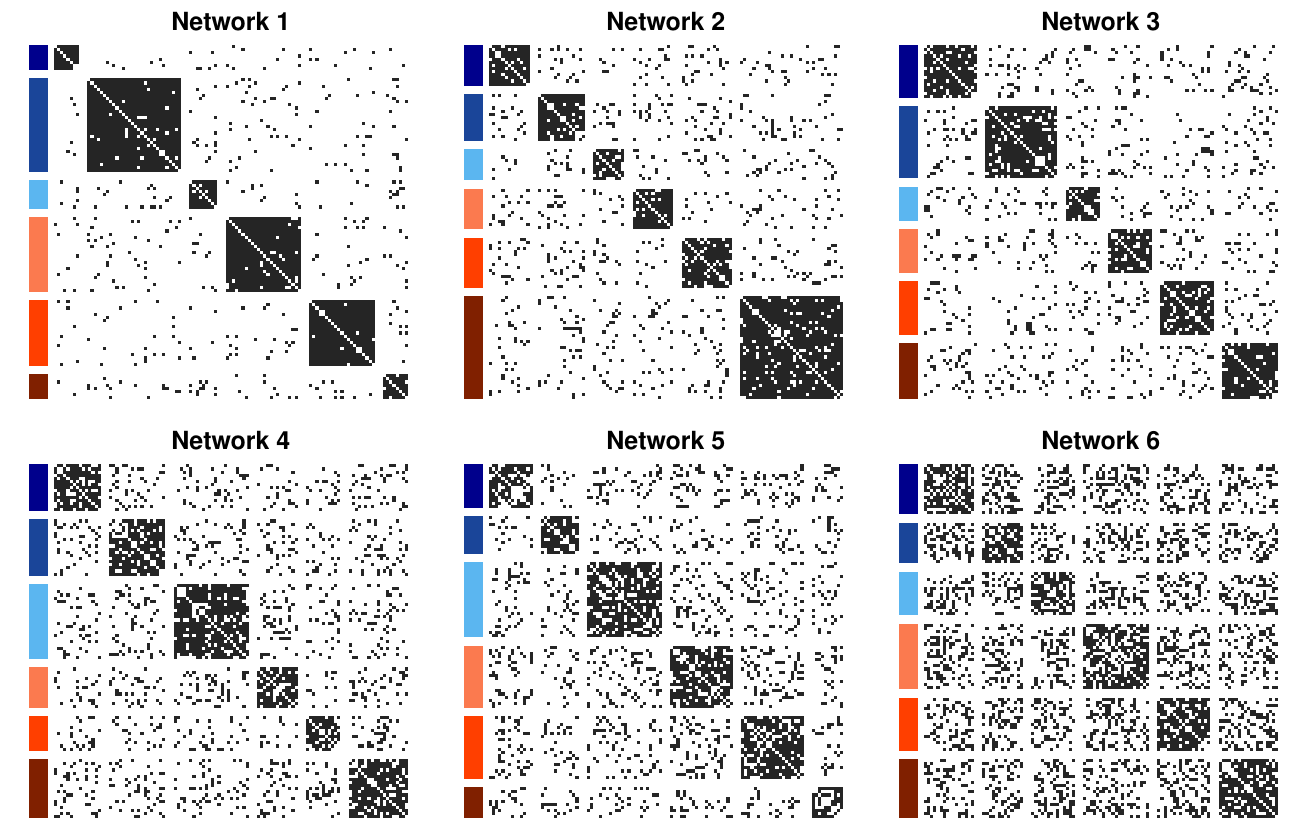}
\caption{Simulated networks of size $n=100$ nodes. Columns and rows represent nodes of each network, and black dots indicate the edges. Nodes are sorted according to the true cluster assignment, highlighted by the different colors on the left of each plot.}
\label{fig:Networks} 
\end{figure}

In this Section, we present a simulation study within the same \emph{population of partition} framework introduced in Section~\ref{sec:inference} of the main paper. In particular, our goal is to show how the conjugate Stirling-gamma prior can lead to borrowing of information when inferring the latent partition across multiple networks, thus reducing uncertainty. Consider the same stochastic block model setting of Section 4, namely
\begin{equation}\label{eq:sbm_suppl}
\mathds{P}(X_{i, j,s} = 1 \mid Z_{i, s} = h, Z_{j,s} = h', \nu) = \nu_{h, h',s}, \quad \nu_{h, h',s} \sim \mathrm{Be}(1, 1),
\end{equation}
where $X_{i, j,s}$  is a binary random variable
indicating an edge between nodes $i$ and $j$
in networks $s = 1, \ldots N$. The variables $Z_{i, s}$ denote cluster assignment, with $Z_{i, s} = h$ if and only if $i \in C_{h, s}$ in network $s$, and $\nu_{h, h',s}$ denotes the edge probabilities in the block identified by clusters $C_{h, s}$ and $C_{h',s}$. We model the latent partition in each network independently as follows:
$$
\mathds{P}(\Pi_{n, s} = \{C_{1, s}, \ldots, C_{k_s, s}\} \mid \alpha_s) = \frac{\alpha_s^{k_s}}{(\alpha_s)_n}\prod_{j = 1}^{k_s} (n_{j, s} - 1)!  \quad (s = 1, \ldots, N),
$$
where $\alpha_s$ is the precision parameter specific to partition $\alpha_s$.

Within this framework, we are interested in investigating the impact of different choices of precision parameters $\alpha_1,\ldots, \alpha_N$ on the inferred latent partition. We consider three priors: $\alpha_s$ is fixed and equal across networks, $\alpha_s$ is random with $\alpha_s \sim \mathrm{Sg}(a, b, n)$ separately for each network, and the precision is pooled across networks, namely $\alpha_1 = \ldots = \alpha_N = \alpha \sim \mathrm{Sg}(a, b, n)$. In the third case, the shared $\alpha$ induces borrowing of information since the number of clusters $k_s$ in every network contributes to the posterior distribution in Theorem 4. 

We simulate $N = 6$ networks of $n=100$ nodes from the stochastic block model in equation~\eqref{eq:sbm_suppl}. The true partition is generated by randomly dividing the nodes between six clusters with assignment probabilities drawn from a Dirichlet distribution $\mathrm{Dir}(10,10,10,10,10,10)$. Binary edges are independently simulated with probabilities $(\nu_{h, h,1}, \ldots, \nu_{h, h,N})= (0.95, 0.90, 0.85, 0.80, 0.75, 0.70)$ for nodes within the same cluster, and $(\nu_{h, h',1}, \ldots, \nu_{h, h,N})= (0.05, 0.10, 0.10, 0.15, 0.15, 0.30)$ for any $h \neq h'$. This allows each network to have a different block structure with decreasing signal-to-noise ratios. As such, we expect to infer the true communities in Networks 5 and 6 with a higher uncertainty than in Networks 1 and 2. Figure~\ref{fig:Networks} displays the six generated datasets. Black points indicate an edge between each pair of nodes. Rows and columns have been sorted according to the true cluster assignment for better visualization.
We set $\alpha_s = 7.5$ in the fixed case and $a = 6$ and $b = 0.3$ in random and pooled cases, so that $\mathds{E}(K_n) = 20$ in all priors. Inference is performed by running a collapsed Gibbs sampler as in \citetSupp{Legramanti2022} for 10,000 iterations, treating the first 2,000 as burn-in. The full conditional for $\alpha_s$ in the random case and for $\alpha$ in the pooled case are reported in Proposition~4 and Theorem~4 in the main manuscript, respectively.

Figure~\ref{fig:SimKn} displays the posterior distribution of the number of detected clusters $K_n$ in each dataset for the three choices of precision parameter. Except for Network 6, the posterior mode of $K_n$ coincides with the truth in each model. However, the pooled case shows lower uncertainty than the random one, thanks to the borrowing of information granted by the common $\alpha$. In the fixed cases, instead, $K_n$ explodes as the signal-to-noise ratio decreases. This is particularly evident in Network 6, which confirms the lack of robustness of Dirichlet process mixtures with fixed $\alpha$. To further highlight these differences, we calculate the average adjusted Rand index for the posterior partition retrieved by the three models with respect to the truth. This equals $0.943$ for the pooled case, $0.940$ for the random, and $0.929$ for the fixed, indicating that pooling $\alpha$ yields a better estimate across networks. 

\begin{figure}[tb]
\centering
\includegraphics[width = \linewidth]{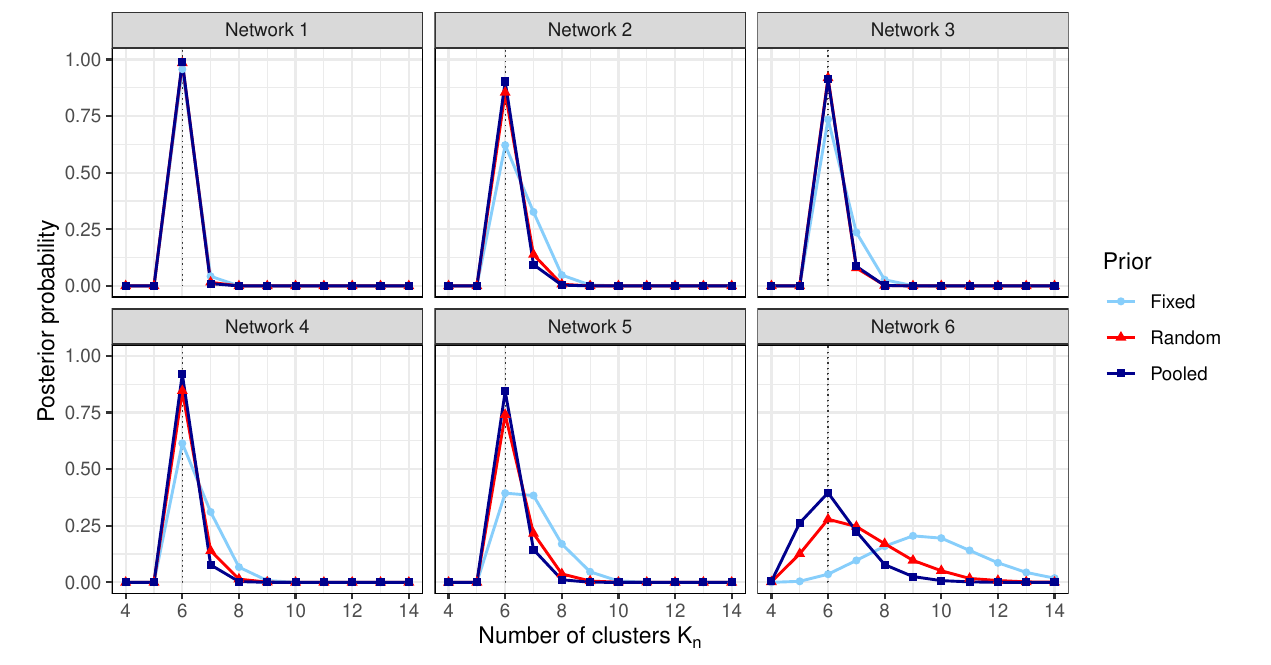}
\caption{Posterior distribution of the number of clusters $K_n$ detected in each simulated network in the three cases: $\alpha_s = 7.5$ (light blue), $\alpha_s \sim \mathrm{Sg}(6, 0.3, 100)$ independently in each network  (red), and $\alpha_s = \alpha\sim \mathrm{Sg}(6, 0.3, 100)$ (blue). The dotted vertical line highlights the true number of communities.}
\label{fig:SimKn}
\end{figure}

\subsection{Effective sample sizes in the standard normal Dirichlet process mixture simulation}
We now briefly report on the effective sample sizes for the simulation in Section~\ref{subsec:normalDPM}. These are illustrated in Figure~\ref{fig:ESS}, where each boxplot displays the results of the 40 replicates for each prior and each $n$. Values are calculated using the 2000 post-burn-in samples drawn using Algorithm 3 of \citetSupp{Neal_2000}. We notice the following. First, values appear similar across $n$ in all three priors, indicating that the sample size does not influence sampling efficiency in this setting. Second, the weakly informative prior  $\mathrm{Sg}(1, 0.25, n)$ shows consistently higher effective sample sizes than the two alternatives, while the more informative $\mathrm{Sg}(4, 1, n)$ attains the worst performance. This is at odds with the results shown in Table~\ref{tab:effectiveSize}  for the simulation in Section~\ref{subsec:Gamma_vs_Sg}, where this prior achieves the best performance. However, these differences are likely due to model- and setting-specific variabilities. Finally, we detect again minor computational differences between the gamma and the Stirling-gamma.

\begin{figure}
    \centering
    \includegraphics[width = \linewidth]{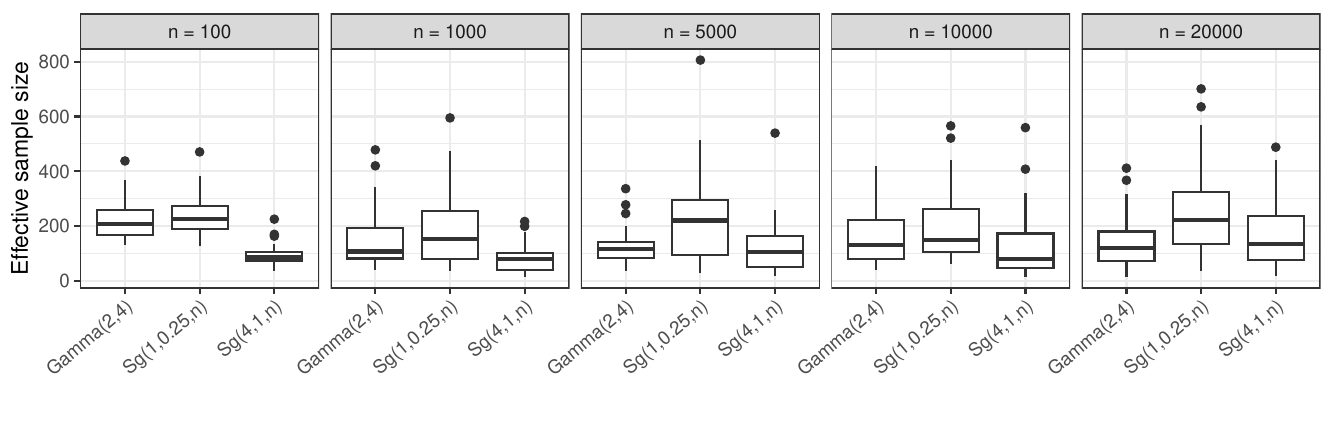}
    \caption{Effective sample sizes for the posterior of $\alpha$ in the standard normal simulation in Section~\ref{subsec:normalDPM}, across values of $n$ and choices of priors. Values are calculated using the 2000 post-burn-in samples in each of the 40 replicates per scenario.}
    \label{fig:ESS}
\end{figure}

\bibliographystyleSupp{chicago} 
\bibliographySupp{references} 

\end{document}